\newaliascnt{claim}{dummy}
\newtheorem{claim}[claim]{Claim}
\theoremstyle{definition}
\theoremstyle{definition}
\newtheorem*{support*}{Support}
\newaliascnt{proposition}{theorem}
\newtheorem{proposition}[proposition]{Proposition}
\newaliascnt{remark}{theorem}
\newtheorem{remark}[remark]{Remark}
\newcommand{\R}{\mathds{R}}
\newcommand{\N}{\mathds{N}}
\newcommand{\Z}{\mathds{Z}}
\DeclareMathOperator{\Var}{Var}
\DeclareMathOperator{\Cov}{Cov}
\DeclareMathOperator{\PP}{\mathds{P}}
\DeclareMathOperator{\E}{\mathds{E}}
\newcommand{\delims}[4]{\mathopen#1#2#4\mathclose#1#3}
\newcommand{\abs}[2][]{\delims{#1}\lvert\rvert{#2}}		
\newcommand{\norm}[2][]{\delims{#1}\lVert\rVert{#2}}	
\newcommand{\floor}[2][]{\delims{#1}\lfloor\rfloor{#2}}	
\newcommand{\ceil}[2][]{\delims{#1}\lceil\rceil{#2}}	
\newcommand{\diff}{\,\mathrm{d}}
\newcommand{\dx}{\,\mathrm{d}x}
\begin{document}


\title{A roundabout model with on-ramp queues: exact results and scaling 
approximations}


\author{P.J. Storm, S. Bhulai, W. Kager}
\email[\{p.j.storm, s.bhulai, w.kager\}]{@vu.nl}
\affiliation{Vrije Universiteit, Amsterdam}
\author{M. Mandjes}
\affiliation{University of Amsterdam}
\email{m.r.h.mandjes@uva.nl}


\date{\today}

\begin{abstract}
	This paper introduces a general model of a single-lane roundabout, 
	represented as a circular lattice that consists of $L$~cells, with 
	Markovian traffic dynamics. Vehicles enter the roundabout via on-ramp 
	queues that have stochastic arrival processes, remain on the roundabout a 
	random number of cells, and depart via off-ramps. Importantly, the model 
	does not oversimplify the dynamics of traffic on roundabouts, while 
	various performance-related quantities (such as delay and queue length) 
	allow an analytical characterization. In particular, we present an 
	explicit expression for the marginal stationary distribution of each cell 
	on the lattice. Moreover, we derive results that give insight on the 
	dependencies between parts of the roundabout, and on the queue 
	distribution. Finally, we find scaling limits that allow, for every 
	partition of the roundabout in segments, to approximate 1)~the joint 
	distribution of the occupation of these segments by a multivariate 
	Gaussian distribution; and 2)~the joint distribution of their total queue 
	lengths by a collection of independent Poisson random variables. To verify 
	the scaling limit statements, we develop a novel way to empirically assess 
	convergence in distribution of random variables.
\end{abstract}

\pacs{}

\maketitle


\section{Introduction}

Over the past decades, a broad class of models has been proposed  to better 
understand and control traffic streams in road traffic networks. This has led 
to mathematical models that help shed light on the properties of the 
underlying traffic dynamics. In particular, these models allow for studying 
the influence of the model's parameters, which in turn allows for developing 
effective design and control rules. For reviews on traffic flow theory, see, 
e.g.,~{\cite{maerivoet2005traffic,van2015genealogy}}, and for cellular 
automata models used in this area, see~\cite{maerivoet2005cellular}. In the 
literature on traffic flows, most mathematical analyses are done for road 
segments and several forms of intersection traffic control, i.e., signalized 
intersections and unsignalized intersections with or without priorities.

Roundabouts are a type of intersection that is notoriously hard to analyze 
mathematically. Fouladvand et al.~\cite{fouladvand2004characteristics} studies 
the delay experienced by traffic on roundabouts in relation to their geometry 
by simulating a stochastic cellular automata model. Wang and 
Ruskin~\cite{wang_ruskin}, Wang and Liu~\cite{wang_liu}, and Belz et 
al.~\cite{belz2016influence} study the capacity of cellular automata 
roundabout models incorporating the  traffic behavior of individual cars in a 
more sophisticated manner. In these models, the analysis focuses on the 
relationship between the circulating flow, and the capacity of an entry road 
at the roundabout. The conclusions are primarily based on simulation results, 
and hence do not provide explicit insight into, e.g., the way the system 
parameters affect the capacity or delay.

In addition, there are a number of analytical papers studying the relationship 
between circulating flow and capacity at an entry road. For example, Flannery 
et al.~\cite{flannery2000estimating, flannery2005queuing} have obtained an 
analytical approximation of this relationship based on earlier work for 
unsignalized intersections by Tanner~\cite{tanner1962theoretical} and 
Heidemann and Wegmann~\cite{heidemann1997queueing}. For these results, vehicles are assumed to be separated by i.i.d.\ distributed gaps, so that on-ramps can be modeled as $M/G/1$ queues. However, this approach studies queues in isolation, and ignores the interaction of on-ramps being connected to a circular ring. Finally, in a recent paper, Foulaadvand et al.~\cite{foulaadvand2016phase} derive exact stationary densities for the occupation of a roundabout, with traffic motion modeled by the totally asymmetric exclusion process, but, importantly, without queueing at the entry roads.

Summarizing, many studies are based on simulation models or regression 
analyses~\cite[Ch.~21]{manual2010volumes}, thus not providing direct insight 
into the impact of the model parameters. On the other hand, analytical studies 
tend to study parts of the roundabout in isolation, ignoring characteristic geometric properties of roundabouts. The primary contribution of this paper is a single-lane roundabout model that (1) is still analytically tractable, and~(2) still contains the detailed geometric properties of the underlying system. More specifically, we set up a model in which we succeed to derive (a) an exact marginal stationary distribution for the occupation of the roundabout; (b) results on the dependencies between parts of the roundabout, and on the queue distribution; (c) scaling limits for the occupation of the roundabout and the states of the queues. Our results lead to a better understanding of 
traffic dynamics on roundabouts, and, in particular, of the effects of model parameters on performance. As a consequence, our findings have evident application potential when setting up procedures for design and control. A second main contribution relates to the verification of properties (b) and~(c) above, for which we rely on simulation: we develop a novel procedure  to statistically assess convergence in distribution.

The outline of the paper is as follows. In \autoref{sec: model}, we introduce 
the model. In \autoref{sec: preliminaries}, we identify the exact marginal 
stationary distribution for the occupation of the roundabout, and discuss why 
it is difficult to  derive further analytic results. Our methods, which are 
used in later sections, are explained in \autoref{sec: methods}. \autoref{sec: 
model properties} contains results on intrinsic model properties, whereas in 
Sections~\ref{sec: Gaussian limit} and \ref{sec: Poisson limit} we study 
scaling results.

\section{Model Description \label{sec: model}}

The model we consider is a road traffic model for a roundabout with (on-ramp) 
queues at the points of entrance onto the roundabout. The exit point of a car 
from the roundabout is random and depends on its point of entry. The 
roundabout is modeled as a stretch of road consisting of $L$~cells numbered 
$1,\dots,L$, which we assume to be arranged in a circle, so that cell~1 is 
adjacent to cell~$L$. Making use of this circularity, we will also use the 
index $L+i$ to refer to cell $i$, for $1 \leq i \leq L$, to simplify notation. 
Each cell can contain at most one car, and to keep track of the cars on the 
roundabout, we attach the state space $\{0,1,\dots,L\}$ to each cell: 
state~$0$ indicates that a cell is vacant, and a state $j\in \{1,\dots,L\}$ 
indicates that the cell is occupied by a car that entered the roundabout at 
cell~$j$. For ease of reference, we will also say that a cell is occupied by a 
car of \emph{type}~$j$ if the state of the cell is~$j$. 

The main characteristics of the evolution of our stochastic system are the 
following. To model how cars get onto the roundabout, we assume that there is 
an on-ramp queue in front of each cell~$i$. At every time step, a new car 
arrives at the queue of cell~$i$ with probability~$p_i \in [0,1]$. From this queue, in 
every time step, a single car can move onto the roundabout, but only when 
cell~$i$ is empty. If cell~$i$ is occupied by a car of type~$j$ at a specific 
moment in time, then with probability~$q_{ij} \in [0,1]$ the car will leave the 
roundabout in the next time step (and otherwise it moves to the next cell). 
The fact that the probability $q_{ij}$ depends on~$j$ reflects that, in 
general, the position where a car leaves the roundabout can depend on where it 
entered. Note that by setting $p_i = 0$ or $q_{ij} = 0$ we can remove on-ramps and off-ramps from the system, and thus flexibly model their position.

Now that we have sketched the main principles behind our model, we proceed by 
providing a more precise account of the dynamics. A key feature of the model 
is that the update rules (given in detail below) are \emph{local}, meaning 
that at each time step, we can consider what happens at each of the cells of 
the model independently, and then update all the local states in parallel (in 
accordance with the cellular automata paradigm). Thus it suffices to describe 
what happens at a single cell and the corresponding queue. We distinguish 
between the following cases:

\paragraph*{Case~1:} cell~$i$ and queue~$i$ are both empty. In this case, if 
no new car arrives at cell~$i$ (which happens with probability $1-p_i$), then 
cell~${i+1}$ and queue~$i$ will both be empty at the next time step. 
Otherwise, the newly arrived car immediately enters the roundabout and moves 
on to cell~${i+1}$, meaning that cell~${i+1}$ will be in state~$i$ at the next 
time step, and queue~$i$ will still be empty.
		
\paragraph*{Case~2:} cell~$i$ is empty and queue~$i$ is not empty. In this 
case, the first car waiting in queue~$i$ enters the roundabout and moves on to 
cell~${i+1}$. Thus, cell~${i+1}$ will be in state~$i$ at the next time step, 
and the length of queue~$i$ will either decrease by one (if no new car arrives 
at cell~$i$), or otherwise stay the same.
		
\paragraph*{Case~3:} cell~$i$ is occupied by a car of type~$j$. In this case, 
queue~$i$ is blocked, and hence its length will stay the same if no new car 
arrives at cell~$i$, or otherwise grow by one. Meanwhile, the car of type~$j$ 
can decide to leave the roundabout (which it does with probability~$q_{ij}$), 
in which case cell~${i+1}$ will be empty at the next time step, or the car 
decides to drive on, in which case cell~${i+1}$ will be in state~$j$ at the 
next time step.

\section{Preliminaries \label{sec: preliminaries}}

The model under consideration is a discrete-time Markov chain, the state of 
which is a vector describing the state of each cell and the length of each 
queue. We will denote the Markov chain by $X = \{X_t\colon t\in\Z_+\}$. It is 
not difficult to see that $X$ is irreducible and aperiodic, since with 
positive probability, by choosing the right events, we can empty the system in 
a finite number of steps, keep it in the empty state for an arbitrary number 
of steps, and then send it to any state we like in a finite number of steps.

We say that the model is \emph{stable} if the Markov chain~$X$ is positive 
recurrent, and hence has a unique stationary distribution. As our first 
result, we will now show that, under the assumption of stability, the marginal 
stationary probability~$\pi_{ij}$ that a given cell~$i$ is in state~$j$ is 
given by
\begin{equation}
	\label{Eqn: pi_ij}
	\pi_{ij}
	= \begin{cases}
		\displaystyle
		\frac{p_j \prod_{\ell=j+1}^{i+L-1} {\bar q_{\ell j}}}
			{1 - \prod_{\ell=1}^L {\bar q_{\ell j}}},
		& \quad\text{if $1\leq i\leq j\leq L$}, \\
		& \\
		\displaystyle
		\frac{p_j \prod_{\ell=j+1}^{i-1} {\bar q_{\ell j}}}
			{1 - \prod_{\ell=1}^L {\bar q_{\ell j}}},
		& \quad\text{if $1\leq j < i\leq L$},
	\end{cases}
\end{equation}
where $\bar q_{\ell j}:=1-q_{\ell j}$, and
\begin{equation}
	\label{Eqn: pi_i0}
	\pi_{i0}
	= 1 - \sum_{j=1}^L \pi_{ij}, \qquad 1\leq i\leq L.
\end{equation}

\begin{proposition}[Marginal stationary distribution]
	\label{Prop: marginal distribution}
	If the model is stable, then the marginal stationary probability that 
	cell~$i$ is in state~$j$ is given by \eqref{Eqn: pi_ij}--\eqref{Eqn: 
	pi_i0}.
\end{proposition}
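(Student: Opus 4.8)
The plan is to reduce the computation of the marginals to a single first-order recursion around the ring, closed by one input--output balance per on-ramp. Write $\pi_{ij}$ for the stationary probability that cell~$i$ is in state~$j$. Because the update is local and synchronous, the next state of cell~$i+1$ is a function only of the current state of cell~$i$, the length of queue~$i$, and the arrival indicator at~$i$; moreover a car keeps its type and only advances, so a type-$j$ car can appear in cell~$i+1$ only when a type-$j$ car drives on out of cell~$i$, or---when $i=j$---when a fresh car enters from queue~$j$. This single-source structure is what makes the marginals solvable.

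First I would set up the marginal recursion. Conditioning on the state of cell~$i$ and using that a type-$j$ car in cell~$i$ drives on with probability $\bar q_{ij}$ independently of the queue and arrivals (Case~3), the time-invariance of the marginal of cell~$i+1$ gives
\begin{equation*}
	\pi_{i+1,j} = \bar q_{ij}\,\pi_{ij}, \qquad j\not\equiv i \pmod L,
\end{equation*}
whereas at the unique source cell $i=j$ an extra entry term appears,
\begin{equation*}
	\pi_{j+1,j} = \bar q_{jj}\,\pi_{jj} + a_j,
\end{equation*}
with $a_j$ the stationary probability that a car enters the roundabout at cell~$j$ (equivalently, the flux of type-$j$ cars into cell~$j+1$).

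Next I would unroll this around the circle. Iterating the first relation from $\pi_{j+1,j}$ through $i=j+1,\dots,j+L-1$ (all distinct from $j$ modulo~$L$), with circular indexing throughout, yields $\pi_{ij} = \pi_{j+1,j}\prod_{\ell=j+1}^{i-1}\bar q_{\ell j}$. Closing the loop through $\pi_{jj}=\pi_{j+L,j}=\pi_{j+1,j}\prod_{\ell=j+1}^{j+L-1}\bar q_{\ell j}$ and substituting into the source relation, the product telescopes, since $\bar q_{jj}\prod_{\ell=j+1}^{j+L-1}\bar q_{\ell j}=\prod_{\ell=1}^{L}\bar q_{\ell j}$, to give
\begin{equation*}
	\pi_{j+1,j}\Bigl(1-\prod_{\ell=1}^{L}\bar q_{\ell j}\Bigr) = a_j .
\end{equation*}
Feeding $\pi_{j+1,j}=a_j/(1-\prod_\ell\bar q_{\ell j})$ back into the unrolled expression, and splitting the two index ranges $1\le j<i\le L$ and $1\le i\le j\le L$ (in the latter the car must wrap around, so $i$ is replaced by $i+L$), reproduces exactly the two branches of \eqref{Eqn: pi_ij}---provided $a_j=p_j$.

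The crux, and the step I expect to be the main obstacle, is showing $a_j=p_j$; this is where stability enters. I would argue by input--output balance for queue~$j$. Let $u$ and $v$ be the stationary probabilities that cell~$j$ is empty while queue~$j$ is, respectively, nonempty and empty. A car leaves queue~$j$ only in Case~2, so the queue shrinks (by one) exactly in Case~2 without a coincident arrival and grows (by one) exactly in Case~3 with an arrival; in the stationary regime the expected one-step increment of the queue length must vanish, which gives $(1-p_j)\,u = p_j\,(1-u-v)$, the right-hand side being $p_j$ times the probability that cell~$j$ is occupied. On the other hand, a car enters the roundabout at cell~$j$ in Case~2, or in Case~1 together with an arrival, so $a_j = u + p_j v$; eliminating $u$ between the two relations collapses this to $a_j = p_j$. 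The delicate point is the zero-drift identity itself: it needs the stationary queue length to have finite mean (guaranteed by positive recurrence under the stability assumption), while the per-step increment is bounded, keeping all expectations well defined.
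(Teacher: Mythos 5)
Your proof is correct, but it takes a genuinely different route from the paper's. The paper argues by flow conservation at the level of (type, destination) classes: it computes the probability that a type-$j$ car eventually exits at cell~$i$, namely $q_{ij}\prod_{\ell=j+1}^{i-1}\bar q_{\ell j}\,\bigl(1-\prod_{\ell=1}^{L}\bar q_{\ell j}\bigr)^{-1}$ (summing the geometric series over full circles), multiplies this by the arrival rate~$p_j$, and equates the result with the stationary exit rate $\pi_{ij}q_{ij}$, which gives \eqref{Eqn: pi_ij} in one line. You instead exploit one-step stationarity of the cell marginals: the transport recursion $\pi_{i+1,j}=\bar q_{ij}\pi_{ij}$ with a single source term at $i=j$, closed around the ring so that the loop factor $1-\prod_{\ell=1}^{L}\bar q_{\ell j}$ emerges automatically, and then a zero-drift (input--output) balance for queue~$j$ to pin down the entry rate $a_j=p_j$. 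What your route buys is that the throughput identity $a_j=p_j$ --- which the paper takes for granted when it declares $p_j$ times the exit probability to be ``the rate at which cars arrive that are of type~$j$ and intend to leave the roundabout at cell~$i$'' and invokes rate-in-equals-rate-out --- is actually derived from stationarity of the chain, so your argument is more self-contained about where stability is used; the paper's argument is shorter and carries a cleaner probabilistic interpretation. One small repair to your last step: you justify the zero-drift identity by saying the stationary queue length has finite mean, ``guaranteed by positive recurrence,'' but positive recurrence does not in general imply finite moments of the stationary distribution. Fortunately you do not need finiteness of the mean: since the one-step increment of queue~$j$ is bounded by~$1$ in absolute value, stationarity gives $\E\bigl[\min(Q',n)\bigr]=\E\bigl[\min(Q,n)\bigr]$ for every truncation level~$n$ (with $Q$ and~$Q'$ the queue length before and after one step), the truncated differences are bounded by~$1$ and converge pointwise to $Q'-Q$, and bounded convergence then yields $\E[Q'-Q]=0$, which is exactly your balance equation $(1-p_j)u=p_j(1-u-v)$.
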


\begin{proof}
	Assume that the model is stable, and first consider the case that $i$ 
	and~$j$ satisfy $1\leq j< i\leq L$. Then the probability that a car that 
	enters the roundabout at cell~$j$ will leave at cell~$i$ (potentially 
	after first completing $n\geq0$ full circles on the roundabout) is given 
	by
	\[
		q_{ij} \prod_{\ell=j+1}^{i-1} {\bar q_{\ell j}} \sum_{n=0}^\infty
			\biggl( \prod_{\ell=1}^L {\bar q_{\ell j}} \biggr)^n
		= \frac{q_{ij} \prod_{\ell=j+1}^{i-1} {\bar q_{\ell j}}}
			{1 - \prod_{\ell=1}^L {\bar q_{\ell j}}}.
	\]
	We conclude that this expression multiplied by~$p_j$ is the rate at which 
	cars arrive that are of type~$j$, and that intend to leave the roundabout 
	at cell~$i$. But if the model is stable, then the rate at which such cars 
	leave the system must be equal to $\pi_{ij} q_{ij}$, where $\pi_{ij}$ 
	denotes the marginal stationary probability that cell~$i$ contains a car 
	of type~$j$. This proves~\eqref{Eqn: pi_ij} when $j<i$. The proof in the 
	case $1\leq i \leq j \leq L$ is similar.
\end{proof}

Even though we now have an exact expression for the marginal stationary 
distribution of the cells, the full joint stationary distribution of the 
Markov chain~$X$ cannot be found. In particular, the stationary distribution 
will not be the product distribution of the marginals of the cells and queues. 
Indeed, consider the event that queue~$i$ and cell~$i+1$ are both empty for 
some $i \in \{1,\dots,L\}$. Then, one time unit earlier queue~$i$ must have 
been empty, because otherwise, either queue~$i$ would now still be non-empty, 
or a car from queue~$i$ would now be in cell~${i+1}$. This shows that there is 
a dependency in the model between adjacent cells and queues, ruling out a 
product-form stationary distribution.

To conclude this section, we discuss the model's stability condition. We have 
shown above that when the model is stable, $\pi_{i0}$ is the stationary probability at which cell~$i$ is empty. Since cars arrive at cell~$i$ with probability~$p_i$, and 
can only enter the roundabout when the cell is empty, it is conceivable that 
the model cannot be stable if $p_i \geq \pi_{i0}$ for some cell~$i$.  
Conversely, one suspects that if $p_i < \pi_{i0}$ for all cells~$i$, then the 
cells will be vacant often enough to prevent the queue lengths from growing 
arbitrary large, and hence the model will be stable. We have tested this 
conjecture using extensive simulation experiments in which we replace $p_i$ 
by~${\alpha p_i}$ and increase~$\alpha$ (starting from $\alpha=0$). The 
experiments confirm that a system becomes unstable when $\alpha$ exceeds the 
smallest value for which $\alpha p_i\geq \pi_{i0}(\alpha)$ for some $i \in 
\{1,\dots,L\}$. Throughout Sections~\ref{sec: methods}--\ref{sec: Poisson 
limit}, we therefore restrict ourselves to  cases where $p_i<\pi_{i0}$ for all 
$i \in \{1,\dots,L\}$.

\section{Methods \label{sec: methods}}

Since we do not have a closed-form expression for the joint stationary 
distribution, we resort to finding approximations for the stationary 
distributions of cells and queues. More specifically, in \autoref{sec: model} 
we introduced the $p_i$ and~$q_{ij}$, which can be seen as discrete profiles 
of arrival and departure probabilities (as a function of the position~$i$ between $1$ 
and~$L$). In \autoref{sec: profiles}, we introduce their continuous 
counterparts, so that for finite~$L$, the $p_i$ and~$q_{ij}$ are obtained as 
discretizations of these continuum profiles. The continuous setting allows 
explicit analysis, with which we can approximate our discrete model.

Later in the paper (in Sections \ref{sec: Gaussian limit} and~\ref{sec: 
Poisson limit}) we state claims on, respectively, the number of empty cells 
and total queue length for each section of the roundabout in the regime 
$L\to\infty$. To verify these claims from simulation experiments, we develop a 
novel methodology, which is described in \autoref{sec: supp convergence 
distr}.

\subsection{Continuum Profiles and Parameters \label{sec: profiles}}

We proceed by introducing the continuum profiles of arrivals and departures. 
We start with the arrivals. Let $\varrho\colon (0,1]\to\R^+$ be an integrable 
function that satisfies $\int_0^1 \varrho(x) \dx = 1$. For given $L \in \Z^+$, 
$\theta>0$, and $i \in \{1,\dots,L\}$, we set
\[
	p_i\equiv p_i(L) = \theta \int_{i/L}^{(i+1)/L} \varrho(x) \dx.
\]
This construction can be interpreted as follows. When taking the limit $L\to 
\infty$, the circular stretch of road is mapped onto the unit 
interval~$(0,1]$. The parameter $\theta>0$ represents the total rate at which 
cars arrive at the roundabout, and for a given interval $(u,v] \subset (0,1]$, 
$\int_u^v \varrho(x)\dx$ represents the rate at which cars arrive in that 
interval. Informally, for $L$ large, $p_i$ is roughly proportional 
to~$L^{-1}$. Note that in this setup, the arrival rate over every segment of 
the roundabout is invariant in~$L$.

To describe the continuum profile for the departures, we introduce a family 
$(F_x(\cdot))_{x\in(0,1]}$ of cumulative distribution functions on~$[0,1]$ 
(which are non-decreasing with $F_x(0)=0$), and denote by $F^c_x(\cdot) \equiv 
1 - F_x(\cdot)$ their complementary distribution functions. The idea is that 
in the limit $L \to \infty$, $F^c_x(u)$ represents the probability that a car 
that enters the roundabout at point~$x$, travels at least a distance~$u$ along 
the roundabout before leaving. For each finite $L \in \Z_+$ and $i,j \in 
\{1,\dots,L\}$, we now set
\[
	q_{ij} \equiv q_{ij}(L)
	= \begin{cases}
		\displaystyle
		1 - \frac{F^c_{j/L} ((i-j+1)/L)} {F^c_{j/L} ((i-j)/L)},
		& i\geq j; \\
		& \\
		\displaystyle
		1 - \frac{F^c_{j/L} ((L+i-j+1)/L)} {F^c_{j/L} ((L+i-j)/L)},
		& i < j.
	\end{cases}
\]

Since we interpret $F_{j/L}$ as the distribution function of the driving 
distance for cars arriving at~$j/L$, $1-q_{ij}(L)$ for $i\geq j$ can be seen 
as the conditional probability that such a car drives at least distance 
$(i-j+1)/L$ on the roundabout, given that the car has driven distance 
$(i-j)/L$, and similarly for~$i<j$. Hence the above definition of the~$q_{ij}$ 
guarantees that the distribution of driving distance of cars remains invariant 
in~$L$. We further assume that $F^c_x$ is piecewise continuous as a function 
of~$x$, meaning that cars that arrive at roughly the same place on the 
roundabout, also have roughly the same distribution of driving distance. This 
condition is natural, and guarantees the existence of $\lim_{L \to \infty} 
\pi_{\lceil uL \rceil,0}$.

To summarize: for given $\varrho$ and a family of~$F_x$, we obtain a sequence 
of models in~$L$, which can be viewed as discrete representations of the same 
roundabout. In the remainder of the paper, we consider two specific cases of 
continuum profiles and the discrete models they produce for different values 
of $L$, in order to support the claims we make in Sections~\ref{sec: model 
properties}, \ref{sec: Gaussian limit}, and~\ref{sec: Poisson limit}.

Most of the arguments by which we arrive at our claims are based on the 
symmetric case where $p_i = p \in (0,1)$ and $q_{ij} = q \in (0,1)$ for each 
$i,j \in \{1,\dots,L\}$. We therefore choose a parameter setting in this 
symmetric case such that $\pi_{i0} > p_i$. More specifically, we choose 
$\varrho(x) = 1$ with $\theta=1$, and $F_x^c(u) = \exp(-2u)$ for each $x \in 
(0,1]$, so that for finite~$L$ we have $p(L) = 1/L$ and $q(L) = 1 - 
\exp(-2/L)$. We refer to this choice as the \textit{homogeneous setting} or 
\textit{homogeneous case}.

To illustrate that our claims are also supported in a realistic 
non-homogeneous case, we use an example from~\cite[Ch.~21] 
{manual2010volumes}, namely Example Problem~1. This example describes a 
roundabout with four on/off-ramps, and gives for each on-ramp (1) the number 
of arrivals per hour, and~(2) the fraction of arriving cars that depart via 
each of the four off-ramps. To choose a $\varrho$ and family of~$F_x$ that 
correspond to this example, we start by calibrating a finite~$L$ model that 
has a realistic size. Using the calibration in \cite[Section~3.1] 
{belz2016influence}, we take the length of each cell to be about 7~meters and 
our time steps to be 1~second, and find that $L = 20$ is a suitable choice. 
The resulting model has geometric features and car velocities that match the 
realistic ones described in \cite{manual2010volumes} 
and~\cite{akccelik2011assessment}. We let the on/off-ramps be located at cells 
$i = 1,6,11,16$, meaning that only these cells will have non-zero arrival and 
departure probabilities, while we set the remaining $p_i$ and~$q_{ij}$ to 
zero. We calculate the arrival probabilities~$p_i$ at the four on-ramps from 
the given number of arrivals per hour in the example problem. The departure 
probabilities~$q_{ij}$ are analogously obtained from the given fractions of 
arriving cars that depart via the off-ramps. The latter requires that we first 
fix the probability $\prod_{\ell=1}^L \bar q_{\ell j}$ that a car completes a 
full circle on the roundabout; we set this probability equal to~$1\%$ for 
every type of car, and then determine the~$q_{ij}$ to reproduce the departure 
behavior of the example.

Now that we have the $p_i$ and~$q_{ij}$ for $L=20$, we can choose our 
continuum profiles $\varrho$ and~$F_x$ accordingly. Recall that we map the 
full roundabout to the interval $(0,1]$, so that for $L=20$, each cell 
corresponds to an interval of length~$0.05$. We further split each of the four 
cells $i = 1,6,11,16$ into two halves, where the half that is adjacent to the 
previous cell corresponds to the off-ramp of the cell, and the other half to 
the on-ramp. We now choose $\varrho$ proportional to~$p_i$ at the on-ramps and 
zero elsewhere, and we choose $\theta = \sum_{i=1}^{20} p_i$, so that for 
$L=20$, integration of~$\varrho$ gives us the correct~$p_i$. As for the 
departure profiles, we choose the $F^c_x$ to be exponentially decreasing at 
the off-ramps in analogy with the homogeneous case, and constant in between. 
Here, the rate of the exponential decrease is chosen such that we obtain the 
correct~$q_{ij}$ for $L=20$. In \autoref{fig: parameter scaling heterogeneous 
case} we have plotted the resulting profiles $\varrho$ and a representative 
from the family $(F^c_x)_{x \in (0.025,0.05]}$ for illustration. We refer to 
the profiles $\varrho$ and~$F_x$ thus obtained as the \textit{heterogeneous 
setting} or \textit{heterogeneous case}. We stress that, although these 
profiles were calibrated for $L=20$, we use the same $\varrho$ and~$F_x$ in 
our simulations of the heterogeneous case for other values of~$L$.

\begin{figure}
	\includegraphics[width=8.6cm]{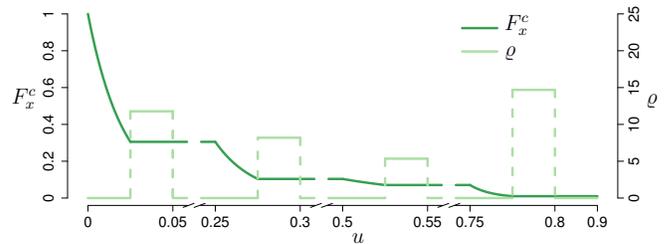}
	\caption{Graphs of $F^c_x$ for $x \in (0.025,0.05]$, and of~$\varrho$, in 
	the heterogeneous setting.}
	\label{fig: parameter scaling heterogeneous case}
\end{figure}

In the remainder of the paper, we present various  claims about the model. We 
cannot prove these claims, as we lack an analytic expression for the joint 
stationary distribution. Instead, we will \emph{support} our claims using 
simulation in combination with statistical evidence. We throughout use the 
following structure: first we state the claim and give intuition behind it 
based on properties of the model, then we describe an experiment by which we 
aim to support the claim, provide our support, and finally, we draw our 
conclusions. In each simulation experiment, we initialize (1) the cells 
according to the marginal stationary probabilities~$\pi_{ij}$, and~(2) the 
queues empty. We then let the system run for $4L$ units of time, as we have 
observed that this is a sufficiently long time interval to safely assume the 
system has entered the stationary regime.

\subsection{Supporting Convergence in Distribution Statistically \label{sec: 
supp convergence distr}}

In \autoref{sec: Gaussian limit}, we consider the number of empty cells on a 
segment, for a sequence of models in~$L$ that we obtain from the continuous 
arrival and departure profiles, as explained in the previous section. Among 
other things, we claim that this quantity converges in distribution to a 
normal random variable as $L \to \infty$. To empirically verify this claim, we 
use two methods. The first, which is classical, is to show that the 
(empirical) distribution functions converge pointwise. The second uses 
statistical tests and is, to the best knowledge of the authors, a novel method 
to numerically support convergence in distribution. We explain the second 
method in this section.

For our explanation, we consider the situation where $\{\xi_L\}_L$ is a 
sequence of random variables that converge in distribution to a 
$N(\mu,\sigma^2)$ random variable. In our method, we use the chi-squared 
goodness-of-fit test with a confidence level equal to~$0.99$. We take 10~bins, 
the boundaries of which are chosen such that every bin contains $10\%$ of the 
probability mass of the $N(\mu,\sigma^2)$ distribution.

The naive idea for testing convergence to a normal distribution would be to 
take $L$ large, and apply the $\chi^2$-test with the ($L$-dependent) 
hypotheses
\begin{equation}
	\label{Eqn: hypothesis}
	\begin{aligned}
		H_0(L) \colon & \xi_L \overset{\rm d}{=} N(\mu,\sigma^2); \\
		H_1(L) \colon & \xi_L \overset{\rm d}{\neq} N(\mu,\sigma^2).
	\end{aligned}
\end{equation}
However, a $\chi^2$-test with these hypotheses does not give useful 
information on convergence because, in practice, one expects that $\xi_L$ does 
not have a $N(\mu,\sigma^2)$ distribution for finite~$L$, and therefore, one 
will always reject $H_0(L)$ if the sample size is large enough. The underlying 
issue is of course that to support convergence in distribution, it is not 
sufficient to consider a single $\xi_L$, but one has to consider the full 
sequence. Our method \emph{exploits} the fact that we can always reject 
$H_0(L)$ by increasing the sample size. The basic idea is that we compare the 
sample sizes~$M(L)$ for which we first reject~$H_0(L)$. If $\xi_L$ converges 
in distribution, $M(L)$ should diverge to~$\infty$ with~$L$. 

To put this idea into practice, we start our procedure by drawing a sample of 
$50$ independent copies of~$\xi_L$. We perform the chi-squared test for 
goodness-of-fit, with the hypotheses as in~\eqref{Eqn: hypothesis}, which is 
significant for $50$ samples (taking into account the expected counts in each 
bin). If we reject~$H_0(L)$, we set $M(L) = 50$; otherwise, we add another 
independent copy of~$\xi_L$ to our sample, and perform the chi-squared test 
again. We keep adding independent copies of~$\xi_L$ until we reject~$H_0(L)$, 
at which point we record the size of our sample~$M(L)$. Note that $M(L)$ is 
itself a random variable, so we run this procedure multiple times to estimate 
the mean~$\E M(L)$. Finally, we use linear regression to test whether $\E 
M(L)$ increases like a power law with~$L$, which implies that as $L \to 
\infty$, a diverging number of samples is required to reject~$H_0(L)$, thus 
supporting convergence in distribution.

Our method can, in theory, be applied to every limiting distribution with a 
set of hypotheses as in~\eqref{Eqn: hypothesis}, using any goodness-of-fit 
test. For practical applications, however, one has to be able to compute an 
estimate of~$\E M(L)$. For instance, in \autoref{sec: Poisson limit}, we claim 
convergence in distribution of the total queue length on a segment to a 
Poisson random variable. There is, however, no statistical test that is 
powerful enough to distinguish the specific alternative distribution that we 
are considering. Therefore, one has to use a huge sample size~$M(L)$ to 
reject~$H_0(L)$, even for small~$L$, which makes estimating the $\E M(L)$ 
computationally infeasible in this particular case.

\section{Model Properties \label{sec: model properties}}

In this section, we study the spatial correlations and marginal queue 
distributions of our model in the finite $L$ regime in equilibrium. Our 
results also provide information about the behavior in the regime $L \to 
\infty$, which we study in more detail in Sections \ref{sec: Gaussian limit} 
and~\ref{sec: Poisson limit}.

\subsection{Spatial Correlations}

Our roundabout model can be seen as a system of particles moving over a 
one-dimensional circular lattice. Moreover, the update rules are local, so 
that correlations in the model arise via nearest-neighbor interactions. It is, 
therefore, conceivable that the correlations decay geometrically in the 
distance between cells. To investigate this idea, denote by $C_i$ the state of 
cell~$i$ and by $Q_i$ the state of queue~$i$ in equilibrium. In order for 
states of the cells to contribute symmetrically to the correlations, we let
\[
	\widetilde{C}_i := ((C_i - i) \mod L) + 1
\]
when $C_i \neq 0$, and $\widetilde{C}_i := C_i$ if $C_i = 0$. Thus, 
$\widetilde{C}_i$ measures the forward distance to the cell where the car that 
occupies cell~$i$ entered the roundabout. Now, our claim is as follows:

\begin{figure}
	\begin{tabular}{cc}
		\includegraphics[height = 3.2cm]
			{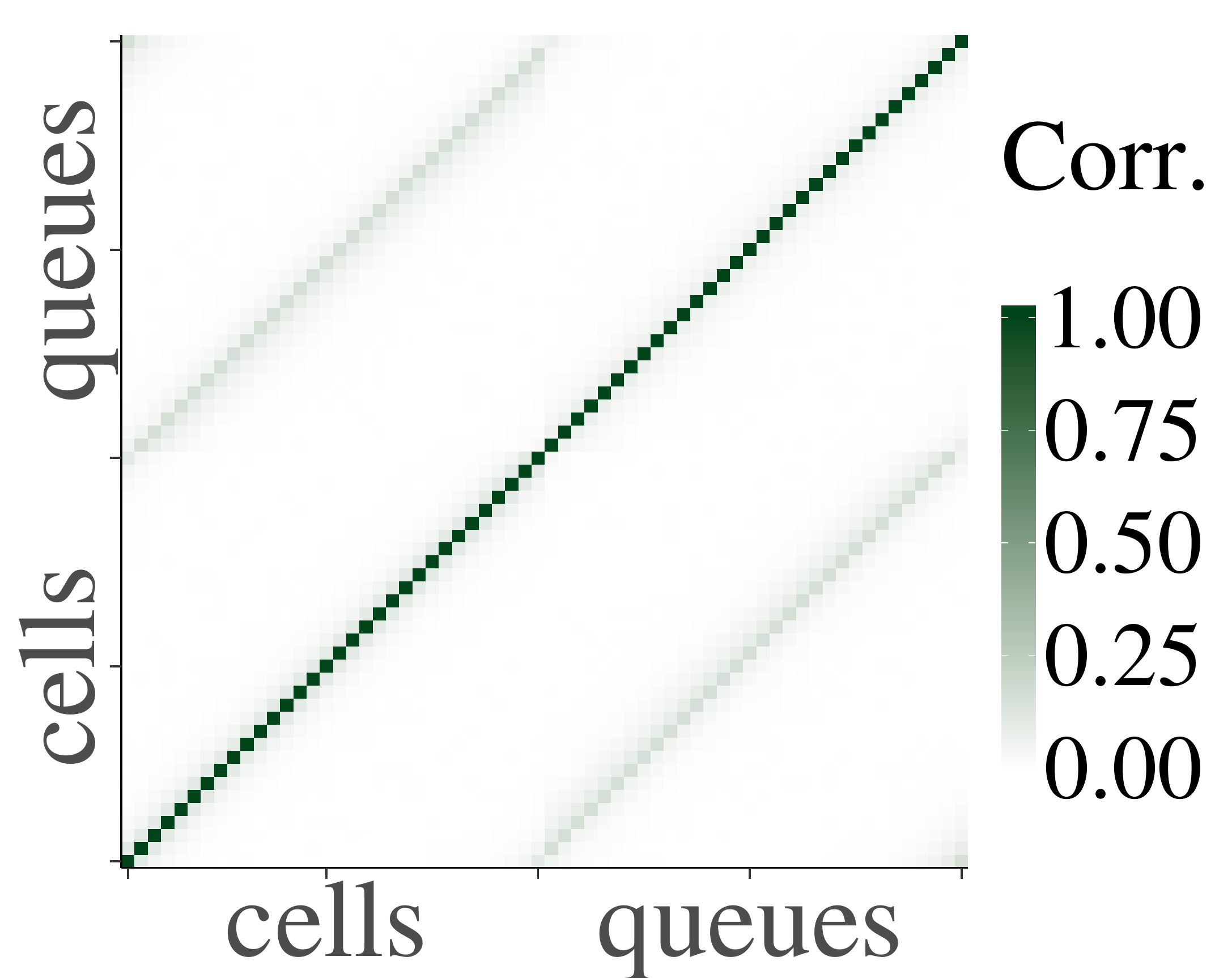} &
		\includegraphics[height = 3.2cm]
			{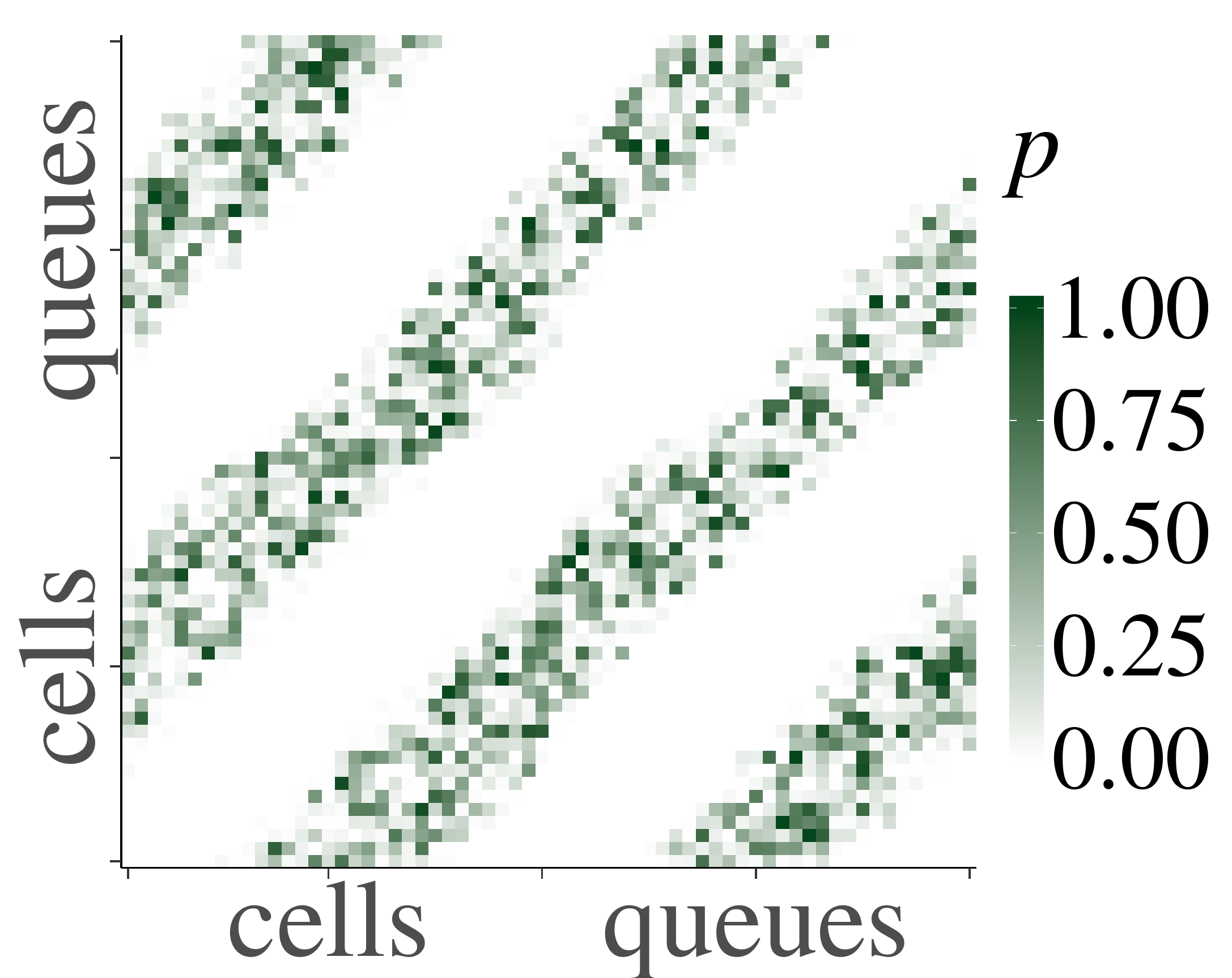} \\
		\includegraphics[height = 3.2cm]
			{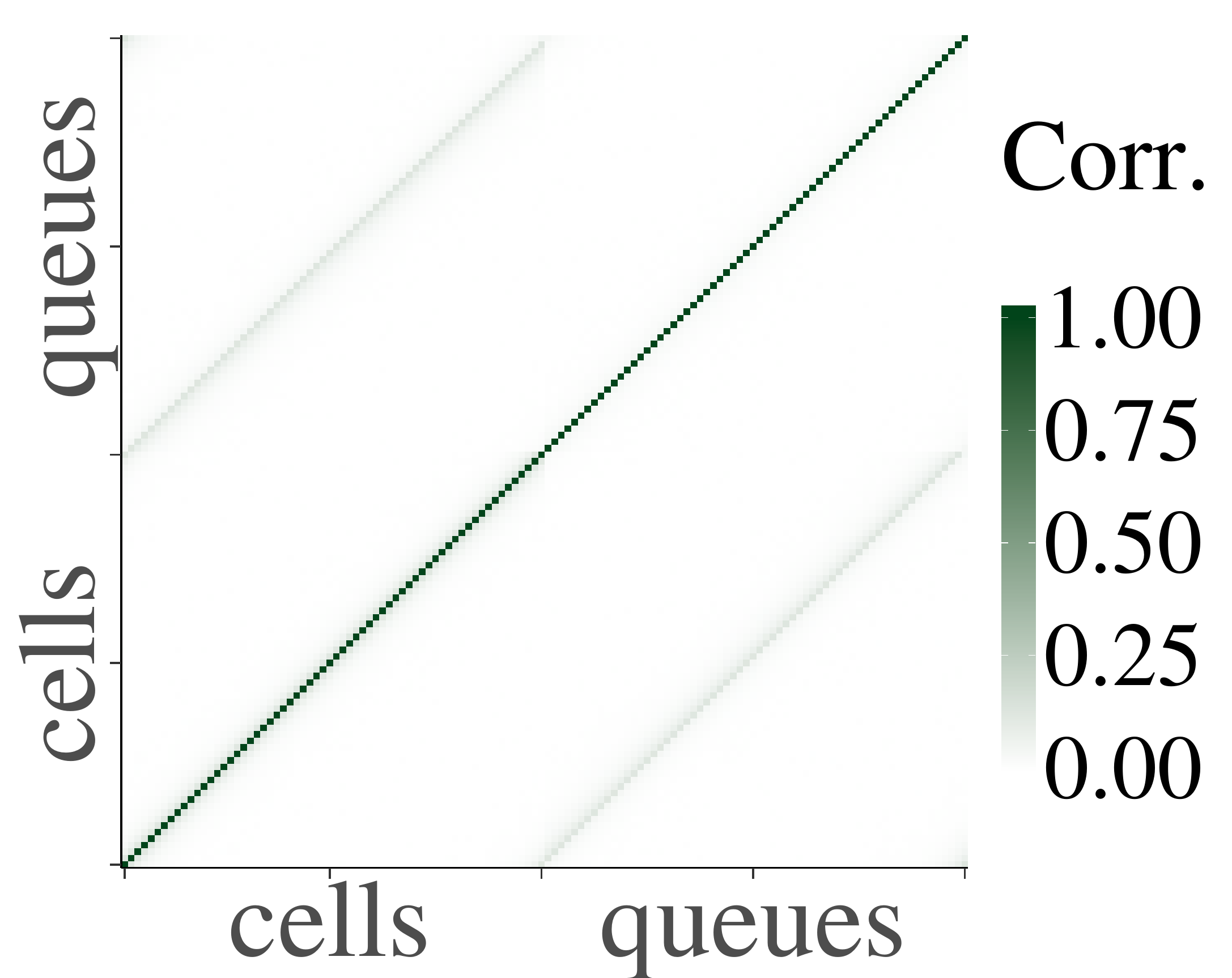} &
		\includegraphics[height = 3.2cm]
			{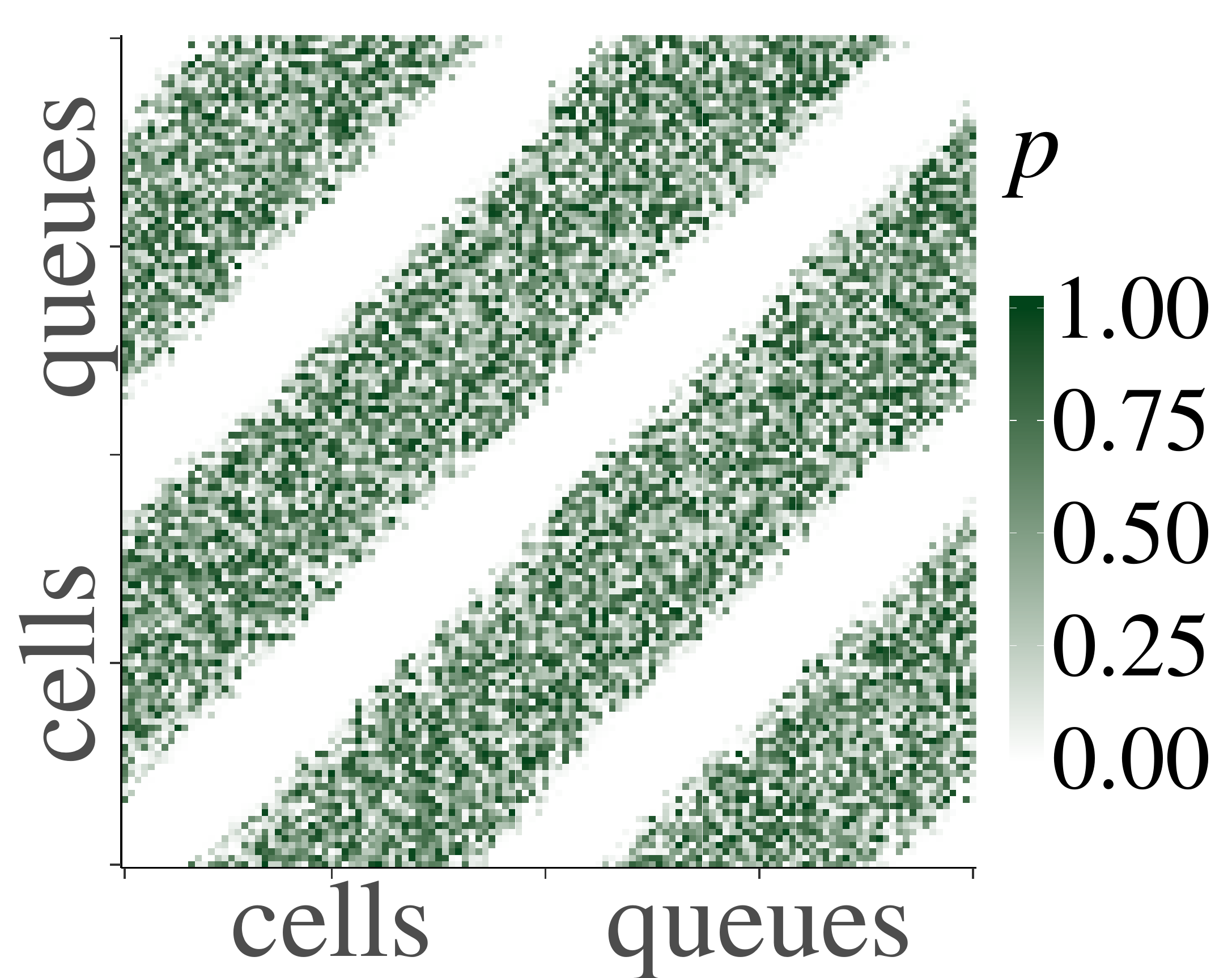} \\
		\includegraphics[height = 3.2cm]
			{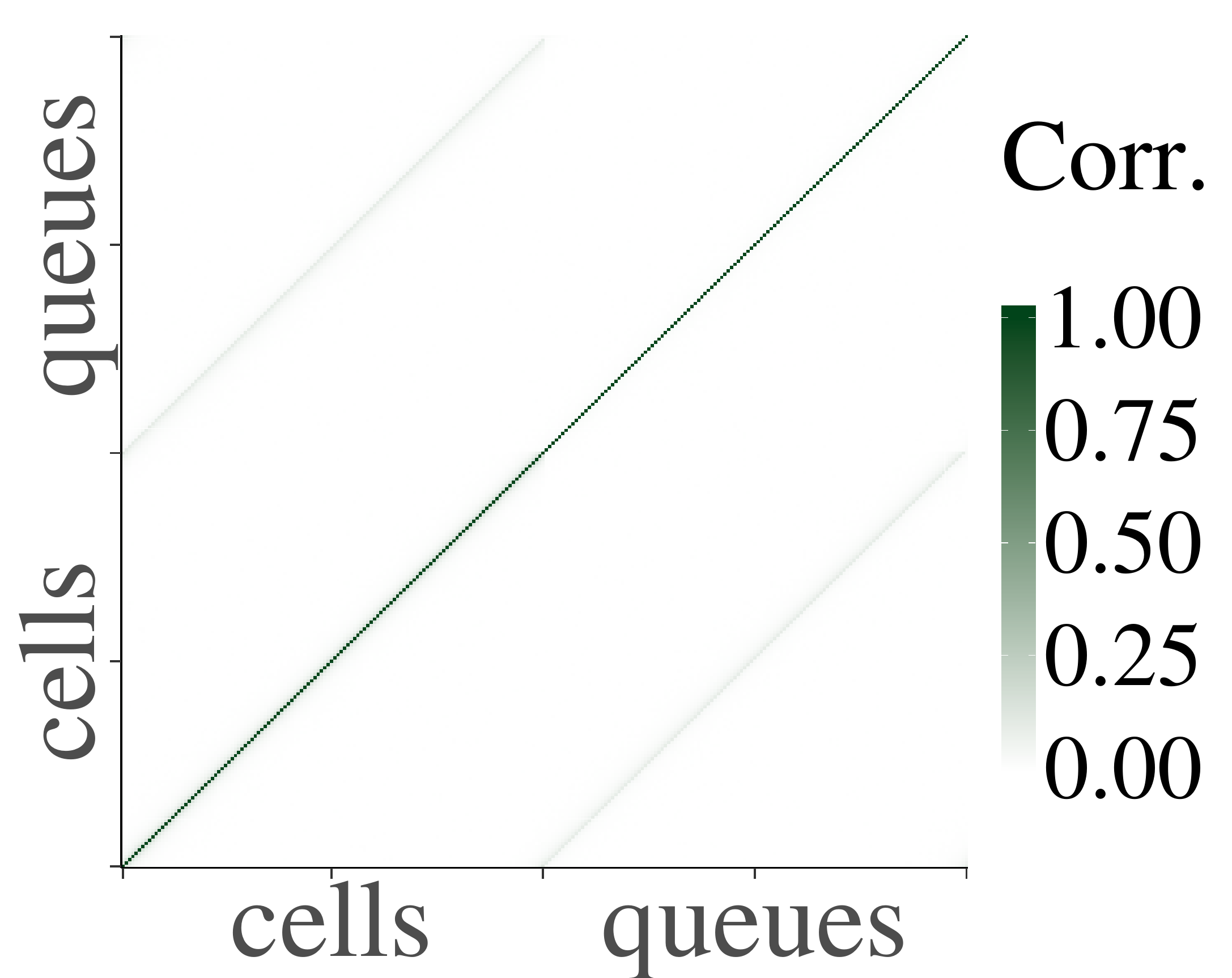} &
		\includegraphics[height = 3.2cm]
			{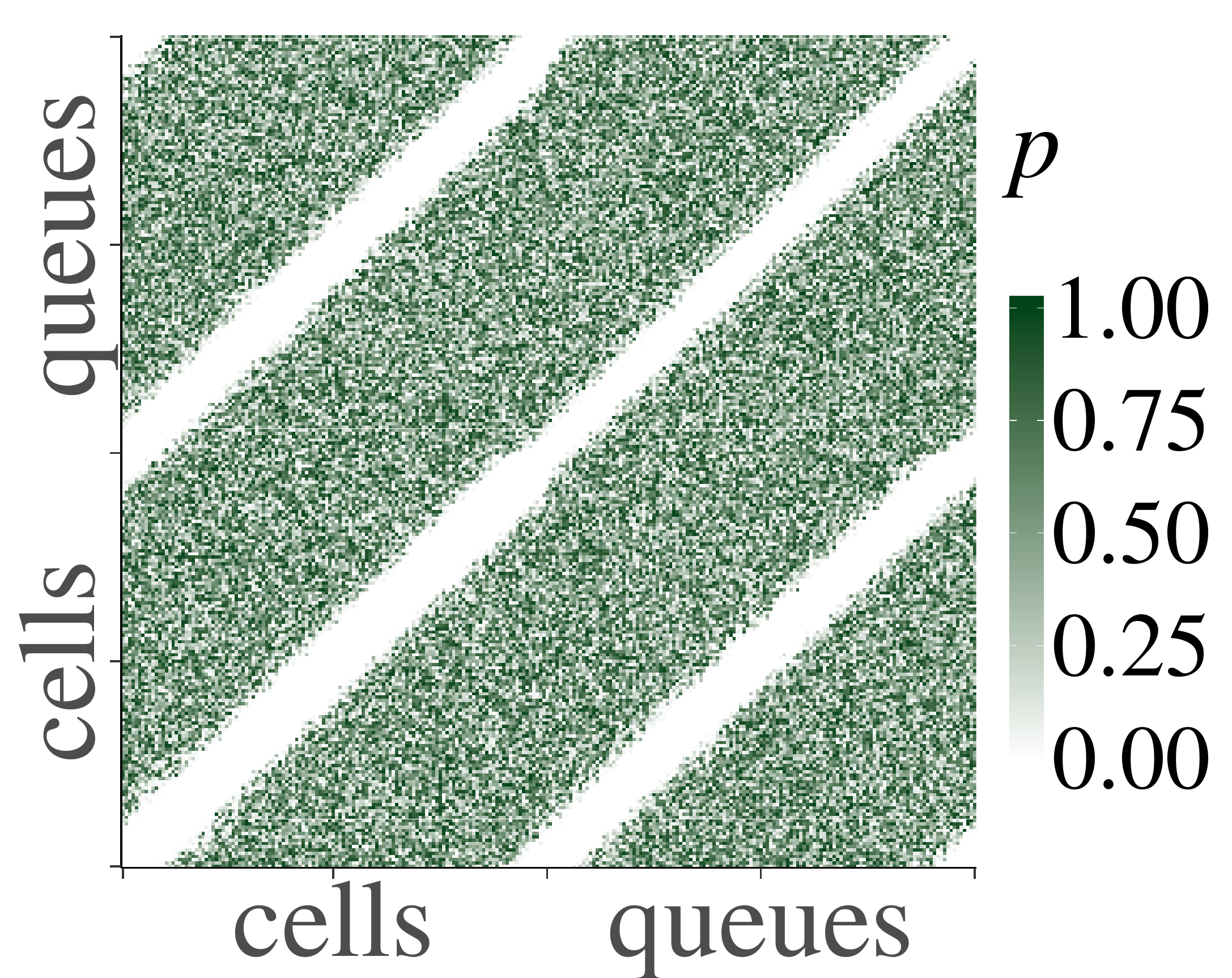} 
	\end{tabular}
	\caption{Heatmaps of correlations (Corr.) on the left and corresponding $p$-values ($p$) on the right, for the homogeneous case. From top to bottom we have the heatmaps for $L=32$, $L=64$ and $L=128$.}
	\label{fig: set of correlation-heatmaps and p-value}
\end{figure}

\begin{claim}
	\label{claim: correlations}
	The correlation between the random variables $\widetilde{C}_i$ 
	and~$\widetilde{C}_{i+k}$ decreases in~$k$ for each $i \in \{1,\dots,L\}$, 
	and is bounded above, uniformly in both $i$ and~$L$, by a function that 
	decreases geometrically with~$k$. The same statement is true for the 
	correlations between $\widetilde{C}_i$ and~$Q_{i+k}$, and for the 
	correlations between $Q_i$ and~$Q_{i+k}$.
\end{claim}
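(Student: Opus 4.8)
The plan is to exploit two structural features of the model: the update rules are local, and information travels around the ring in one direction at unit speed, since the next states of cell~$i+1$ and queue~$i$ depend only on the current states of cell~$i$ and queue~$i$ together with the independent arrival and departure coins at site~$i$ (as read off from Cases~1--3). I would first realize the equilibrium process by a graphical construction: attach to each site~$m$ and time~$s$ an independent arrival indicator $A_m(s)\sim\mathrm{Ber}(p_m)$ and independent departure coins driving the~$q_{mj}$, and obtain a stationary version as a measurable function of the coins in the infinite past (justified by positive recurrence in the stable regime $p_i<\pi_{i0}$). Since $\widetilde{C}_m$ records the distance travelled by the car occupying cell~$m$, I would work throughout with the centred variables; their fluctuations, as well as the discrepancies introduced by the coupling below, are all governed by the same (possibly $L$-dependent) driving-distance law, so the factors of~$L$ appearing in the covariance and in the normalizing standard deviations cancel. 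It therefore suffices to establish geometric decay of a suitably normalized centred covariance.

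The core is a regeneration/disagreement estimate. Tracing the occupant of cell~$i+k$ at time~$t$ backwards, its state is a function of the coins in a space--time cone whose spatial width shrinks by one site per step; crucially, the cone is severed whenever it meets a site that is empty, has an empty queue, and sees no arrival, because then the downstream cell is forced to be empty independently of the earlier history at upstream sites. Let $\mathcal R$ denote the spatial reach of this cone, i.e.\ how many sites back one must look before such a cut occurs. The claim reduces to a bound $\PP(\mathcal R\ge k)\le C\lambda^k$ with $C$ and $\lambda\in(0,1)$ uniform in~$i$ and~$L$. Granting it, I would form a copy $\widetilde{C}_{i+k}^{\,\ast}$ by independently resampling the coins in the region where the two cones could overlap; then $\widetilde{C}_{i+k}^{\,\ast}$ is independent of~$\widetilde{C}_i$, is distributed as~$\widetilde{C}_{i+k}$, and coincides with~$\widetilde{C}_{i+k}$ off the event $\{\mathcal R\ge k\}$. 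Since $\E[\widetilde{C}_i-\E\widetilde{C}_i]=0$, H\"older's inequality gives
\[
	\bigl\lvert \Cov(\widetilde{C}_i,\widetilde{C}_{i+k}) \bigr\rvert
	= \bigl\lvert \E\bigl[(\widetilde{C}_i-\E\widetilde{C}_i)(\widetilde{C}_{i+k}-\widetilde{C}_{i+k}^{\,\ast})\bigr] \bigr\rvert
	\le \norm{\widetilde{C}_i-\E\widetilde{C}_i}_2 \,
		\norm{\widetilde{C}_{i+k}-\widetilde{C}_{i+k}^{\,\ast}}_4 \,
		\PP(\mathcal R\ge k)^{1/4}.
\]
After dividing by $\Var(\widetilde{C}_i)^{1/2}\Var(\widetilde{C}_{i+k})^{1/2}$, which by the scaling remark cancels the $L$-dependent prefactors, this yields a bound of the form $C'\lambda^{k/4}$. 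The identical template handles $\Cov(\widetilde{C}_i,Q_{i+k})$ and $\Cov(Q_i,Q_{i+k})$, with the queue length as a downstream observable determined by the same forward cone, once one knows its equilibrium law is light-tailed uniformly in~$L$.

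The hard part is the uniform-in-$L$ regeneration estimate $\PP(\mathcal R\ge k)\le C\lambda^k$. A single coin does not force the update map to agree: a cut requires the local state to be empty with an empty queue, an event governed by the joint stationary law, which --- as explained after \autoref{Prop: marginal distribution} --- is not available in closed form. Making the bound rigorous thus demands a priori, $L$-uniform control of the equilibrium measure: one must show, presumably via a Lyapunov/drift argument valid throughout the stable regime $p_i<\pi_{i0}$, that empty-and-idle sites appear along the cone with probability bounded below, and that both the (rescaled) driving distance and the queue length have exponential tails uniform in~$L$. Supplying this quantitative input is precisely where the absence of an analytic joint distribution bites, which is why the statement is ultimately \emph{supported} by simulation rather than proved. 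Finally, the monotonic \emph{decrease} in~$k$ is more delicate than the geometric envelope: I would expect to establish it rigorously only in the homogeneous case, via an explicit stochastic-monotonicity coupling, and otherwise to read it off from the numerical evidence in \autoref{fig: set of correlation-heatmaps and p-value}.
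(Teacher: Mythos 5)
First, note what the paper itself does here: \autoref{claim: correlations} is a \emph{Claim}, not a theorem. The authors state explicitly that they cannot prove it for lack of a closed-form joint stationary distribution, and the paper's ``proof'' is a Support environment: simulated sample correlations, heatmaps of correlations and $t$-test $p$-values for $L=32,64,128$, and linear regression of the log of the mean absolute sample correlations against distance (with $R^2\approx 0.99$), in both the homogeneous and heterogeneous settings. Your proposal is therefore a genuinely different kind of object: an attempted mathematical proof via a graphical construction, one-directional information flow, regeneration cuts, and a H\"older bound on a coupling discrepancy. As a blueprint it is sensible: the dynamics are indeed totally asymmetric (the next states of cell~$i+1$ and queue~$i$ are functions only of the current states of cell~$i$, queue~$i$ and the coins at site~$i$), so discrepancies propagate downstream at speed at most one; the event ``cell empty, queue empty, no arrival'' does force the downstream cell to be empty regardless of upstream history; and the H\"older step correctly converts $\PP(\mathcal{R}\ge k)\le C\lambda^k$ into a geometric correlation bound, modulo moment control.

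The gap, however, is genuine and is exactly the one you flag: the uniform-in-$i$ and uniform-in-$L$ estimate $\PP(\mathcal{R}\ge k)\le C\lambda^k$ is not obtainable from anything available. The difficulty is sharper than ``no closed form'': the cut event is a function of the \emph{state} (cell and queue both empty), not of the coins alone, so in a resampling coupling its occurrence can itself differ between the two copies; what one needs is a lower bound on the cut probability \emph{conditional on an arbitrary upstream disagreement history}, uniformly along the backward cone. Only marginal stationary quantities are in closed form --- \autoref{Prop: marginal distribution}, and $\sigma_{i00}=(\pi_{i0}-p_i)/(1-p_i)>0$ in the stable regime, which the paper derives from \eqref{eqn: pi i+1,0}, \eqref{Eqn: pi_i0} and \eqref{eqn: short computation} --- and marginal positivity does not yield the required conditional minorization, precisely because of the dependencies the claim is about. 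Two further inputs of your argument are also unproven: the uniform light tail of $Q_i$ needed for the $\norm{\cdot}_4$ term in the queue cases is (a uniform version of) \autoref{claim: queues}, itself only supported numerically; and the monotone decrease in~$k$ asserted in the claim is not addressed by your argument (nor, in fairness, by the paper's regression-based support, which only tests the geometric envelope). So the proposal is a plausible research program rather than a proof; its honest bottom line --- that the quantitative input is missing and one must fall back on simulation --- is in effect the position the paper itself takes.
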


\begin{figure}
	\begin{tabular}{c}
		\includegraphics[height = 3.2cm]{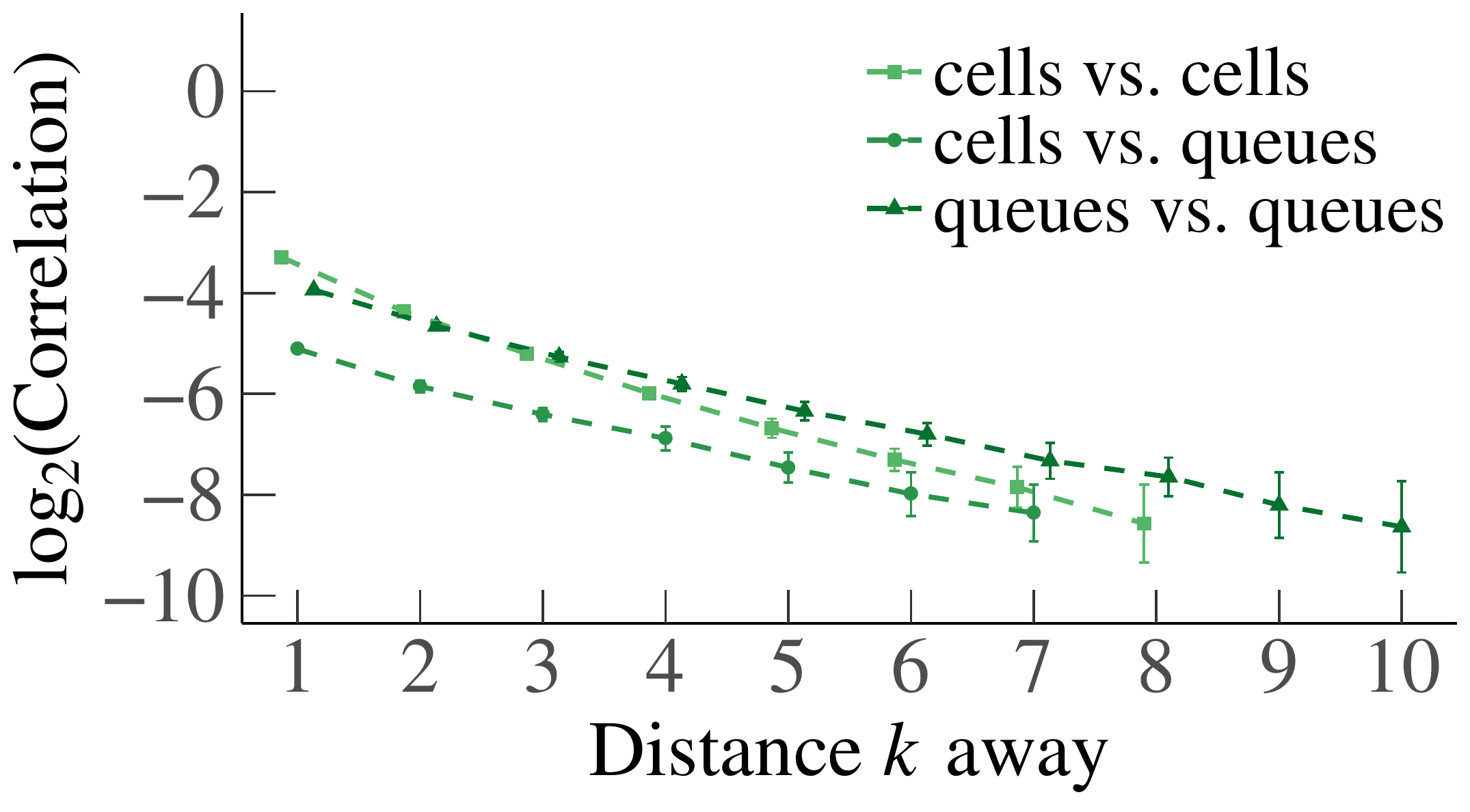} \\
		\includegraphics[height = 3.2cm]{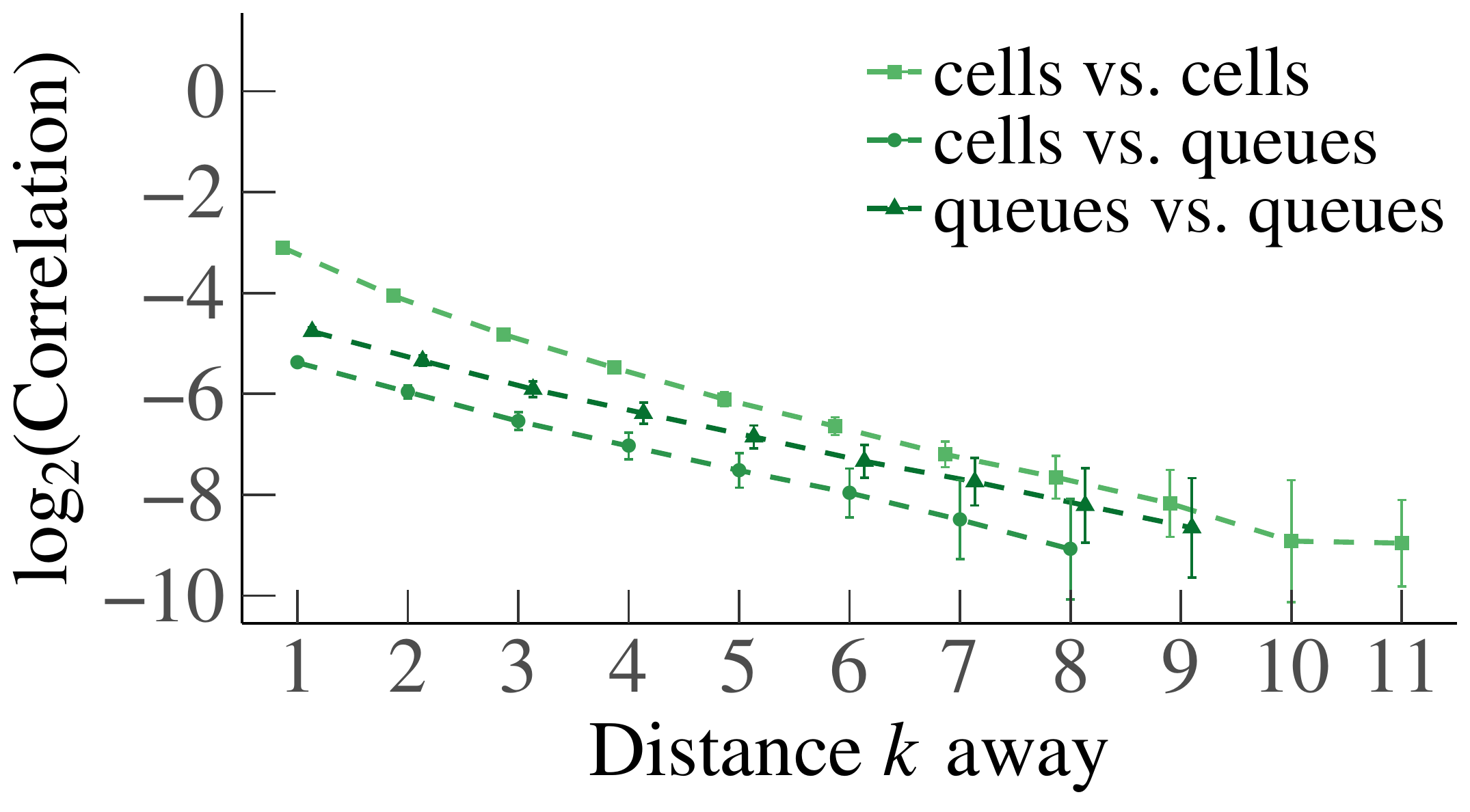} \\
		\includegraphics[height = 3.2cm]{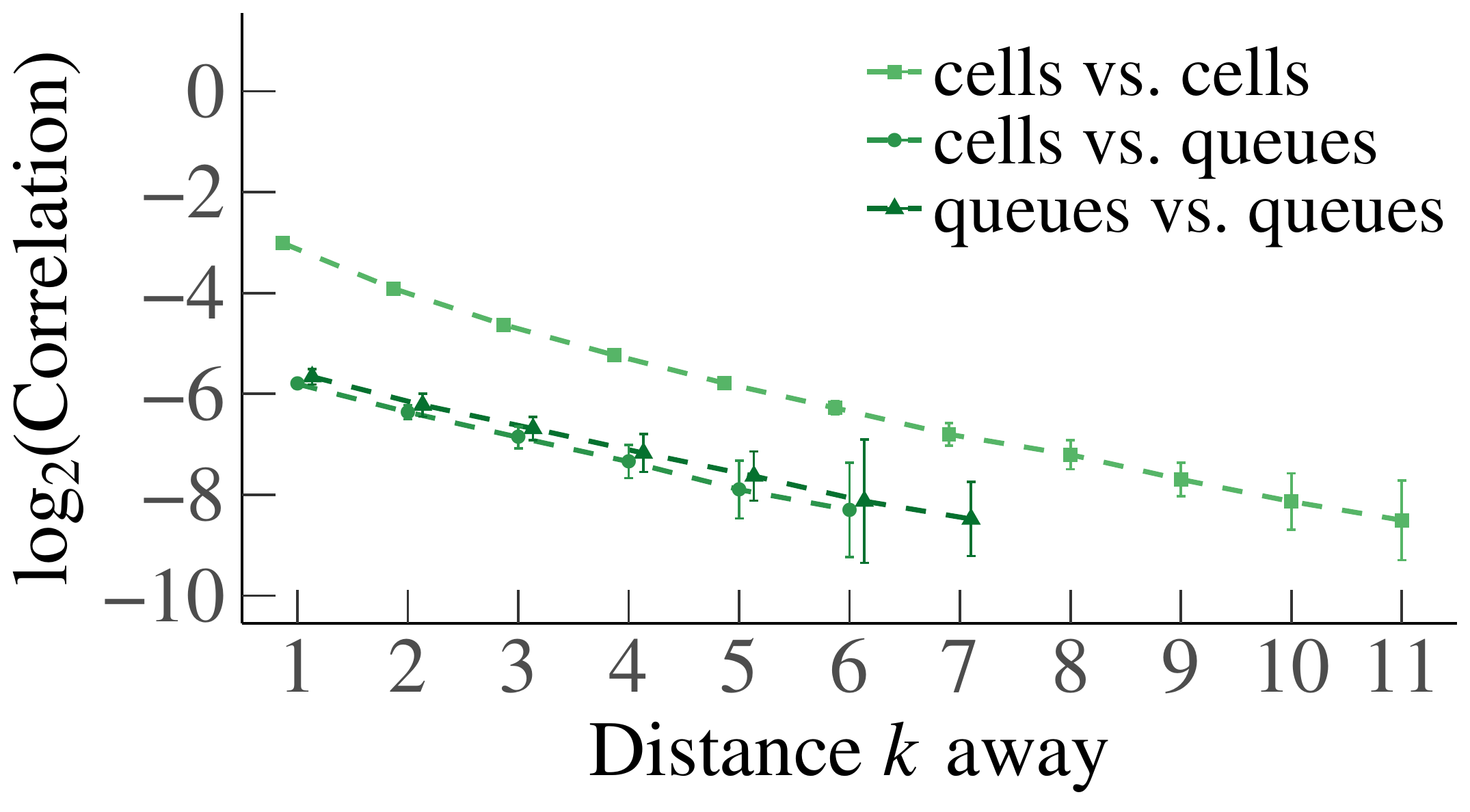} \\
	\end{tabular}
	\caption{Decay of correlations, on a log-scale, for the homogeneous case. From top to bottom we have $L=32$, $L=64$ and $L=128$. For clarity of the figure, the graphs have been shifted horizontally by a small value.}
	\label{fig: set of decay of correlations}
\end{figure}

To support \autoref{claim: correlations} it is sufficient for the sample correlation to be geometrically decreasing, starting from some distance $k \geq 1$. To verify this, we estimate the mean sample correlation coefficient between pairs of cells and/or queues, from a sample of 100 correlation coefficients, each estimated from a simulated data set of size $64 \cdot 10^4$. To analyze the decay of the, potentially negative, mean sample correlation coefficient on a log scale, we take the absolute value of the 100 samples and consider their mean. We then verify that on a log scale, these mean absolute sample correlations are bounded by a decreasing linear function. However, from known results on the asymptotic distribution of the sample correlation \cite[Example 10.6]{kendall1961advanced}, we expect that the variance becomes constant as the correlation tends to zero. As a consequence, in our experiment, the mean absolute sample correlation will not be a good estimator of the absolute correlation when the correlation is small. To ensure that our estimates are accurate, we therefore only consider points for which the mean sample correlation is at least two standard errors (as determined from the 100 samples) away from zero.
 
\begin{figure}
	\begin{tabular}{cc}
		\includegraphics[height = 3.2cm]
			{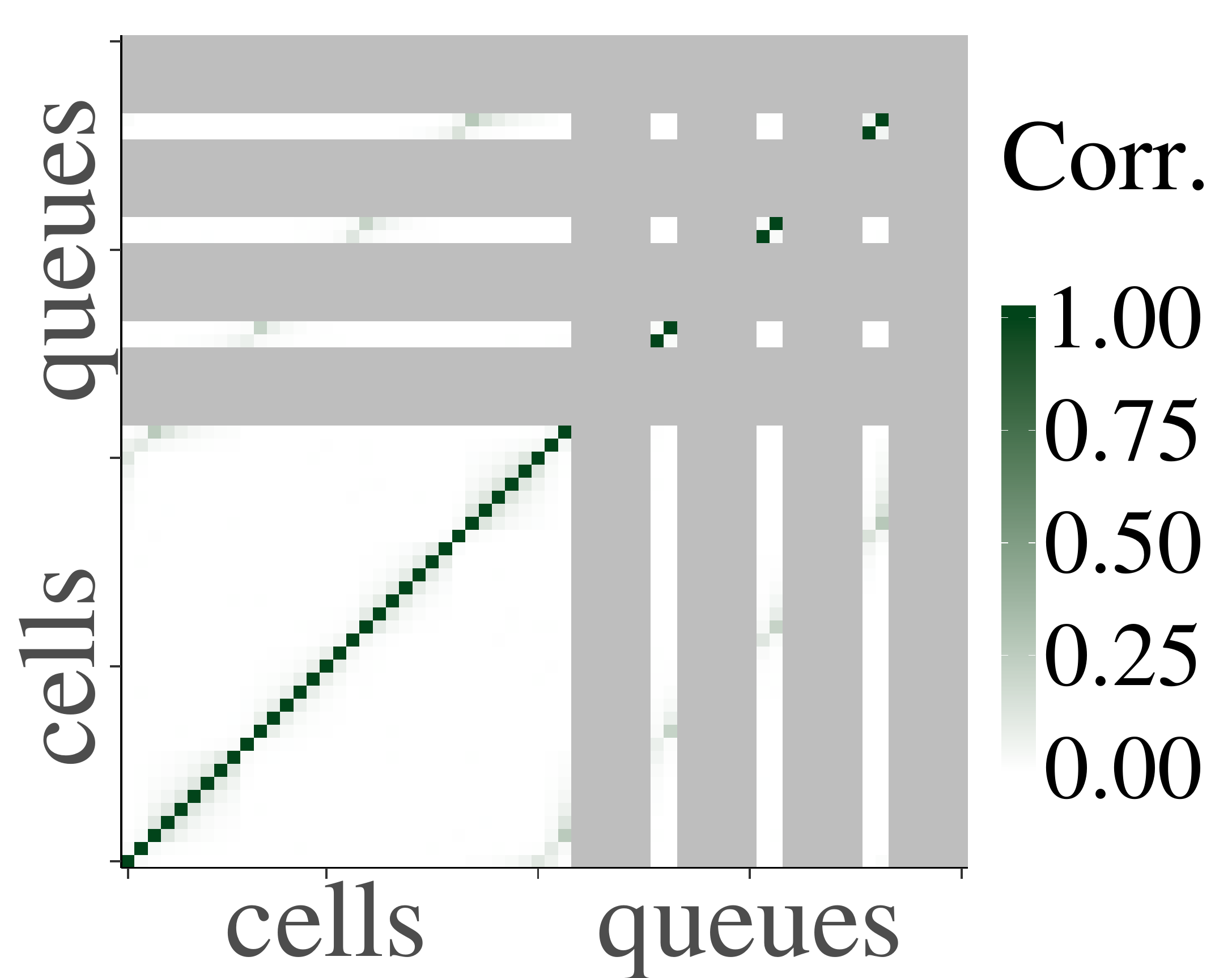} &
		\includegraphics[height = 3.2cm]
			{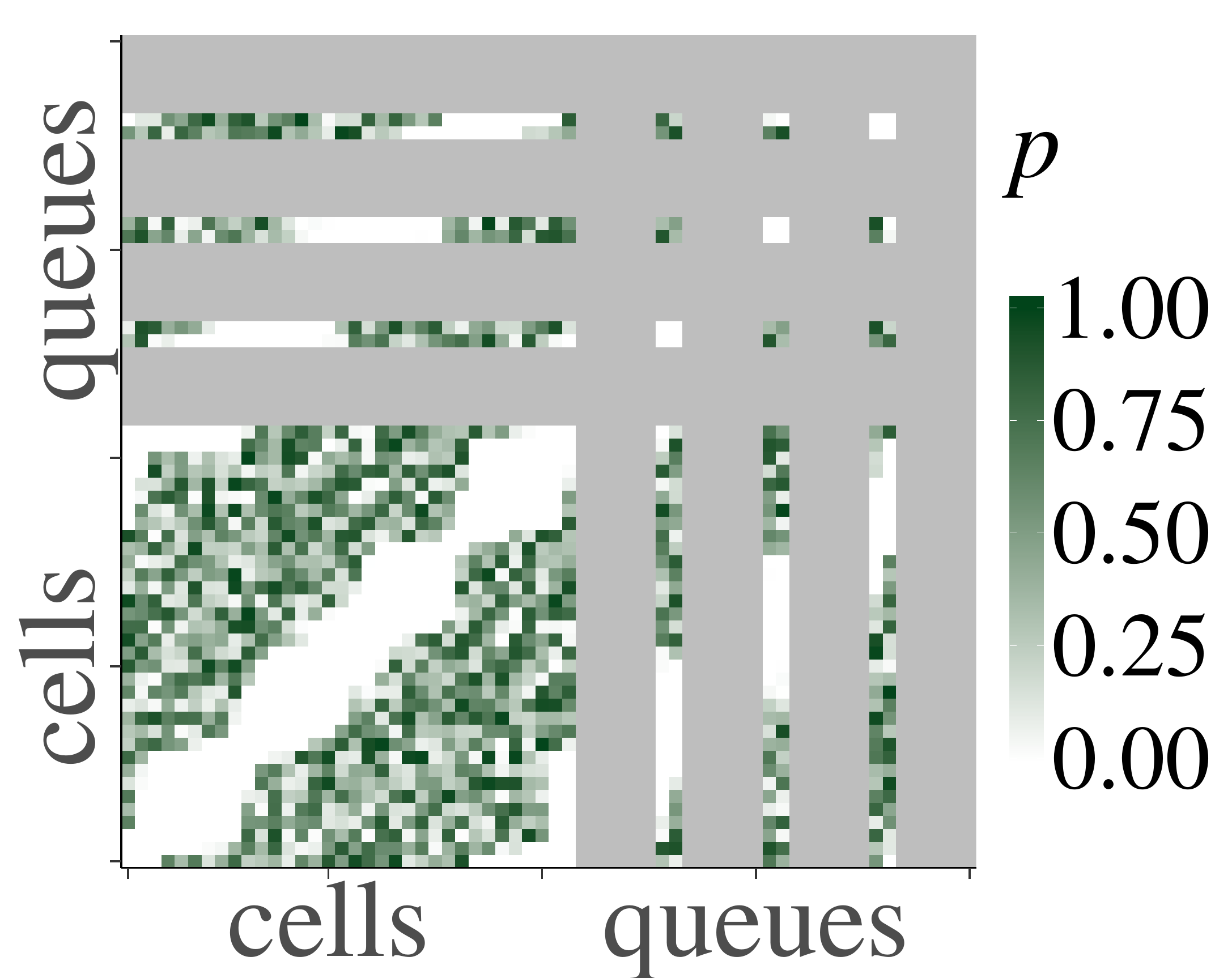} \\
		\includegraphics[height = 3.2cm]
			{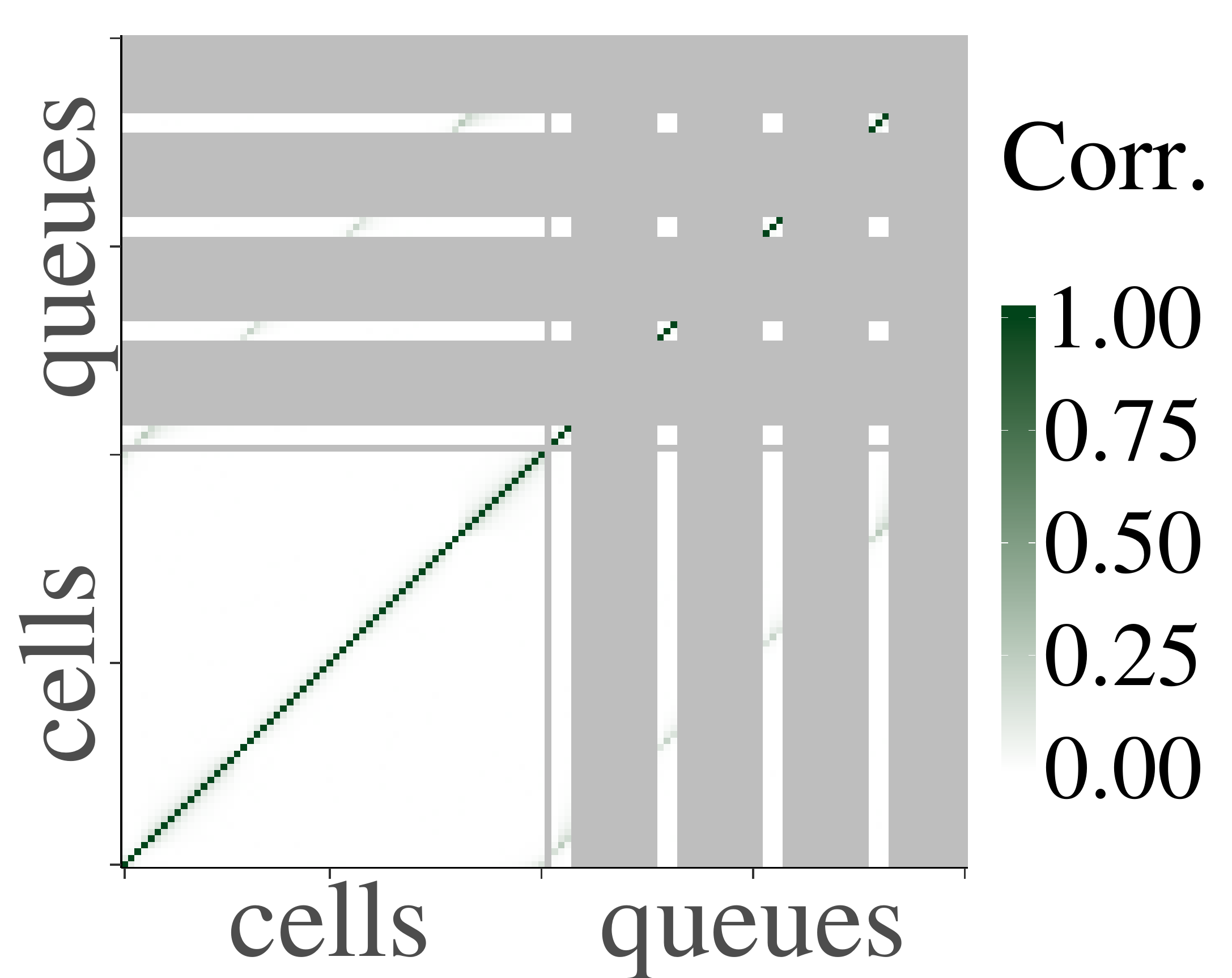} &
		\includegraphics[height = 3.2cm]
			{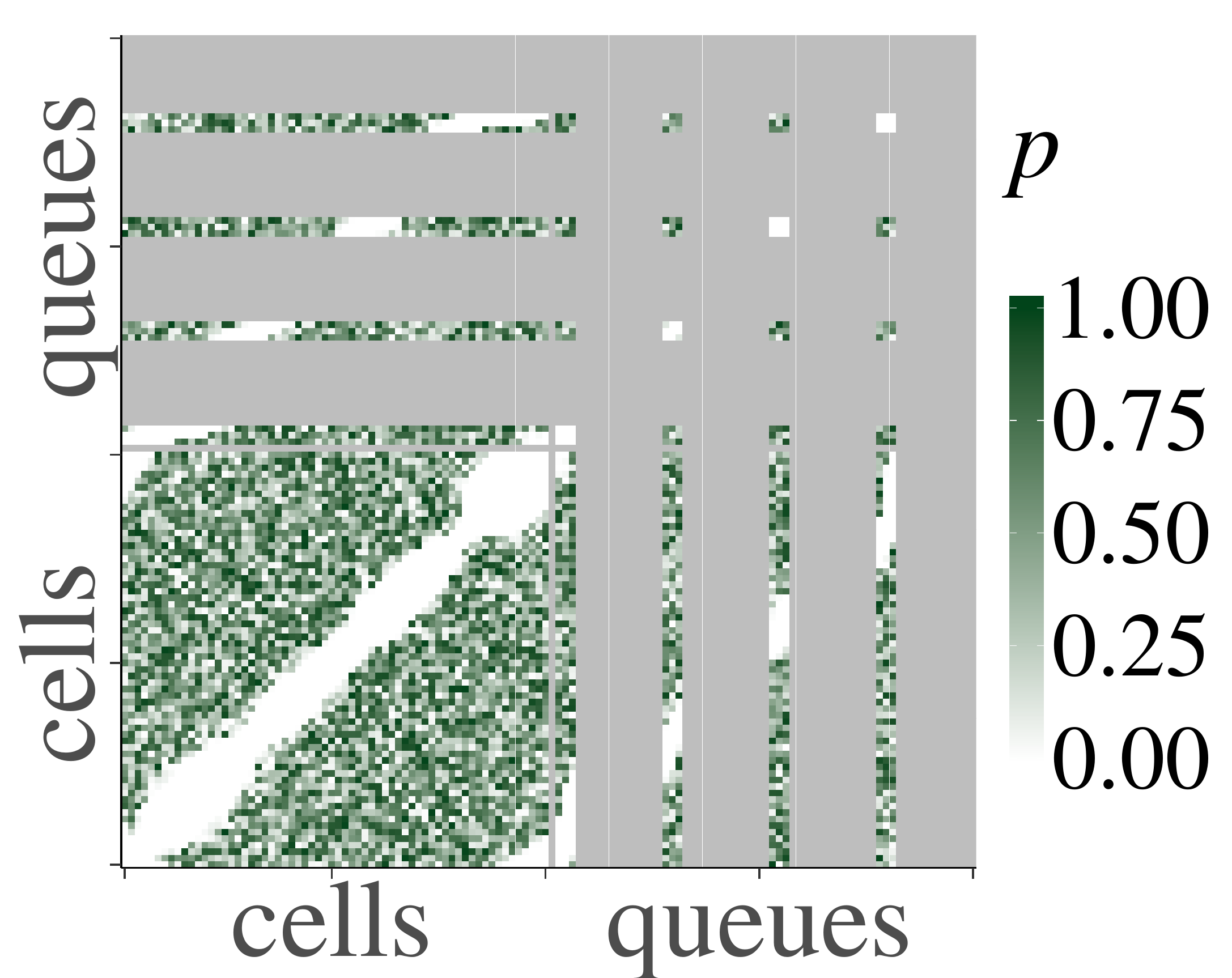} \\
		\includegraphics[height = 3.2cm]
			{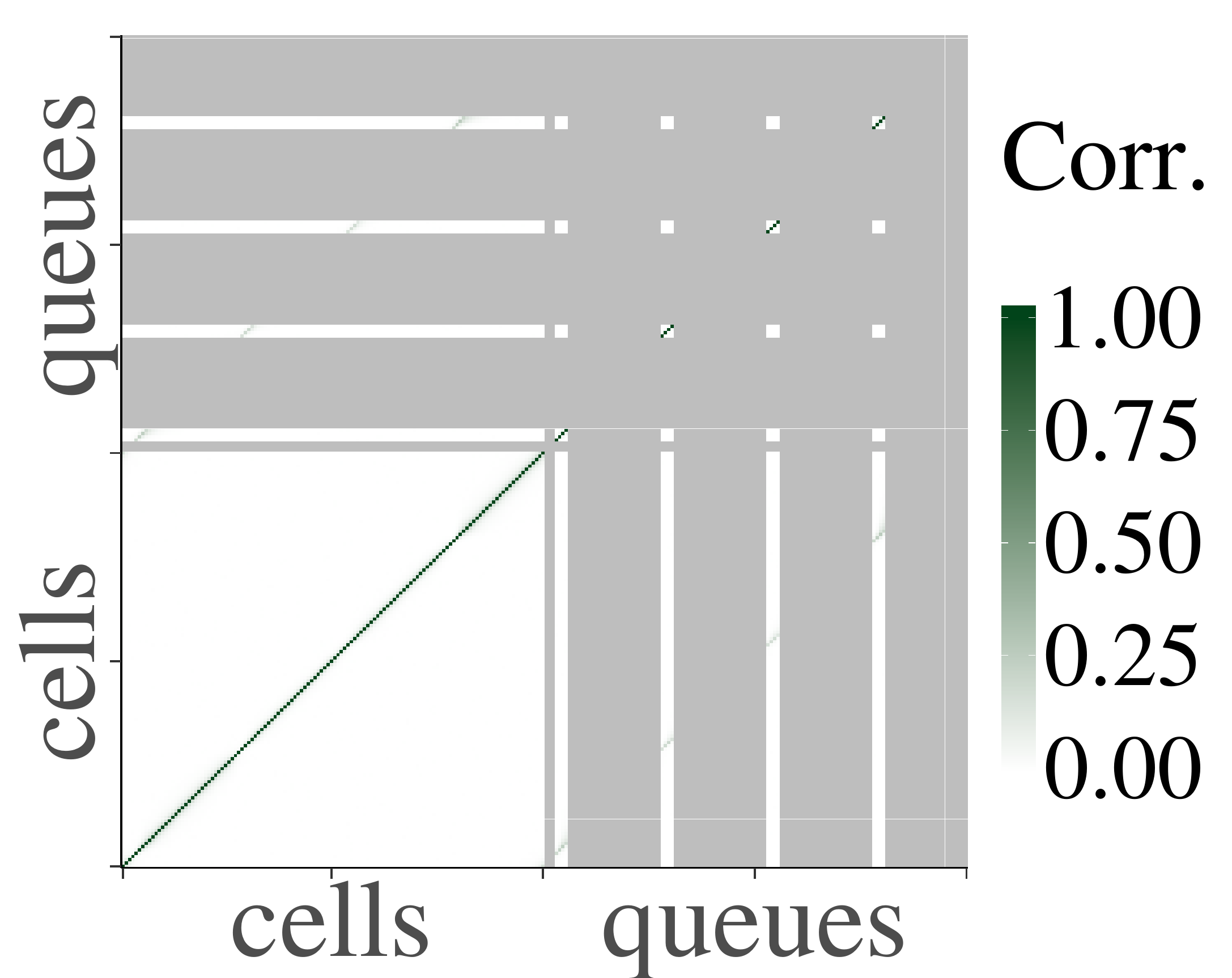} &
		\includegraphics[height = 3.2cm]
			{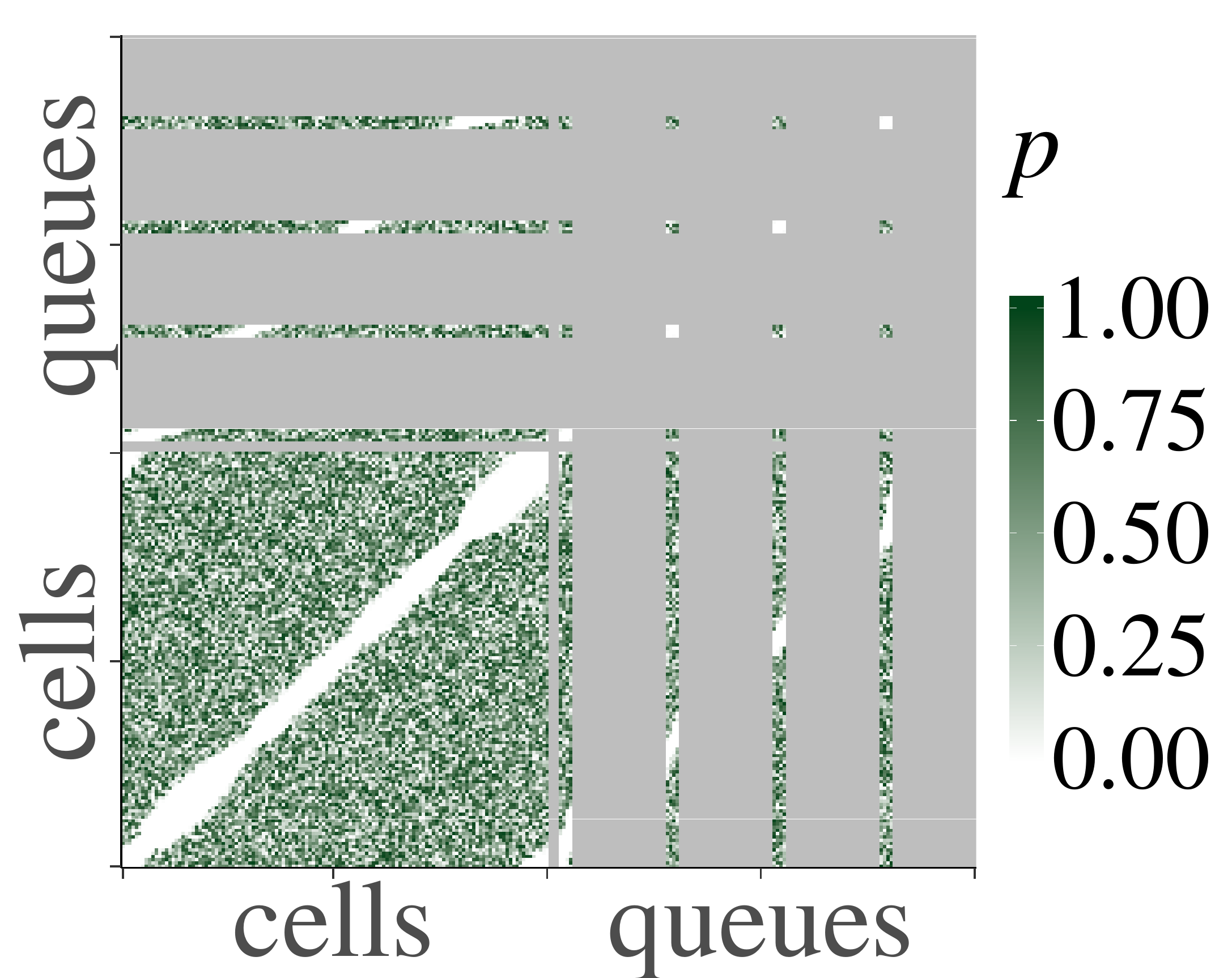} 
	\end{tabular}
	\caption{Heatmaps of correlations (Corr.) on the left and corresponding $p$-values ($p$) on the right, for the heterogeneous case. From top to bottom we have the heatmaps for $L=32$, $L=64$ and $L=128$.}
	\label{fig: set of heterogeneous correlation-heatmaps and p-value}
\end{figure}

If the correlations do decay geometrically, cells and/or queues that are `sufficiently far apart' are approximately independent. To verify this, we 
also perform a statistical test of independence. We use the statistic $t = 
r\sqrt{(n-2)/(1-r^2)}$, where $r$ is the correlation coefficient, and $n$ is 
the sample size. For generally distributed independent random variables and 
large~$n$, the $t$~statistic can be shown to have a Student's~$t$ distribution 
with $n-2$ degrees of freedom \cite[Section~26.20]{kendall1961advanced}. In 
our experiment, we take $n = 10^6$ and use $t$ to test whether the sample 
correlations are significant. We aim to show that the correlations are 
significant over a constant distance independent of~$L$, thus further 
supporting the claim of geometric decay, uniformly in~$L$.

\begin{support*}[of \autoref{claim: correlations}]
We consider the homogeneous case first. In \autoref{fig: set of 
correlation-heatmaps and p-value} we show heatmaps of the correlations and 
their corresponding p-values between cells and queues for $L=32,64,128$. Both 
axes represent a vector containing first the cells, indexed from 1 
through~$L$, and then the queues, indexed from 1 through~$L$. First of all, 
notice that non-trivial correlations do exist, and that for each~$L$ they are 
significant for certain pairs of cells and queues. This confirms that a 
product-form stationary distribution does not apply, as pointed out earlier. 
However, we also see that the dependence is not very strong, since (although 
they are significant according to the p-values) the correlations between 
neighboring cells and/or queues are small. Furthermore, we observe that 
p-values are only significant for correlations between cells and queues that 
are at most (about) distance~10 away from each other. This distance is more or 
less constant in~$L$, which supports our claim that the rate of the decay is 
uniform in~$L$.


\begin{figure}
	\begin{tabular}{c}
		\includegraphics[height = 3.2cm]{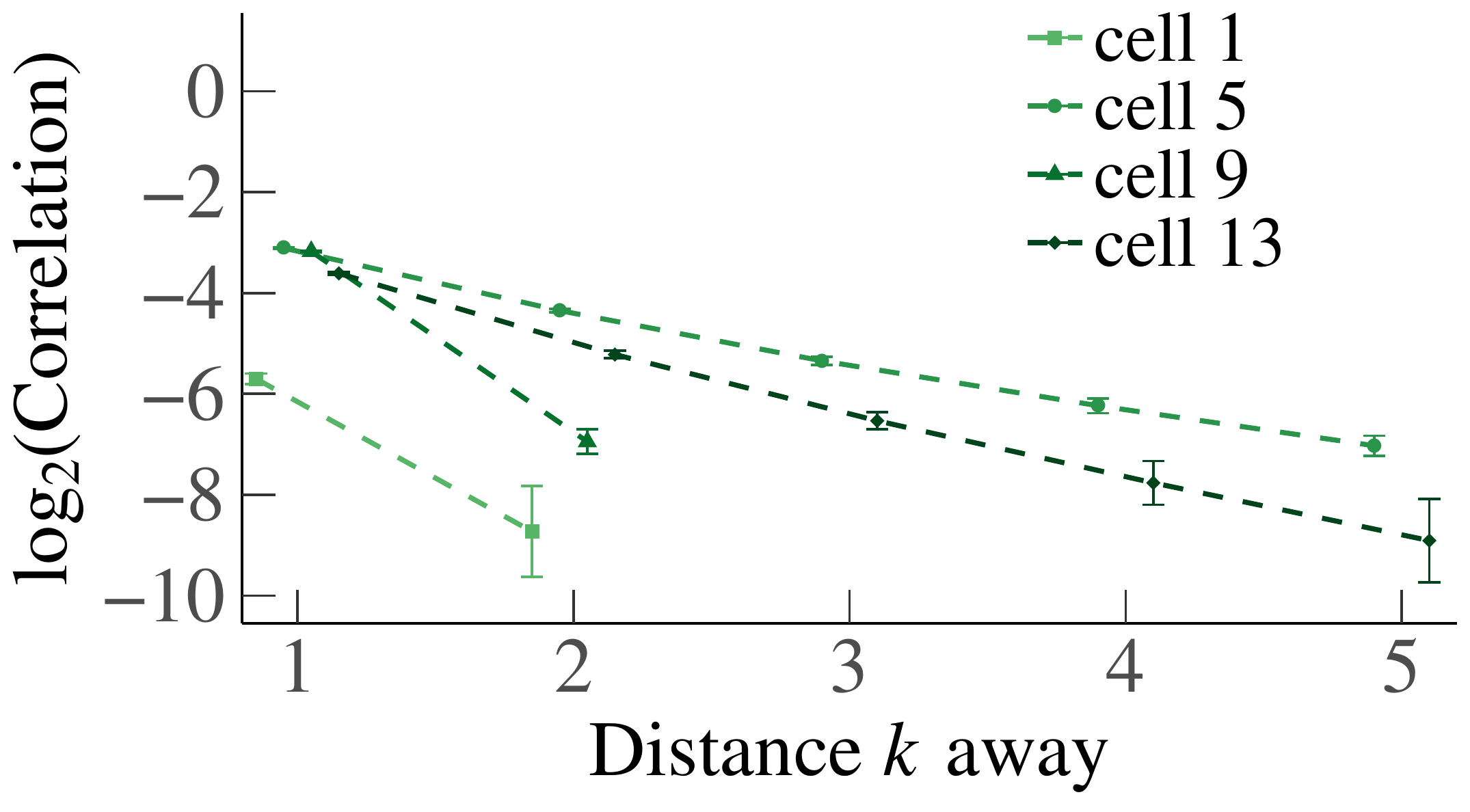} \\
		\includegraphics[height = 3.2cm]{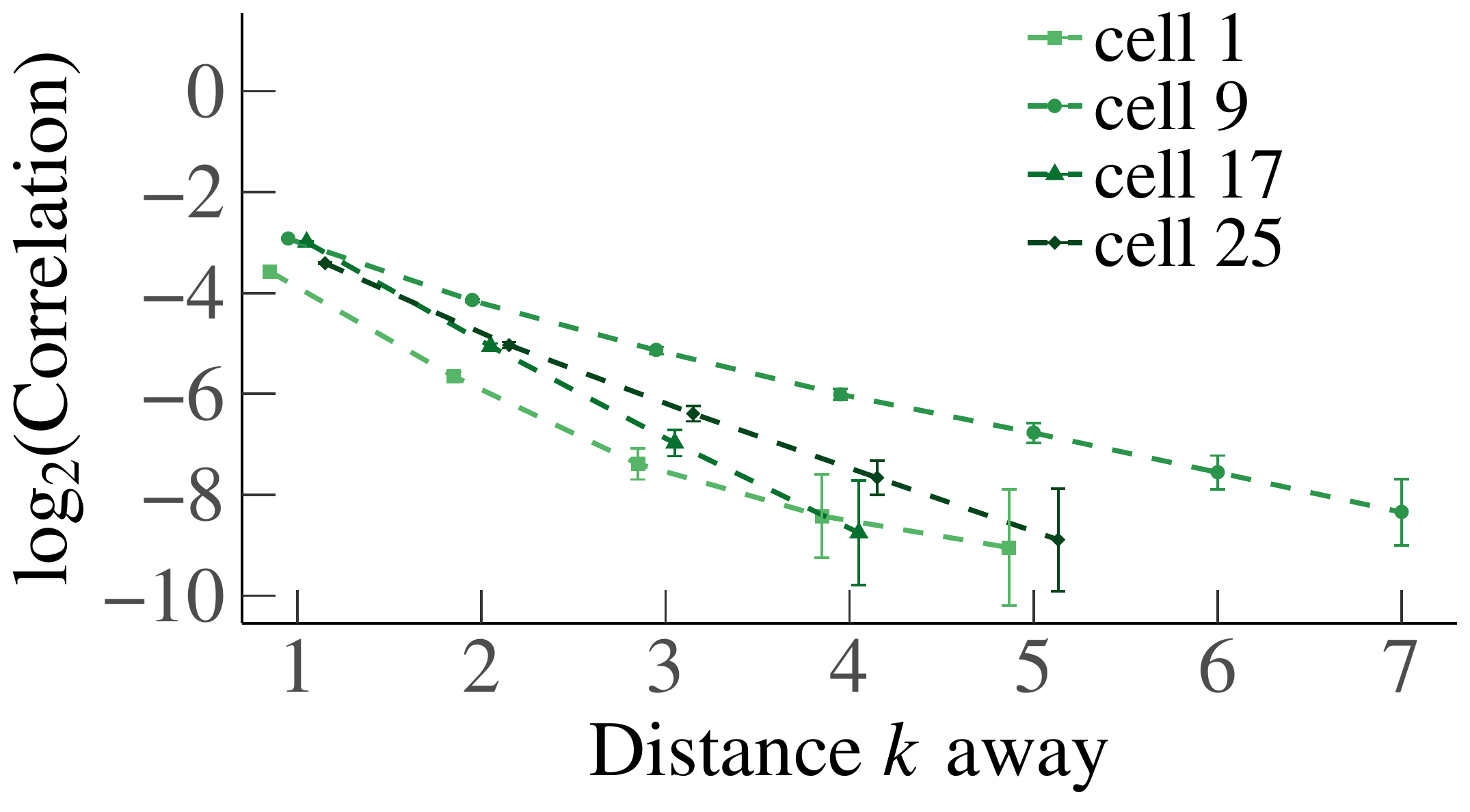} \\
		\includegraphics[height = 3.2cm]{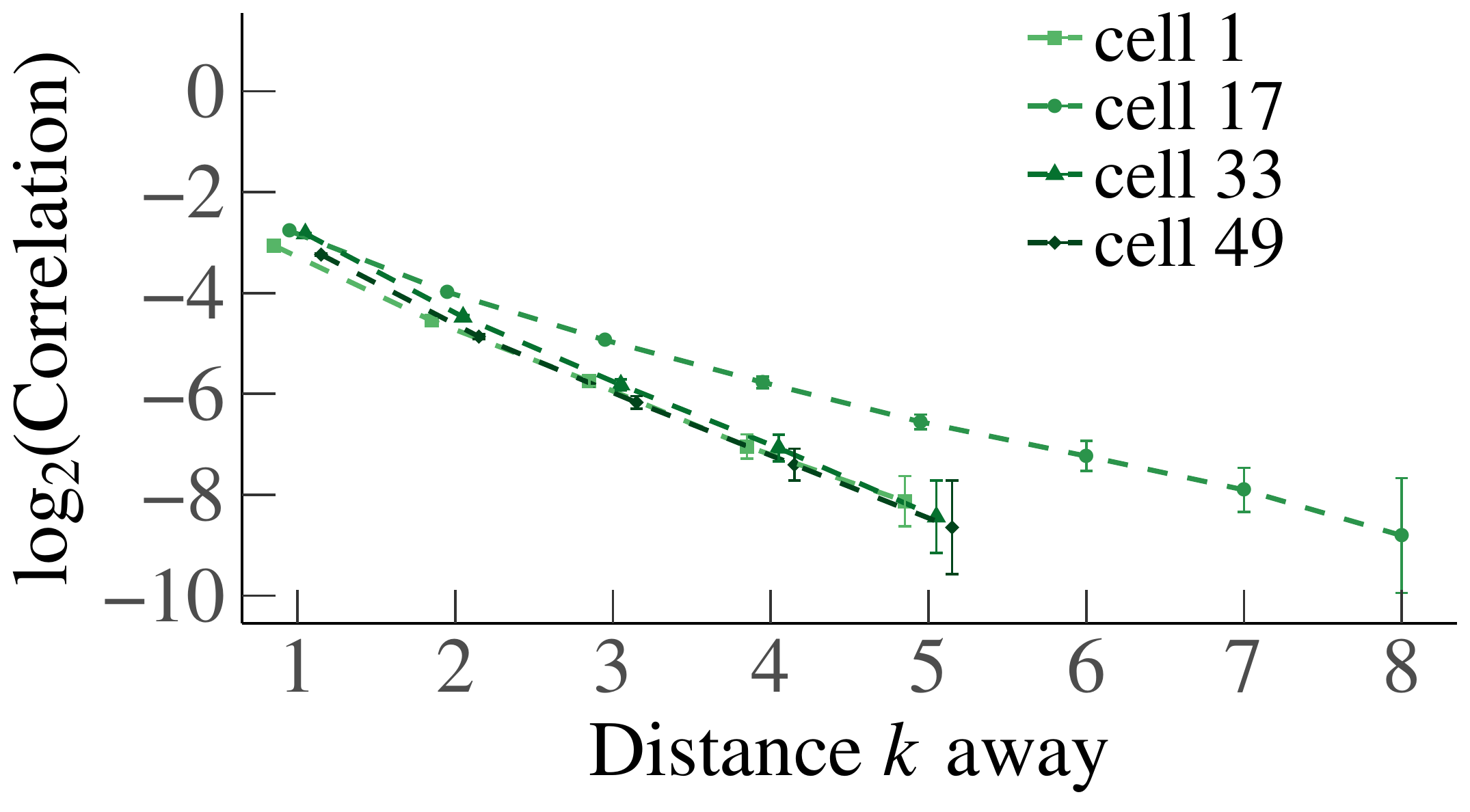} 
	\end{tabular}
	\caption{Decay of correlations, on a log-scale, for the heterogeneous case. From top to bottom we have $L=32$, $L=64$ and $L=128$. For clarity of the figure, the graphs have been shifted horizontally by a small value.}
	\label{fig: set of decay of correlations heterogeneous}
\end{figure}

In \autoref{fig: set of decay of correlations} we have plotted the mean absolute value of the sample correlations between a cell/queue and neighboring downstream cells and/or queues, for $L = 32, 64, 128$ starting from distance $k=1$, with the corresponding standard error represented by error bars. We have tested for a linear relationship, by applying linear regression, yielding $R^2 \approx 0.99$ for every line. We therefore deduce that the decrease is linear, and we can conclude that the mean absolute correlations decay geometrically. Furthermore, we see that the behavior is homogeneous in~$L$. Based on the above, we conclude that our experiments support \autoref{claim: correlations} numerically in the homogeneous case.

\begin{figure}
	\begin{tabular}{cc}
		\includegraphics[width = 4.1cm]
			{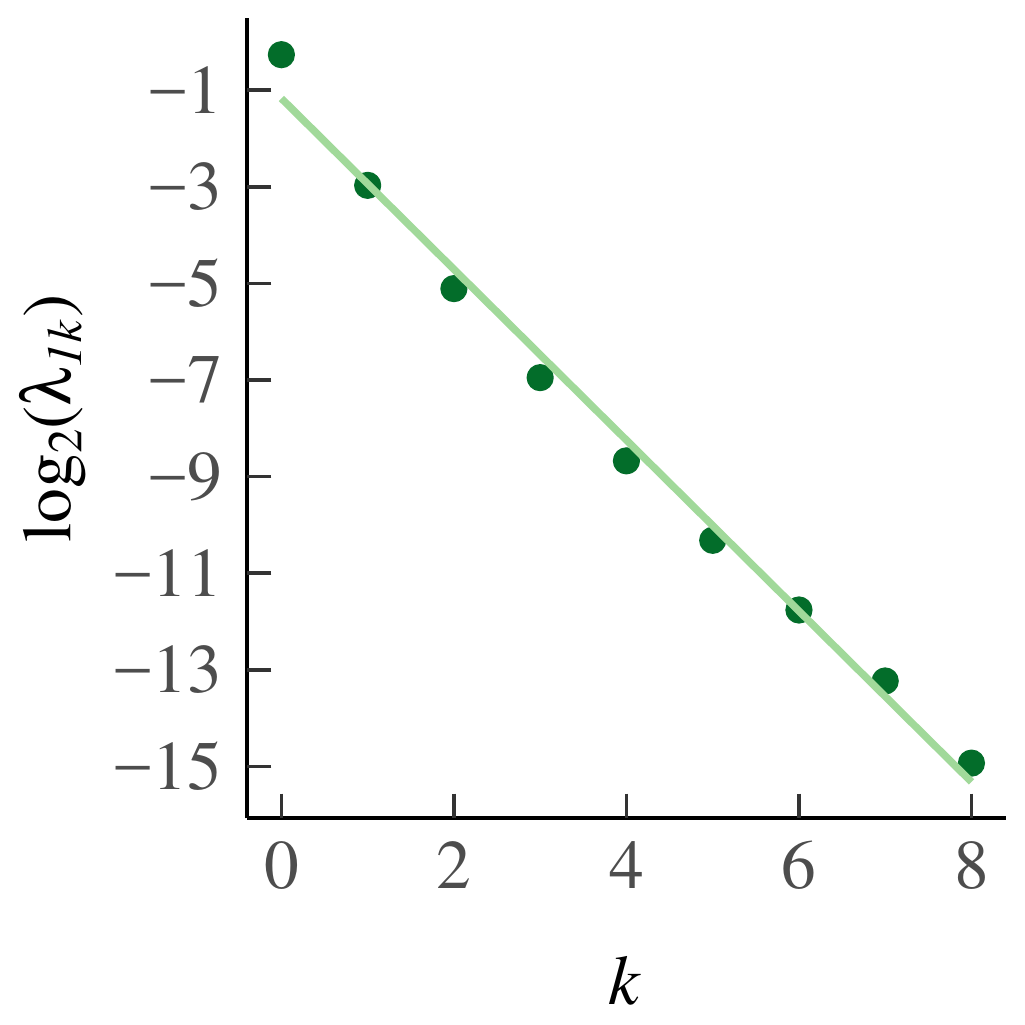} &
		\includegraphics[width = 4.1cm]
			{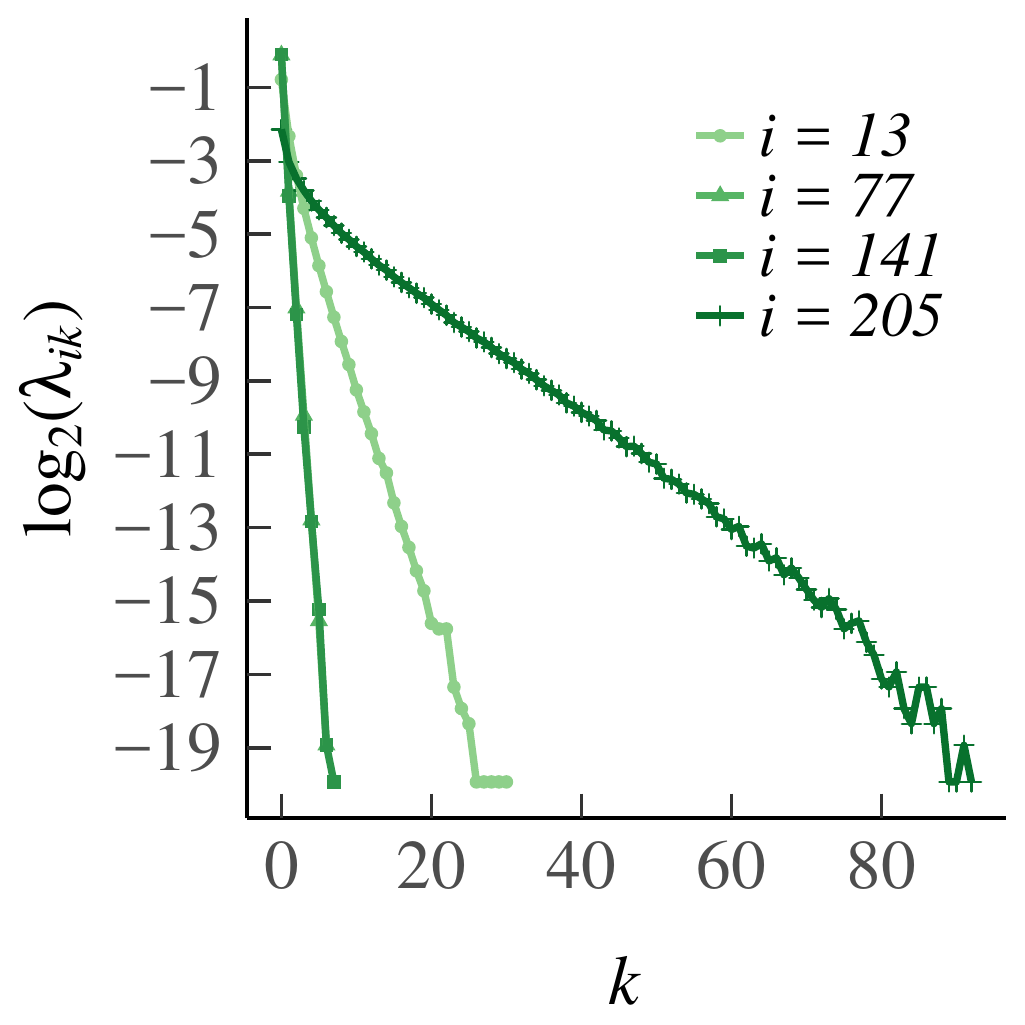}
	\end{tabular}
	\caption{Marginal queue distribution on log scale for the homogeneous case 
	(left) and heterogeneous case (right), along with the best-fit line for 
	the homogeneous case.}
	\label{fig: Distribution Queue log scale}
\end{figure}

For the heterogeneous case, we likewise present a set of heatmaps of the 
correlations and their corresponding p-values in \autoref{fig: set of 
heterogeneous correlation-heatmaps and p-value}. As some queues are by 
construction empty in the heterogeneous case, their correlations are depicted 
in gray in the heatmaps. As in the homogeneous case, the results of our simulations numerically support \autoref{claim: correlations}. To analyze the decay of the correlations, we have also plotted on a log scale, for the first four cells that are a distance~$L/8$ apart from each other, their mean absolute correlations with neighboring upstream cells, and their standard errors, as a function of the distance; see \autoref{fig: set of decay of correlations heterogeneous}. We have applied linear regression, which gave $R^2 \approx 0.99$ for every graph, except for the graph of cell 1 for $L=64$, which has $R^2 \approx 0.94$. The results confirm a linear decay on a log scale. Therefore, as in the homogeneous case, we find numerical support for \autoref{claim: correlations}.

\end{support*}

\subsection{Queue distribution}

A natural quantity to study is the (marginal) queue length distribution of the 
system. Because of the dependencies in the system, one cannot derive the 
marginal queue length distribution analytically; likewise,  no mean-value 
analysis is possible to capture the mean queue length. However,  because of 
the weak dependence, one would expect the queue distribution to approximately have a 
geometric tail. We, therefore, claim the following:

\begin{claim}
	\label{claim: queues}
	All queues have marginal stationary distributions with a tail that is close to geometric.
\end{claim}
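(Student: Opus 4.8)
The plan is to collapse the high-dimensional chain to an effective one-dimensional description of a single queue and then recognise that description as a discrete-time birth-death chain with geometric stationary law. Fix a cell~$i$ with $p_i>0$ (if $p_i=0$ the queue is identically empty and the claim is trivial). Reading off the three update cases of \autoref{sec: model}, the length of queue~$i$ increases by one exactly when a car arrives (probability~$p_i$) while cell~$i$ is occupied, decreases by one exactly when cell~$i$ is empty and no car arrives, and is otherwise unchanged. The only coupling to the rest of the roundabout enters through the single binary quantity ``cell~$i$ empty or not,'' whose marginal probability is $\pi_{i0}$ by \autoref{Prop: marginal distribution}.

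The first and decisive step is to treat the emptiness of cell~$i$ as an external Bernoulli$(\pi_{i0})$ source independent of the queue length. Writing $a=p_i$ and $e=\pi_{i0}$, the queue length then becomes a birth-death chain on~$\Z_+$ with up-probability $(1-e)a$ and down-probability $e(1-a)$ at every level; the boundary at~$0$ obeys the same detailed-balance relation, since an empty queue in front of an empty cell stays empty. Solving detailed balance gives $\PP(Q_i=n)=(1-\rho)\rho^n$ with
\[
	\rho = \frac{p_i(1-\pi_{i0})}{\pi_{i0}(1-p_i)},
\]
which is genuinely geometric, not merely in the tail. Reassuringly, $\rho<1$ is equivalent to the stability condition $p_i<\pi_{i0}$ of \autoref{sec: preliminaries}, so the approximation predicts a proper distribution exactly in the stable regime.

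To go beyond independence I would invoke the robustness of the geometric tail under modulation. Even when the emptiness of cell~$i$ is not i.i.d.\ but is driven by a finite-state (or effectively finite-memory) modulating process, the pair (queue length, local environment) is a quasi-birth-death process, whose stationary vector is matrix-geometric; marginalising over the environment leaves a tail $\PP(Q_i=n)\sim C\eta^n$ for some $\eta\in(0,1)$. Thus the \emph{tail} should remain geometric quite generally, with only the pre-asymptotic part (small~$n$) distorted by the dependence. The weak, geometrically decaying correlations of \autoref{claim: correlations} are precisely what make the independence approximation, and hence the explicit ratio above, a good guide to the true law.

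The hard part is that this cannot be closed rigorously: the emptiness of cell~$i$ is correlated with~$Q_i$ at the \emph{same} index, a dependence that \autoref{claim: correlations} (which concerns pairs at positive distance~$k$) does not control, and cell~$i$ is itself fed by the full upstream dynamics, for which we possess no joint stationary law. I therefore expect the distribution to be only \emph{close} to geometric, with deviations concentrated near the origin, and I would support the claim as elsewhere in the paper: by simulating each queue, plotting its empirical distribution on a logarithmic scale, and checking that the tail is well approximated by a straight line of slope $\log\rho$ together with a reported goodness of fit.
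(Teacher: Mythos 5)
Your proposal is correct in the only sense available here, and its verification step is exactly what the paper does: \autoref{claim: queues} is never proved --- the paper supports it by simulating the model (at $L=256$, with sample size $10^6$), plotting the empirical queue distributions on a log scale, and checking by linear regression that the tail is linear. Where you genuinely differ is in the analytical scaffolding. The paper offers only the qualitative remark that weak dependence should produce a near-geometric tail, and explicitly asserts that no analytic derivation or mean-value analysis of the queue length is possible; you instead give a quantitative mean-field reduction --- treat the emptiness of cell~$i$ as an i.i.d.\ Bernoulli$(\pi_{i0})$ source, obtain a birth-death chain, and solve detailed balance to get an exactly geometric law with ratio $\rho = p_i(1-\pi_{i0})/(\pi_{i0}(1-p_i))$, with $\rho<1$ precisely under the stability condition $p_i<\pi_{i0}$ --- plus a QBD/matrix-geometric argument for why the tail (but not the head) should survive the true dependence. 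This buys an explicit predicted decay rate and an explanation of the small-$k$ deviation from geometric that the paper observes empirically but leaves unexplained, and your acknowledgment that the same-index correlation between $Q_i$ and the occupancy of cell~$i$ blocks a rigorous closure matches the paper's stated obstruction. Two cautions on your experimental step: you propose to test that the tail is linear \emph{with slope} $\log\rho$, but the very dependence you identify can shift the true decay rate away from the mean-field value, so you should test linearity (which is what the claim asserts) separately from the slope prediction (which is only your heuristic) --- the paper fits the slope freely by regression. Also, the paper found it necessary to rescale the arrival probabilities by a factor $\alpha$ chosen so that $\pi_{i0}(\alpha)-\alpha p_i \approx 0.1$, since otherwise the tail probabilities are too small to estimate from $10^6$ samples; your experiment would need the same device.
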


To verify \autoref{claim: queues}, we have simulated the roundabout for $L = 
256$. To estimate the tail of the queue distributions in a sample of size $n = 
10^6$ sufficiently accurately, we have to scale the~$p_i$ by a 
factor~$\alpha$, in both the homogeneous and the heterogeneous case. We choose 
$\alpha$ such that $\pi_{i0}(\alpha) - \alpha p_i  \approx 0.1$ for each $1 
\leq i \leq L$. From our data, we estimate the marginal distributions of a set 
of queues with equal distance between them, and analyze the tail.

\begin{support*}[of \autoref{claim: queues}]
The results in the homogeneous case are shown in \autoref{fig: Distribution 
Queue log scale} (left), where $\lambda_{ik}$ on the vertical axis denotes the 
stationary probability of the event that queue~$i$ has length~$k$. The figure 
shows the distribution on a log scale along with its regression line. The 
slight deviation from the linear relation for small~$k$ shows that the 
distribution is not exactly geometric. However, we observe that the tail is 
indeed geometric, as the plot is very close to the regression line and linear 
in the tail, until the estimation errors kick in, thus confirming 
\autoref{claim: queues}.

For the heterogeneous case, \autoref{fig: Distribution Queue log scale} 
(right) shows  the results on a log scale. That is, we have plotted the 
distribution of one queue in each of the four arrival zones (i.e., the four 
on-ramps on the roundabout) for $L=256$. The legend indicates which queues are 
considered. We see that each distribution is close to a linear decay on a log scale for $k$ above, say,~4. For $i=205$, there seems to be a small deviation from a linear line in the tail of the distribution, though performing linear regression yields an $R^2$ equal to $0.9902$. Thus the analysis indicates that, in practice, the distribution can be considered as having geometrically vanishing tails, supporting \autoref{claim: queues}.

\end{support*}

\section{Scaling Limit for Cells \label{sec: Gaussian limit}}

In this section, we formulate claims about the stationary state of the cells 
in the regime $L \to \infty$. More specifically, we claim that for each 
division of the roundabout into segments, the occupation of these segments 
follows a joint Gaussian distribution in the limit. This Gaussian limit 
provides an approximation to the stationary distribution of the number of 
occupied cells, on every segment of the roundabout. This knowledge is 
particularly useful when designing the roundabout; for instance, a performance 
target could concern the maximum utilization of the roundabout.

\begin{figure}
	\includegraphics[width = 7cm]{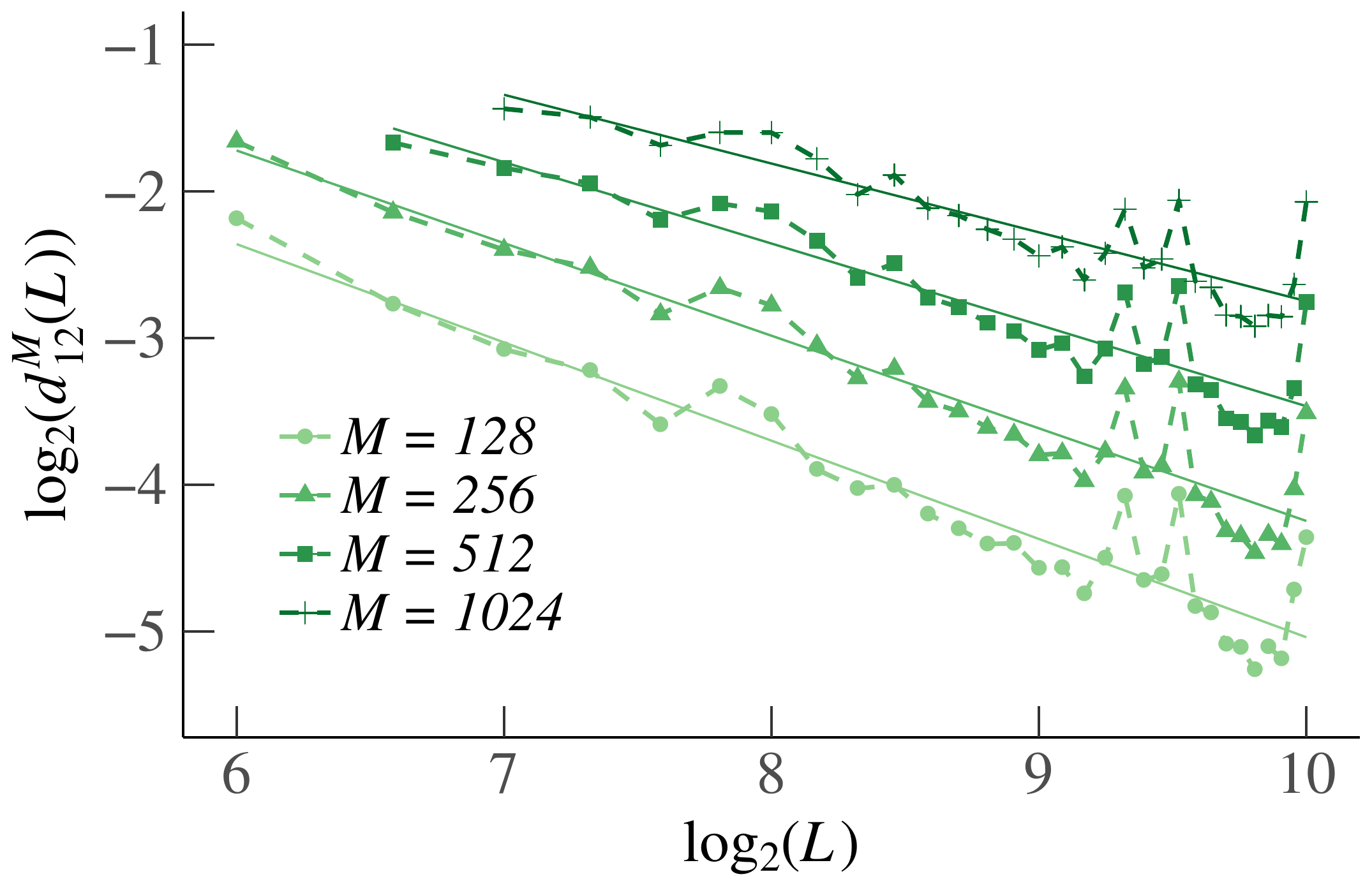}
	\caption{Graphs of $d_{12}^M(L)$ in the homogeneous case, for $M \in 
	\{128,256,512,1024\}$, along with their best-fit line.}
	\label{fig: TV_dist normal homogeneous}
\end{figure}

We first introduce some notation. Let $T^L_k$ be the random variable that 
counts the number of vacant cells up to cell~$k$: with $C_i$ the state of 
cell~$i$, and $\delta_{jk}$  the Kronecker delta (i.e., $\delta_{jk}=1$ if 
$j=k$, and $\delta_{jk}=0$ if $j\neq k$),
\[
	T^L_k := \sum_{i=1}^k \delta_{C_i,0}.
\]
Observe that this is a sum of 0-1 random variables with 
expectations~$\pi_{i0}$. For $x\in (0,1]$, write $\pi_L(x) := 
\pi_{\ceil{xL},0}$ and $\sigma^2_L(x) := \pi_L(x)(1-\pi_L(x))$. Put
\[
	s^2_L
	:= \frac1L \sum_{i=1}^L \sigma^2_L(i/L)
	= \int_0^1 \sigma^2_L(x) \dx,
	\qquad s_L\geq0,
\]
and \[
	t^L_k
	:= \frac1{L s^2_L} \sum_{i=1}^k \sigma^2_L(i/L)
	= \frac1{s^2_L} \int_0^{k/L} \sigma^2_L(x) \dx.
\]
Now let $T^L\colon [0,1]\to\R$ be the random continuous function that is 
linear on each interval $[t^L_{k-1},t^L_k]$, $k=1,\dots,L$, and has values 
\[T^L(t^L_k) = \frac{T^L_k - \E(T^L_k)}{\sqrt{Ls^2_L}}\] at the points of 
division. Then our claim is as follows:

\begin{claim}
	\label{claim: Gaussian limit}
	As $L \to \infty$, $T^L$ converges in distribution to a time-inhomogeneous 
	Brownian motion~$\widehat{T}$ on~$[0,1]$ with the representation
	\[
		\widehat{T}(t) = \int_0^t \eta(u) \diff B_u,
	\]
	interpreted as an It\^o integral with respect to a standard Brownian 
	motion~$B$, where $\eta$ is a deterministic continuous function 
	on~$[0,1]$.
\end{claim}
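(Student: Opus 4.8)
The plan is to prove \autoref{claim: Gaussian limit} by the classical two-step route for weak convergence in $C[0,1]$: establish convergence of the finite-dimensional distributions of $T^L$ to those of $\widehat T$, and establish tightness of the sequence $\{T^L\}_L$; by Prokhorov's theorem these together give $T^L\Rightarrow\widehat T$. Since $\widehat T(t)=\int_0^t \eta(u)\diff B_u$ is a centered continuous Gaussian process with independent increments and $\Var(\widehat T(t)-\widehat T(s))=\int_s^t \eta^2(u)\diff u$, and since $T^L(0)=0$, the finite-dimensional statement reduces (via the linear map from increments to partial sums) to showing that over disjoint intervals the increments $T^L(t)-T^L(s)$ are asymptotically independent, centered, and Gaussian with the correct variances. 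Each such increment equals $(Ls^2_L)^{-1/2}\sum_{i\in I}(\delta_{C_i,0}-\pi_{i0})$ over a contiguous block $I$ of cells fixed by the time change $t^L_k$, so the whole claim is a functional central limit theorem for the spatially dependent, bounded $0$--$1$ field $\{\delta_{C_i,0}\}$ under the stationary measure.

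For the finite-dimensional distributions I would combine the big-block/small-block (Bernstein) method with the geometric decay of spatial correlations asserted in \autoref{claim: correlations}. Partition each relevant range of cells into alternating long blocks of length $\sim L^{a}$ and short separating blocks of length $\sim L^{b}$ with $b<a<1$. Because the covariances decay geometrically in the cell distance, the short blocks carry asymptotically negligible variance while the long-block sums decouple: up to an error controlled by the mixing rate, their joint law is that of independent summands. Each long-block sum has standard deviation of order $L^{(a-1)/2}\to0$ and is thus uniformly asymptotically negligible, so the Lindeberg condition holds and a Lindeberg--Feller CLT gives asymptotic normality. Summing the covariances shows that the variance of the increment over $[s,t]$ converges to $\int_s^t \eta^2(\tau)\diff\tau$, where, in the time-changed coordinate, $\eta^2$ is the limiting ratio $v_L(x)/\sigma^2_L(x)$ of the local long-run variance $v_L(x)=\sigma^2_L(x)+2\sum_{m\ge1}\Cov(\delta_{C_{\ceil{xL}},0},\delta_{C_{\ceil{xL}+m},0})$ to the marginal variance $\sigma^2_L(x)$ that drives the clock $t^L_k$, with $x$ the spatial point carried to time $\tau$ by that time change; piecewise continuity of $F^c_x$ (assumed in \autoref{sec: profiles}) then yields continuity of $\eta$. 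Applying the same decoupling across the short blocks that separate different intervals delivers the asymptotic independence of the increments.

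Tightness in $C[0,1]$ I would obtain from a moment criterion of Kolmogorov--Chentsov/Billingsley type: it suffices to bound $\E\bigl|T^L(t)-T^L(s)\bigr|^{4}\le C\,(t-s)^{2}$ uniformly in $L$ for $s,t$ on the interpolation grid, the behaviour between grid points being dictated by the linear interpolation. Such a fourth-moment bound for a sum of geometrically weakly dependent bounded variables follows from a Rosenthal/Yokoyama-type moment inequality, whose right-hand side is of order (block length)$^2/(Ls^2_L)^2$; since the block for $[s,t]$ contains $\sim (t-s)\,Ls^2_L$ units of accumulated variance, this is proportional to $(t-s)^2$, exactly because the time change normalizes by accumulated variance. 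Combined with the finite-dimensional convergence, this establishes $T^L\Rightarrow\widehat T$.

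The main obstacle is that everything above rests on a genuine \emph{quantitative mixing} property of the stationary field---decoupling of well-separated blocks together with higher-moment control---whereas \autoref{claim: correlations} asserts (and only numerically supports) geometric decay of \emph{second-order} correlations, and moreover for the functionals $\widetilde C_i$ and $Q_i$ rather than for the specific indicators $\delta_{C_i,0}$ entering $T^L$. Upgrading this to an $\alpha$- or $\rho$-mixing bound with an explicit geometric rate, uniform in $L$ and valid for these indicator functionals, is the crux: because the stationary law is not of product form and is not known in closed form, the mixing coefficients cannot be read off a formula. The natural route is a spatial coupling argument exploiting the \emph{local} update rules---showing that discrepancies between two configurations agreeing far from a block propagate at bounded speed and are damped by the departure probabilities $q_{ij}$---but converting this dynamical locality into exponential decay of \emph{spatial} dependence in the stationary measure, uniformly as $L\to\infty$ (and dealing with the negligible seam where cell~$L$ meets cell~$1$), is precisely the analytic step the paper does not carry out.
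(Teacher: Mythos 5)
Your proposal takes a fundamentally different route from the paper, because the paper contains no proof of this statement at all: it is labeled a \emph{Claim} precisely because the authors state they cannot prove it (they lack any analytic handle on the joint stationary law), and what stands in place of a proof is an intuition paragraph plus two simulation experiments --- asymptotic independence of increments supported by power-law decay of the truncated total-variation-type distance $d^M_{kj}(L)$, and asymptotic normality supported by the empirical-CDF sup-distance $d^{\sup}_k(L)$ together with the novel $\E M(L)$ chi-squared procedure. Your plan (finite-dimensional convergence plus tightness, Bernstein big-block/small-block decoupling, Lindeberg--Feller, a Rosenthal-type fourth-moment bound for the Kolmogorov--Billingsley tightness criterion) is the natural analytic formalization of the paper's own heuristic, which gestures at an extension of Donsker's theorem and the CLT for weakly dependent stationary sequences. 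Indeed, your identification of $\eta^2$ as the ratio of the local long-run variance to the marginal variance $\sigma^2_L$ in the time-changed coordinate is sharper than the paper's $t + 2ta(t)$ computation, which conflates the two normalizations.

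That said, your proposal is a conditional roadmap rather than a proof, and the condition is exactly the one you flag: every step requires a quantitative mixing bound (with geometric rate, uniform in $L$) for the stationary field of vacancy indicators $\delta_{C_i,0}$, whereas what the paper offers is \autoref{claim: correlations}, which (i) is itself only numerically supported, never proved; (ii) concerns only second-order correlations, which do not by themselves yield the block decoupling or higher-moment control you invoke; and (iii) is stated for the functionals $\widetilde{C}_i$ and $Q_i$ rather than for $\delta_{C_i,0}$. This is not a pedantic distinction --- converting dynamical locality of the update rules into exponential spatial mixing of a non-product stationary measure, uniformly as $L\to\infty$, is precisely the obstruction that forced the authors to retreat to statistical support, so your ``main obstacle'' paragraph is an accurate diagnosis of why no proof exists in the paper. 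Two smaller soft spots: piecewise continuity of $F^c_x$ (the paper's assumption, needed because $\varrho$ and the $F_x$ are genuinely discontinuous in the heterogeneous case) gives at best piecewise continuity of the limiting $\pi(x)$, so continuity of $\eta$ as asserted in the claim needs a separate argument in the time-changed variable; and the fourth-moment increment bound must be checked separately for $t-s$ below the grid scale $1/L$, the standard caveat for linearly interpolated partial-sum processes. In short: right strategy, honest and correct identification of the gap, but the gap is real and is the whole difficulty.
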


We write `time-inhomogeneous', where obviously in this context `time' refers 
to the position on the roundabout.

\begin{remark}
	Instead of counting vacant cells, one could also count cells containing a 
	car of a type between $\ceil{aL}$ and~$\ceil{bL}$, for fixed $a$ and~$b$ 
	satisfying $0<a<b\leq 1$. The corresponding random continuous function 
	again converges to a time-inhomogeneous Brownian motion.
\end{remark}

\begin{figure}
	\includegraphics[width = 7cm]{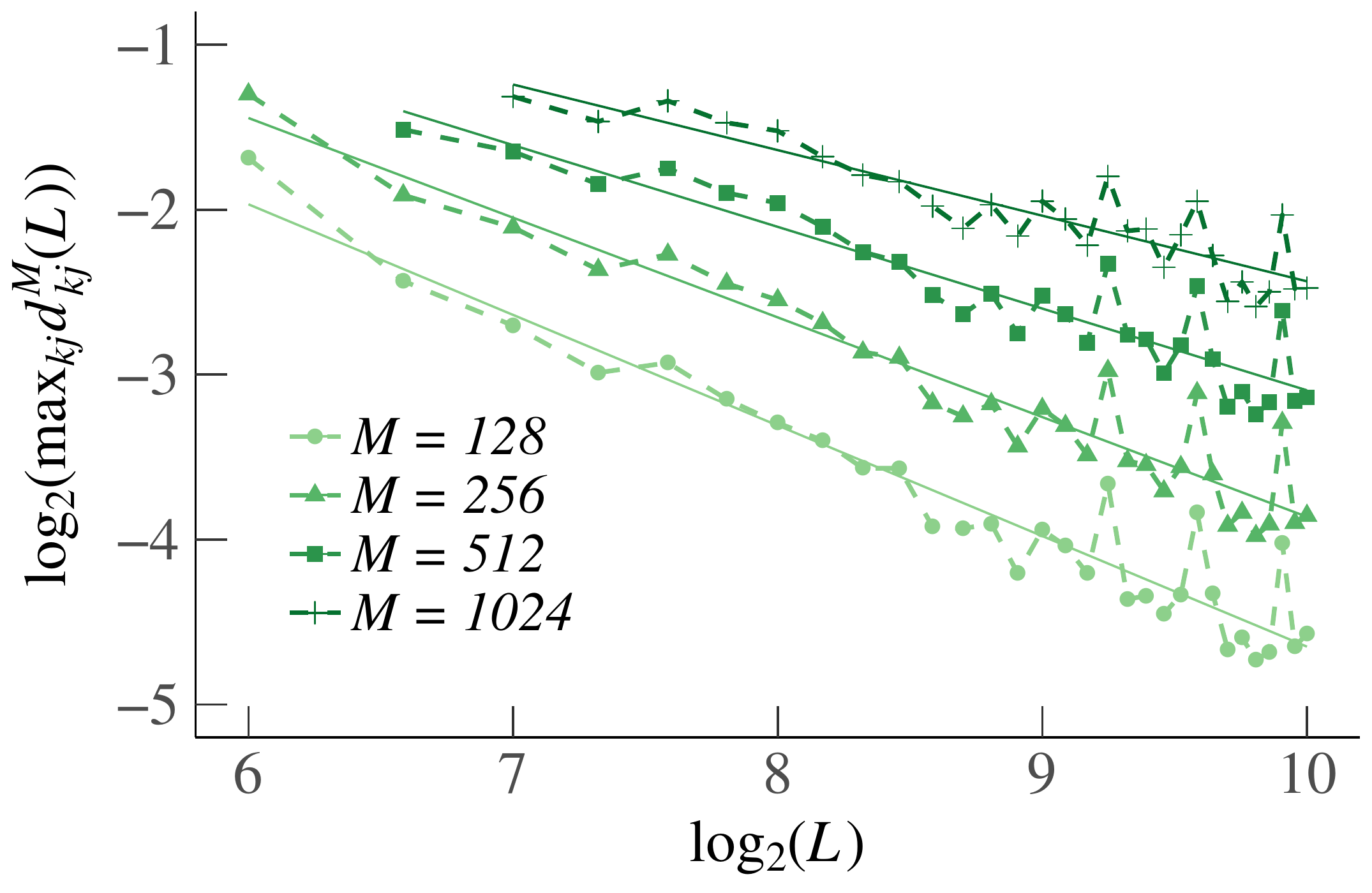}
	\caption{Graphs of $\max_{kj} d_{kj}^M(L)$ in the heterogeneous case, for 
	$M \in \{128,256,512,1024\}$, along with their best-fit line.}
	\label{fig: TV_dist normal heterogeneous}
\end{figure}

The intuition behind the claim is as follows. If the 0-1 variables in the 
definition of~$T^L_k$ were independent, $T^L$ would  converge to standard 
Brownian motion by an extension of Donsker's theorem~\cite[Exercise~8.4] 
{billingsley2013convergence}. Unfortunately, as  stressed before, the cells 
are not independent. However, we have seen in \autoref{sec: model properties} 
that the correlations between cells are geometrically decaying in the distance 
between them, and that cells that are `sufficiently far apart' are nearly 
independent. Hence, one still expects convergence to a (time-inhomogeneous) 
Brownian motion.

In particular, we expect that non-overlapping increments of the random 
function~$T^L$ become asymptotically independent (as $L$ grows). Moreover, 
since the central limit theorem still holds for sequences of random variables 
that are nearly independent when they are far away from another (e.g., see 
\cite[Thm. 27.4]{billingsley2008probability} for the stationary case), we 
expect that the increments converge in distribution to zero-mean normal random 
variables.

\begin{figure}
	\includegraphics[width = 8.6cm]{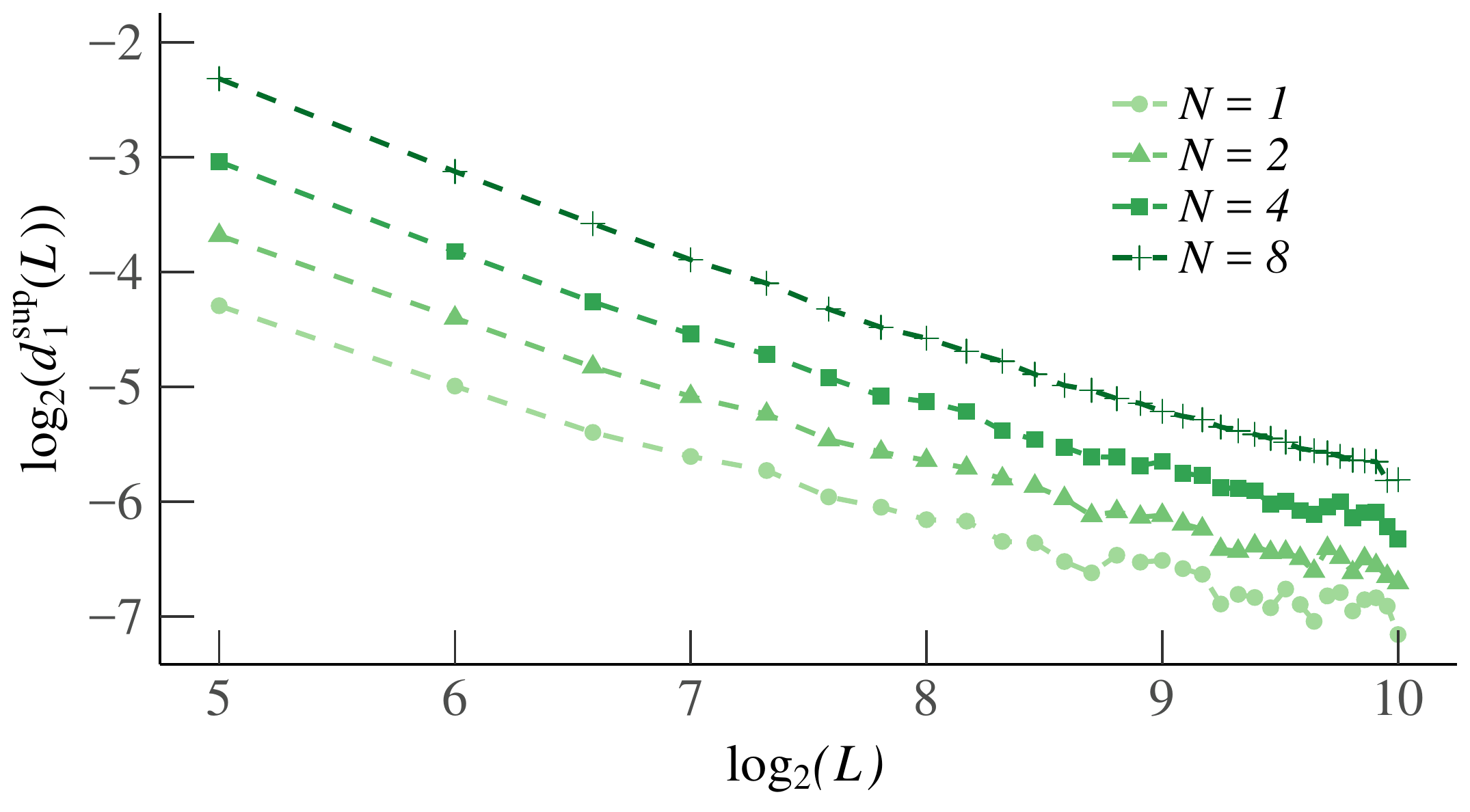}
	\caption{Graph of $d^{\sup}_1(L) $ for the homogeneous case.}
	\label{fig: Normality sup-dist homogeneous}
\end{figure}

As for the covariance matrix between increments, we expect first of all that
\[\begin{aligned}
	\Var\left(T^L(t)\right)
	&=  \frac{1}{Ls^2_L} \sum_{i = 1}^{\floor{tL}}
		\Var\left( \delta_{C_i,0} \right) \\
	&\quad\null + \frac{2}{Ls^2_L} \sum_{i=1}^{\floor{tL}}
		\sum_{j=i+1}^{\floor{tL}} \Cov\left( \delta_{C_i,0}, \delta_{C_j,0} 
		\right) \\
	&\sim \frac{\floor{tL}}{L} + \frac{2\floor{tL}a(t)}{L}
	 \to t + 2t a(t),
\end{aligned}\]
where $a(t)$ is a constant representing the row average of all correlations in 
the upper triangular part of the correlation matrix. This sum should be finite 
because of the geometric decay of correlations. Finally, we expect that the 
covariances between increments converge to zero, since
\begin{multline*}
	\Cov\left(T^L(t)-T^L(s),T^L(s)\right) \\
	\begin{aligned}
		&\sim \frac{1}{Ls^2_L} \Cov\biggl(\,
			\sum_{i=\ceil{sL}}^{\floor{tL}} \delta_{C_i,0},
			\sum_{j=0}^{\floor{sL}} \delta_{C_j,0}
		\biggr) \\
		&\sim \frac{1}{L} \sum_{i=\ceil{sL}}^{\floor{tL}}
			\sum_{j=0}^{\floor{sL}} \rho_{C_i,C_j}
		 \to 0.
	\end{aligned}
\end{multline*}
Here, $\sim$ means that both sides have the same limit as $L \to \infty$, 
$\rho_{C_\ell,C_k}$ denotes the correlation coefficient, and the limit is zero 
since the double sum in the third line is of constant order in~$L$ by the 
geometric decay of correlations.

In view of the above, to support \autoref{claim: Gaussian limit}, we aim to 
test (1) that increments of~$T^L$ become asymptotically independent as $L \to 
\infty$, and~(2) that they converge in distribution to zero-mean normal random 
variables.

\subsection{Independence of Increments}

To verify that the increments of~$T^L$ become independent as $L \to \infty$, 
we compare the joint distribution of two increments to the product 
distribution of the marginals. We divide the roundabout of size~$L$ into four 
segments of equal length, and denote the four increments of~$T^L$ on these 
segments by~$I^L_k$, where $k \in \{1,2,3,4\}$ and the superscript~$L$ 
indicates the dependence on~$L$.

As a measure of the distance between the joint distribution of increments $k$ 
and~$j$ and the distribution one would have if these increments were 
independent, we define
\begin{equation}
	\label{Eqn: M total variation distance}
	d_{kj}^M(L) = \sup_{\abs{A} = M} \frac{1}{2} \sum_{a \in A} \abs[\big]{ 
	f_{kj}(a) - f_{k}(a) f_{j}(a) },
\end{equation}
Here, the supremum is taken over sets~$A$ consisting of $M$ distinct outcomes 
of the random vector $(I^L_k,I^L_j)$,  $f_{kj}$ is the joint density of 
$I^L_k$ and~$I^L_j$, and $f_k$ and~$f_j$ are the respective marginal 
densities. We note that for $a = (a_k,a_j) \in A$, we interpret $f_k(a)$ 
as~$f_k(a_k)$ and $f_j(a)$ as~$f_j(a_j)$. To ensure that the product sample space of $I^L_k$ and~$I^L_j$  has at least $M$ elements, $L$ must be large enough (to be precise, $\left(L/4 + 1\right)^2 \geq M$). To support \autoref{claim: Gaussian limit}, we wish to empirically show that $d_{kj}^M(L) \to 0$ for $k \neq j$ 
as $L \to \infty$.

\begin{figure}
	\includegraphics[width = 8.6cm]{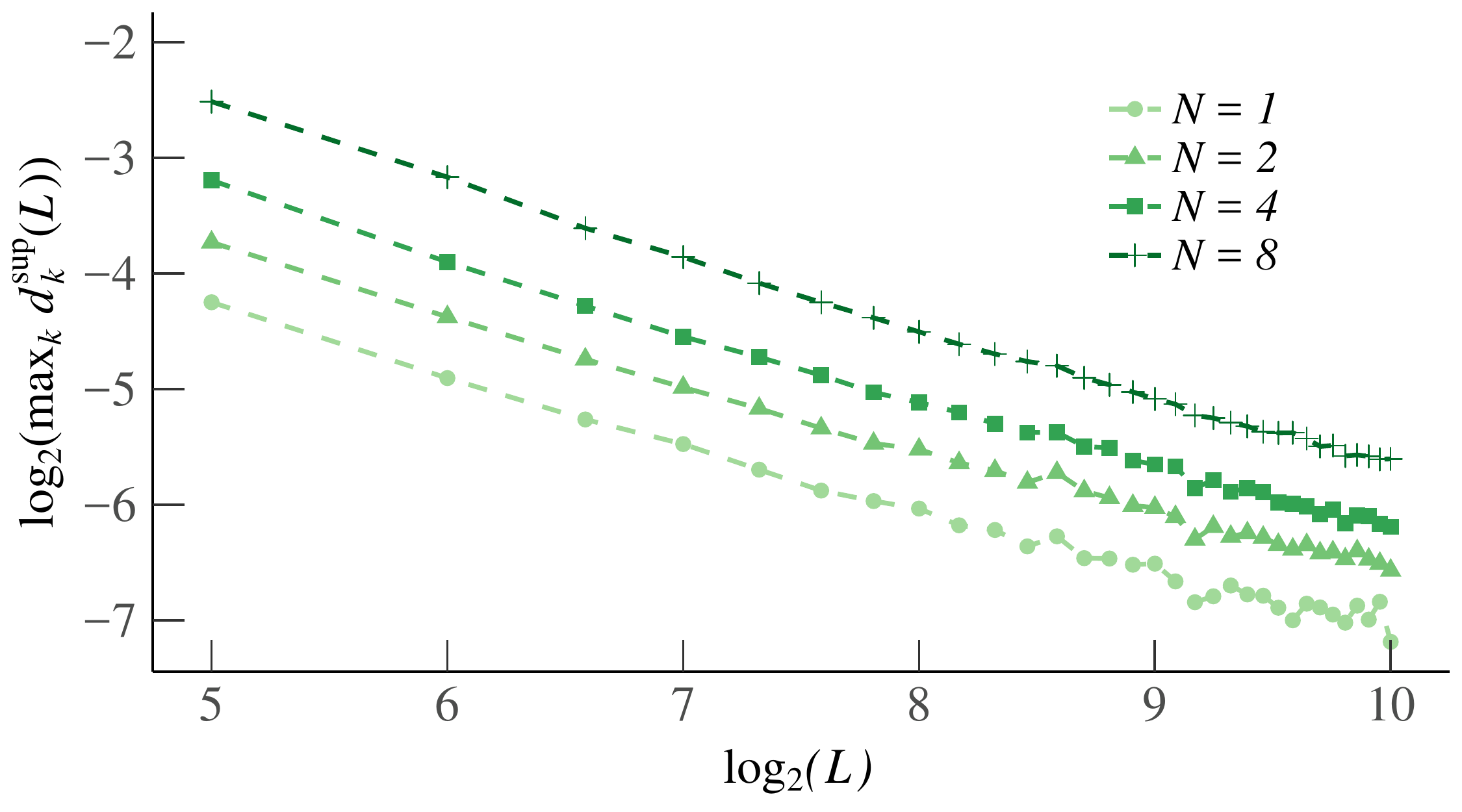}
	\caption{Graph of $\max_k d^{\sup}_k(L) $ for the heterogeneous case.}
	\label{fig: Normality sup-dist heterogeneous}
\end{figure}

Note that if we replace the supremum in~\eqref{Eqn: M total variation 
distance} by a supremum over sets~$A$ of arbitrary size, then \eqref{Eqn: M 
total variation distance} becomes the total variation distance. That distance 
is not suited for our purposes, because we have to estimate the densities 
in~\eqref{Eqn: M total variation distance}, and the total estimation error 
grows faster than the total variation distance decreases. This is why we 
restrict the sum to the $M$ largest contributions in~\eqref{Eqn: M total 
variation distance}.

In our experiment, we take $M \in \{128,256,512,1024\}$ and evaluate 
$\smash{d^M_{kj}(L)}$ by estimating the densities $f_{kj}(a)$ and~$f_k(a)$ 
using a simulated sample of size~$10^6$. 

\begin{support*}[of \autoref{claim: Gaussian limit}]

We first consider the homogeneous case. Since neighboring increments have a stronger dependence, as shown above, and because of symmetry, the results of $d_{12}^M(L)$ are representative for all~$d_{kj}^M(L)$. \autoref{fig: TV_dist normal homogeneous} shows the estimated $d_{12}^M(L)$ as a function of~$L$ on a log-log scale, together with the best-fit line. For every $M$, we obtain an $R^2$ between $0.78$ and $0.92$, and a negative slope. Thus we conclude that for each $M \in \{128,256,512,1024\}$, the estimated $d_{12}^M(L)$ decreases in~$L$ according to a power law. This is sufficient to also conclude that $d_{12}^M(L) \to 0$ as $L \to \infty$, which supports our claim that the increments of~$T^L$ become independent as $L$ becomes large.

For the heterogeneous case, we have plotted the distance $\max_{kj} 
d_{kj}^M(L)$ (for different~$M$) in \autoref{fig: TV_dist normal 
heterogeneous}. Our conclusions are the same as in the homogeneous case.
\end{support*}

\subsection{Distribution of Increments}

We now focus on supporting the part of \autoref{claim: Gaussian limit} stating 
that the increments of~$T^L$ converge in distribution to a normal random 
variable. For this purpose, we divide $[0,1]$ into $N$ intervals of equal 
length. For fixed~$N$, we denote the corresponding increments of~$T^L$ 
by~$I^L_k$, and their standard deviations by~$\sigma^L_k$, where $k \in 
\{1,\dots,N\}$. Denote by $t_{n-1}$ the cumulative distribution function of a 
$t$-distribution with $n-1$ degrees of freedom.

\begin{table}
	\caption{\label{Tbl: Normality regression results}Results linear 
	regression of $L$ versus estimated $\E M(L)$ (homogeneous case).}
	\begin{ruledtabular}
	\centering
	\begin{tabular}{clcccccc}
		$N$ & $k$ & Rsq & Rsq\_adj & $F$ & $p$ & intercept & slope \\
		\hline
		\hline
		1 & 1& 0.8134 & 0.8072 & 130.7484 & 1.8457e-12 & 1.0406 & 1.0076 \\
		2 & 1 & 0.7128 & 0.7033 & 74.4703 & 1.2586e-09 & 0.7283 & 1.0261 \\
		2 & 2 & 0.7197 & 0.7104 & 77.0325 & 8.7121e-10 & 0.7331 & 1.0274 \\
		4 & 1 & 0.5816 & 0.5677 & 41.7071 & 3.9042e-07 & 1.4803 & 0.7842 \\
		4 & 2 & 0.5690 & 0.5546 & 39.6035 & 6.1631e-07 & 1.3822 & 0.8033 \\
		4 & 3 & 0.5728 & 0.5586 & 40.2324 & 5.3689e-07 & 1.4932 & 0.7814 \\
		4 & 4 & 0.5797 & 0.5657 & 41.3778 & 4.1895e-07 & 1.4924 & 0.7811 \\
		8 & 1 & 0.5509 & 0.5359 & 36.8022 & 1.1580e-06 & 0.1515 & 0.9281 \\
		8 & 2 & 0.5479 & 0.5328 & 36.3542 & 1.2841e-06 & 0.1686 & 0.9252 \\
		8 & 3 & 0.5435 & 0.5282 & 35.7116 & 1.4914e-06 & 0.1776 & 0.9237 \\
		8 & 4 & 0.5498 & 0.5348 & 36.6344 & 1.2036e-06 & 0.1872 & 0.9214 \\
		8 & 5 & 0.5462 & 0.5311 & 36.1087 & 1.3594e-06 & 0.1447 & 0.9289 \\
		8 & 6 & 0.5428 & 0.5275 & 35.6145 & 1.5257e-06 & 0.1627 & 0.9269 \\
		8 & 7 & 0.5497 & 0.5346 & 36.6159 & 1.2088e-06 & 0.1582 & 0.9277 \\
		8 & 8 & 0.5321 & 0.5165 & 34.1113 & 2.1793e-06 & 0.3014 & 0.8971
	\end{tabular}
	\end{ruledtabular}
\end{table}

We use the two methods described in \autoref{sec: supp convergence distr}. 
However, a complication is that we do not know~$\sigma^L_k$, and, therefore, 
do not have a complete description of the limiting distribution. Hence, we 
slightly modify the two methods by considering the random variables $I^L_k / 
\hat{\sigma}^L_k$, where $\hat{\sigma}^L_k$ is the maximum-likelihood 
estimator for~$\sigma^L_k$, estimated from a simulated sample of size 
$n=10^6$. \autoref{claim: Gaussian limit} implies that, as $L \to \infty$, 
$I^L_k / \hat{\sigma}^L_k$ converges in distribution to a random variable that 
has distribution $t_{n-1}$, and it is this implication that we will support.

With our first experiment, we aim to show that, for every $k \in 
\{1,\dots,N\}$,
\[
	d^{\sup}_k(L) := \norm{\hat{F}^L_k - t_{n-1}}_\infty \to 0
\]
as $L \to \infty$, where $\norm{\cdot}_\infty$ denotes the supremum norm, and 
$\hat{F}^L_k $ denotes the empirical distribution function of $I^L_k / 
\hat{\sigma}^L_k$. 

In our second experiment, we use the novel method that was explained in 
\autoref{sec: supp convergence distr}. To be precise, we apply the chi-squared 
goodness-of-fit test, with the hypotheses
\[\begin{aligned}
	H_0(L) \colon & I^L_k / \hat{\sigma}^L_k \overset{\rm d}{=} t_{n-1}; \\
	H_1(L) \colon & I^L_k / \hat{\sigma}^L_k\overset{\rm d}{\neq} t_{n-1},
\end{aligned}\]
to determine~$M(L)$. We estimate $\E M(L)$ by repeating the procedure $10^4$ 
times, and aim to show that $\E M(L)$ diverges as $L \to \infty$.

\begin{table}
	\caption{\label{Tbl: Normality regression results (heterogeneous)}Results 
	linear regression of $L$ versus estimated $\E M(L)$ (heterogeneous case).}
	\begin{ruledtabular}
	\centering
	\begin{tabular}{clcccccc}
		$N$ & $k$ & Rsq & Rsq\_adj & $F$ & $p$ & intercept & slope \\
		\hline
		\hline
		1 & 1 & 0.7541 & 0.7459 & 92.0002 & 1.1971e-10 & 0.7161 & 1.0702 \\
		2 & 1 & 0.6878 & 0.6773 & 66.0790 & 4.4923e-09 & 0.5324 & 0.9724 \\
		2 & 2 & 0.5639 & 0.5493 & 38.7879 & 7.3848e-07 & 2.0480 & 0.7580 \\
		4 & 1 & 0.7518 & 0.7436 & 90.8835 & 1.3761e-10 & 0.9568 & 0.8230 \\
		4 & 2 & 0.6827 & 0.6721 & 64.5423 & 5.7407e-09 & 0.3722 & 0.8814 \\
		4 & 3 & 0.6955 & 0.6854 & 68.5310 & 3.0624e-09 & 0.3669 & 0.8900 \\
		4 & 4 & 0.7479 & 0.7395 & 88.9982 & 1.7462e-10 & 0.9977 & 0.8228 \\
		8 & 1 & 0.6871 & 0.6767 & 65.8841 & 4.6332e-09 & 1.0612 & 0.6820 \\
		8 & 2 & 0.6306 & 0.6183 & 51.2062 & 5.8243e-08 & 0.8349 & 0.7562 \\
		8 & 3 & 0.4906 & 0.4737 & 28.8981 & 8.0635e-06 & 1.3538 & 0.6137 \\
		8 & 4 & 0.4476 & 0.4292 & 24.3093 & 2.8351e-05 & 1.6853 & 0.5714 \\
		8 & 5 & 0.4880 & 0.4709 & 28.5930 & 8.7378e-06 & 1.5911 & 0.5810 \\
		8 & 6 & 0.4839 & 0.4667 & 28.1235 & 9.8957e-06 & 1.4242 & 0.6184 \\
		8 & 7 & 0.6181 & 0.6053 & 48.5450 & 9.6884e-08 & 0.7438 & 0.7548 \\
		8 & 8 & 0.6540 & 0.6425 & 56.7122 & 2.1412e-08 & 0.6693 & 0.7848 
	\end{tabular}
	\end{ruledtabular}
\end{table}

\begin{support*}[of \autoref{claim: Gaussian limit}]
Consider the first experiment, and the homogeneous case. For $N \in 
\{1,2,4,8\}$, we have plotted $d^{\sup}_1(L)$ in \autoref{fig: Normality 
sup-dist homogeneous}. By symmetry, the results for $k \neq 1$ are similar. As 
the graphs are all linear in~$L$ on a log-log scale, the distance decreases 
in~$L$ like a power law. This is in turn sufficient to conclude that for each 
$1 \leq k \leq N$, $d^{\sup}_k(L) \to 0$ as $L \to \infty$, and thus supports 
\autoref{claim: Gaussian limit}.

For the heterogeneous case, \autoref{fig: Normality sup-dist heterogeneous} 
depicts the distance $\max_k d^{\sup}_k(L)$ as a function of~$L$. The  results 
are in line with those of the homogeneous case. Hence, the experiment supports 
convergence in distribution of the increments of~$T^L$.
\end{support*}

\begin{support*}[of \autoref{claim: Gaussian limit}]
Now consider the second experiment. For $N \in \{1,2,4,8\}$ and $k \in 
\{1,\dots,N\}$ we have estimated the $\E M(L)$ for $L \in 
\{32,64,\dots,1024\}$. Then, we applied a log-transformation to $L$ and $\E 
M(L)$, after which we have applied linear regression to find the best linear 
fit. The idea is that if the linear fit on a log-log scale is good and 
strictly increasing, then $\E M(L)$ is strictly increasing in~$L$ via a power 
law, i.e., $\E M(L) \sim L^\beta$, where $\beta$ is the slope of the linear 
fit found by the regression. 
	
The results of the linear regression are given in \autoref{Tbl: Normality 
regression results} for the homogeneous case, and in \autoref{Tbl: Normality 
regression results (heterogeneous)} for the heterogeneous case. Here, $N$ 
and~$k$ are as before, `{Rsq}' and `{Rsq\_adj}' are, respectively, the 
ordinary and adjusted~$R^2$ from ordinary least squares, $F$ is the 
F-statistic, and $p$ is its corresponding p-value. The last two columns 
contain the intercept and slope of the regression line given by ordinary least 
squares.

The tables show that under the assumption of standard normally distributed 
residuals, the fit for each pair of $N$ and~$k$ is good, since $R^2$ is large 
and the p-value from the corresponding F-statistic is very small. Also, the 
slope is always significantly positive. As  explained above, we thus conclude 
that $\E M(L)$ diverges like a power law in~$L$.

In both the homogeneous and heterogeneous case, we  have to verify that the 
residuals of the regressions are normally distributed, and that the 
conclusions we draw are therefore valid. To do so, we made QQ-plots for every 
pair of $N$ and~$k$; the case $N=4$ and $k=1$ is given in \autoref{fig: Normal 
Regression results (QQ-plot and regression line)} for illustration. The data 
from which these residuals stem is drawn on a log-log scale in \autoref{fig: 
Normal Regression results (QQ-plot and regression line)} together with the 
best-fit line. None of the QQ-plots gives rise to question the assumption of 
normally distributed residuals, and hence our conclusions are valid.

\end{support*}

\begin{figure}
	\begin{tabular}{cc}
		\includegraphics[width = 4.1cm]{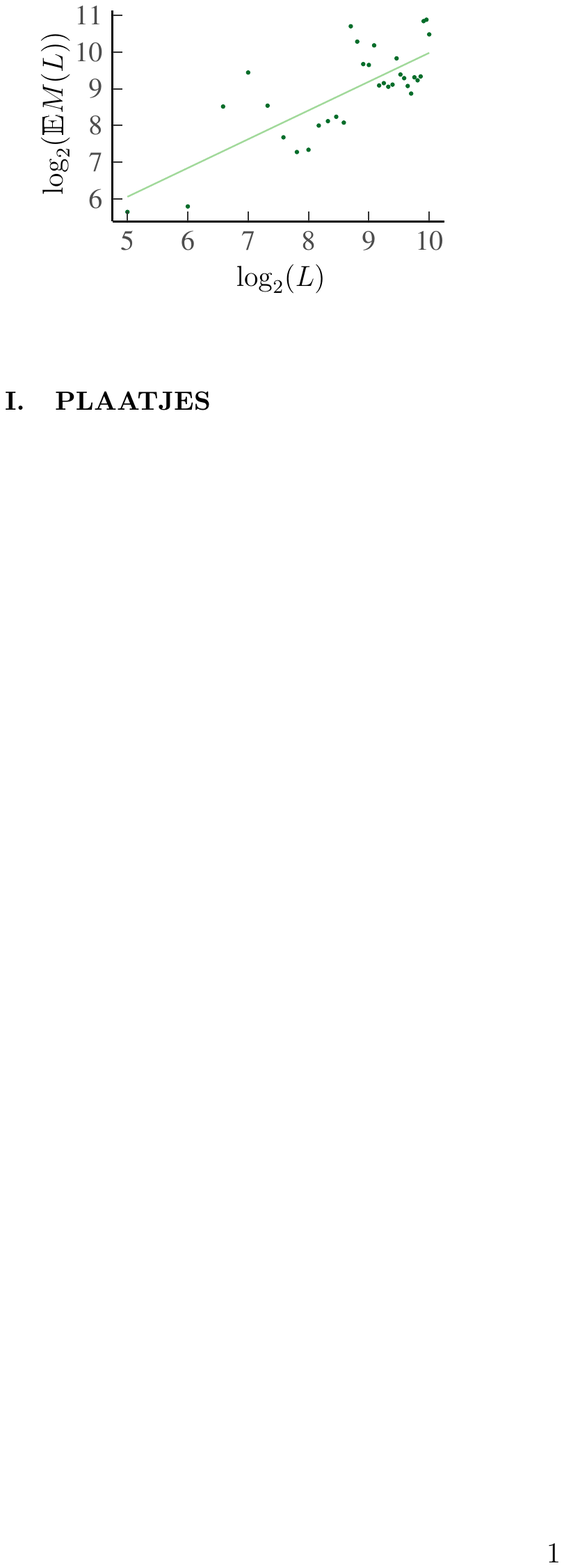} &
		\includegraphics[width = 4.1cm]{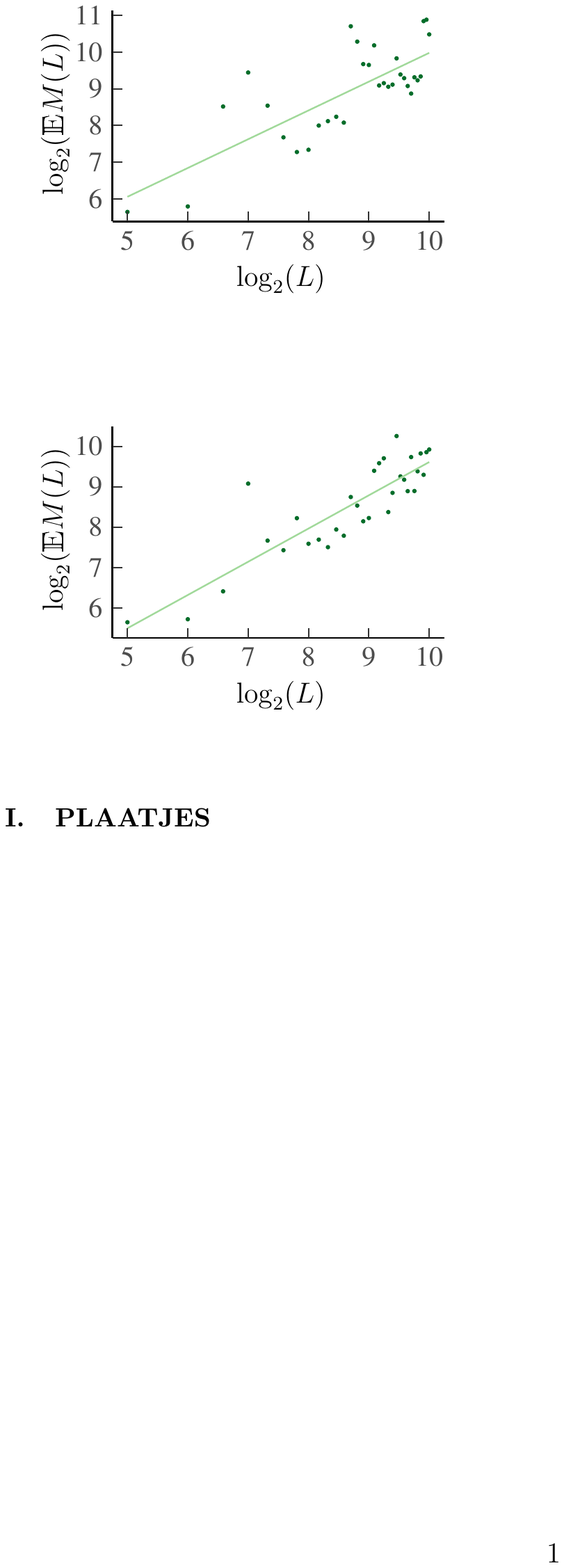} \\
		\includegraphics[width = 4.1cm]
			{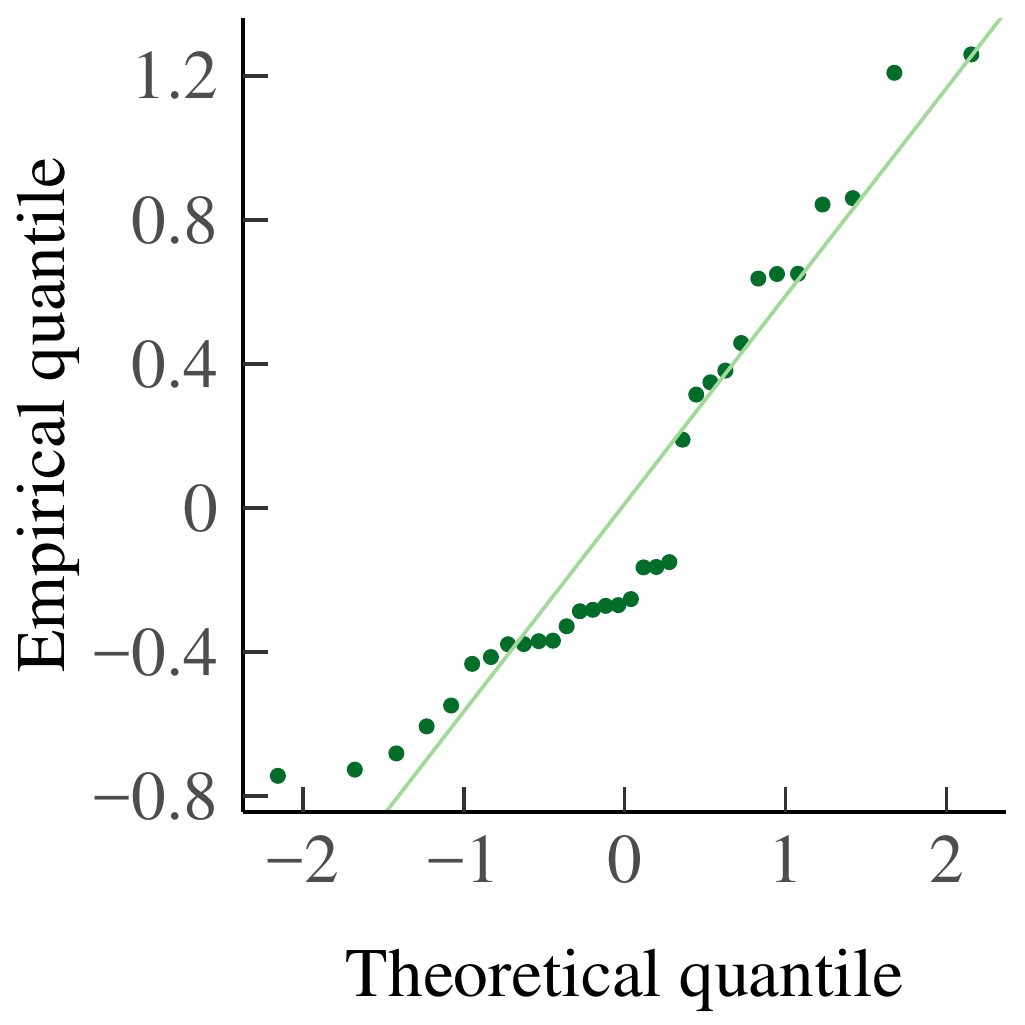} &
		\includegraphics[width = 4.1cm]
			{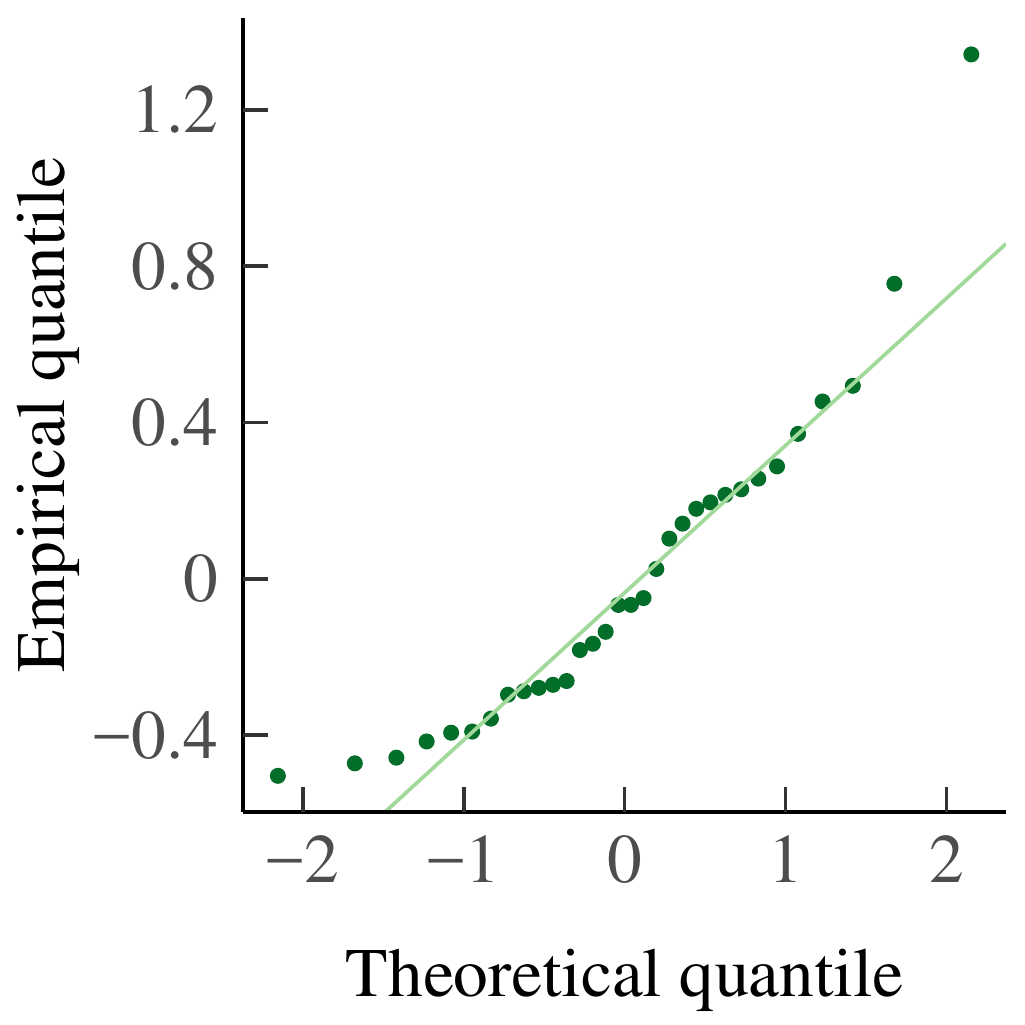} \\
	\end{tabular}
	\caption{Illustration results of linear regression. The upper figures showing the regression  and the lower figures showing the corresponding QQ-plots, with the homogeneous case left and the heterogeneous case right.}
	\label{fig: Normal Regression results (QQ-plot and regression line)}
\end{figure}

\section{Scaling Limit for Queues \label{sec: Poisson limit}}

We now focus on the behavior of the total queue length in a segment of the 
roundabout, as $L \to \infty$. We claim that, for every subdivision of the 
roundabout into segments, the sum of the queue lengths within these segments 
is Poisson distributed. Similar to our results for the cells, one could use 
these results for the queues in the design of the roundabout. For example, 
using the Poisson limit in combination with Little's law we can approximate 
mean waiting times; one could thus design the roundabout such that these 
delays remain within an acceptable bound.

Before we formulate our claim, we introduce some notation. Recall that $Q_i$ 
denotes the length of queue~$i$ in equilibrium. Define $P_0^L = 0$ and
\[
	P_k^L := Q_1 + \dots + Q_k,\qquad k \geq 1.
\]
Furthermore, define $P^L\colon [0,1] \to \N_0$ by $P^L(u) = P^L_{\ceil{uL}}$. 
We now claim the following:

\begin{claim}
	\label{claim: Poisson limit}
	As $L \to \infty$, $P^L$ converges in distribution to a time-inhomogeneous 
	Poisson process~$P$.
\end{claim}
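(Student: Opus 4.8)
The plan is to realize $P^L$ as the counting process of the point measure $N_L = \sum_{i=1}^L Q_i\,\delta_{i/L}$ on $[0,1]$ and to prove that $N_L$ converges to a Poisson process by a Chen--Stein Poisson-approximation argument for sums of weakly dependent rare events, upgraded to the process level. The starting point is that, in the scaling of \autoref{sec: profiles}, the per-slot arrival probability $p_i \sim \theta\varrho(i/L)/L$ vanishes, so under the stability assumption $p_i < \pi_{i0}$ each individual queue is nonempty only with probability of order $1/L$, while the total arrival rate stays of order one; this is exactly the regime in which a sum of rare occupancy events, spread over $\Theta(L)$ entrances, is Poisson.

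First I would pin down the marginal law of $Q_i$. Treating queue~$i$ as a discrete-time single-server queue whose service slots are the events that cell~$i$ is empty (occurring with stationary probability~$\pi_{i0}$), a Geo/Geo/1 comparison suggests that $Q_i$ is sandwiched between geometric laws with ratio $\rho_i \asymp p_i/\pi_{i0} = \Theta(1/L)$, in agreement with \autoref{claim: queues}. This would give $L\,\PP(Q_i \geq 1) \to \lambda(i/L)$ for an intensity $\lambda$ and, crucially, $\sum_{i=1}^L \E[(Q_i - 1)^+] = O(1/L) \to 0$. The latter estimate shows that queues of length at least two carry no mass in the limit, so one may replace $Q_i$ by $\mathds 1\{Q_i \geq 1\}$ and the limit is a genuine---not compound---Poisson process, with the (diffuse) intensity measure $\Lambda(u) = \int_0^u \lambda(v)\diff v$.

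The heart of the argument is the declumping estimate required by Chen--Stein. Writing the increment over $(s,t]$ as $\sum_{sL < i \leq tL}\mathds 1\{Q_i\geq1\}$, its total-variation distance to a Poisson law is controlled by the standard quantities $b_1 = \sum_i\sum_{j\in B_i}\PP(Q_i\geq1)\PP(Q_j\geq1)$, $b_2 = \sum_i\sum_{j\in B_i,\,j\neq i}\PP(Q_i\geq1,\,Q_j\geq1)$, and a mixing term $b_3$, taken over dependency windows $B_i$ of width $O(\log L)$. Here $b_1 = O(\log L/L)\to0$ automatically, and $b_3\to0$ follows from the geometric decay of correlations in \autoref{claim: correlations}. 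The decisive term is $b_2$, which vanishes exactly when the joint occupancy of two distinct queues factorizes to leading order, $\PP(Q_i\geq1,\,Q_j\geq1) = O(1/L^2)$; intuitively this holds because a queue is nonempty essentially only after a rare arrival at its own entrance, and arrivals at distinct entrances are independent. The same decorrelation makes increments over disjoint intervals asymptotically independent, so all finite-dimensional distributions of $N_L$ converge to those of the Poisson process with intensity~$\Lambda$. As $\Lambda$ is diffuse and $P^L$ is nondecreasing and pure-jump, convergence of the mean measure and the void probabilities then gives, through Kallenberg's theorem for simple point processes, convergence in distribution of $P^L$ to~$P$.

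The main obstacle is the one stressed throughout the paper: there is no closed form for the joint, or even the marginal, stationary distribution, so neither the marginal asymptotics behind \autoref{claim: queues} nor the factorization $\PP(Q_i\geq1,\,Q_j\geq1)=O(1/L^2)$ can simply be read off. This factorization is the genuinely hard step, and it is strictly stronger than the uniform geometric \emph{upper} bound of \autoref{claim: correlations}: it demands that the indicator correlations actually \emph{vanish} (like $1/L$) as $L\to\infty$, which is the quantitative form of asymptotic independence of the rare occupancy events that rules out clumping. Proving it is delicate because of the feedback between a queue and its own cell---the queue feeds cars into the cell, so the service slots are not independent of the queue length---so a careful coupling, or a regeneration argument for the queue--cell pair, would be needed to turn the Geo/Geo/1 heuristic and the decorrelation picture into honest two-sided bounds. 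It is precisely this missing analytic control that forces the claim to be supported by simulation rather than proved.
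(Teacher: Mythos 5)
First, be aware that the paper never \emph{proves} \autoref{claim: Poisson limit}; it is stated as a claim precisely because, as the authors stress, no analytic expression for the joint stationary law is available. What stands in place of a proof is a heuristic reduction plus simulation: the exact stationary identities \eqref{eqn: pi i+1,0}--\eqref{eqn: short computation} give $\sigma_{i00} = (\pi_{i0}-p_i)/(1-p_i) = \mathcal{O}(1)$ and hence $\lambda_{i0} = \mathcal{O}(1)$; a heuristic argument then suggests $\lambda_{ik} = \mathcal{O}(1/L^k)$, so that $P^L$ behaves asymptotically as a counting process; by \cite[Theorem~12.6]{billingsley2013convergence} it then suffices that the finite-dimensional distributions converge, and the two resulting requirements---asymptotic independence of increments and Poisson-distributed increments---are supported only by the experiments of \autoref{sec: Poisson limit}, not derived.

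Your route is genuinely different and more quantitative: you reduce each increment to a sum of rare indicators via $\sum_i \E[(Q_i-1)^+] \to 0$ (your substitute for the paper's $\lambda_{ik} = \mathcal{O}(1/L^k)$), control it by the Chen--Stein quantities $b_1, b_2, b_3$, and pass to the process level by Kallenberg's theorem for simple point processes. If its inputs held, this would yield a proof with explicit rates, which the paper's reduction cannot; conversely, the paper's reduction is tailored to what can actually be tested numerically. But both reductions stall on the same missing analytic control, and your diagnosis of this is exactly right: the declumping bound $\PP(Q_i \geq 1,\, Q_j \geq 1) = \mathcal{O}(1/L^2)$ is strictly stronger than \autoref{claim: correlations} (itself only simulation-supported, and in any case only a uniform upper bound rather than a vanishing one), and your marginal input $L\,\PP(Q_i \geq 1) \to \lambda(i/L)$ via the Geo/Geo/1 sandwich is essentially \autoref{claim: queues}, also unproved---so your argument is conditional on two unproven estimates, not one. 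Two further caveats: the term $b_3$ requires control of conditional expectations given the configuration outside the dependency window, i.e., a genuine mixing property, which does not follow from the pairwise correlation decay of \autoref{claim: correlations}; and the queues are coupled not only through their arrival streams but through the ring occupancy itself (a car released by queue~$i$ later blocks cell~$j$), so ``arrivals at distinct entrances are independent'' cannot by itself deliver the $b_2$ estimate. With these caveats made explicit, your proposal is a sound blueprint for what a proof would require, consistent with the paper's heuristics---but, like the paper's own argument, it is not a proof.
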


\begin{figure}
	\begin{tabular}{cc}
		\includegraphics[width = 4.1cm]{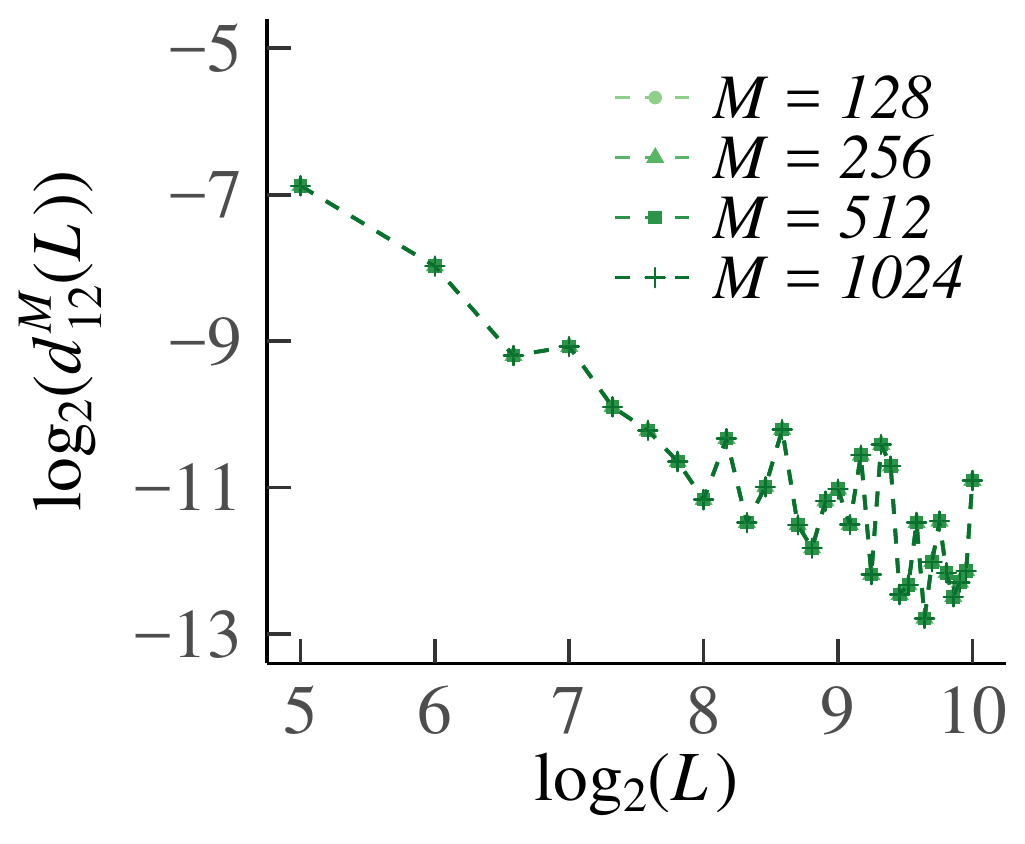} &
		\includegraphics[width = 4.1cm]{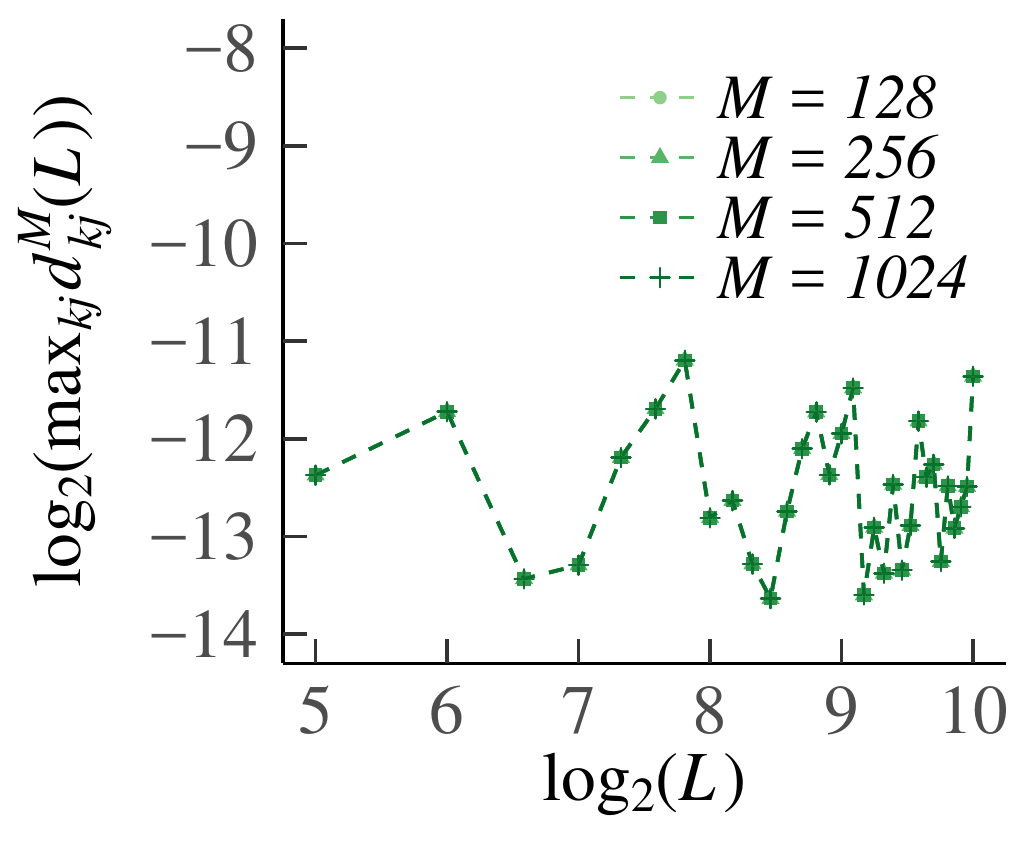} \\
	\end{tabular}
	\caption{Graphs of $d_{12}^M(L)$ for the homogeneous case (left) and $\max_{kj} d_{kj}^M(L)$ for the heterogeneous case (right), for $M \in \{128,256,512,1024\}$.}
	\label{fig: Poisson TV distance log-scale}
\end{figure}

The intuition for this claim primarily stems from studying the behavior of 
specific quantities in the roundabout model, as $L \to \infty$. We have $p_i = 
\mathcal{O}(1/L)$ and $q_{ij} = \mathcal{O}(1/L)$, so that $\pi_{i0} = 
\mathcal{O}(1)$. We write $\sigma_{ikl}$ for the stationary probability of the 
event $\{C_i = k, Q_i = l\}$, and recall that $\lambda_{ik}$ denotes the 
stationary probability that $Q_i = k$. By considering what happens when we 
start the Markov chain from the stationary distribution, and let it take one 
step, one can derive the identities
\begin{align}
	\pi_{i+1,0} & = \sum_{j=1}^L\pi_{ij} q_{ij} + \sigma_{i00} (1 - p_i);
	\label{eqn: pi i+1,0} \\
	\lambda_{i0} & = \lambda_{i0}(1 - p_i) + \sigma_{i01} (1- p_i) + 
	\sigma_{i00} p_i.
	\label{eqn: lambda_i0}
\end{align}
Furthermore, a calculation shows that \eqref{Eqn: pi_ij} and~\eqref{Eqn: 
pi_i0} imply
\begin{equation}
	\sum_{j=1}^L \pi_{ij} (1-q_{ij}) = 1- \pi_{i+1,0} -p_i.
	\label{eqn: short computation}
\end{equation}
Combining \eqref{eqn: short computation} with \eqref{eqn: pi i+1,0} and 
\eqref{Eqn: pi_i0} yields
\[
	\sigma_{i00} = \frac{\pi_{i0} - p_i}{1-p_i},
\]
implying that $\sigma_{i00} = \mathcal{O}(1)$. Using that $\sigma_{i01} \geq 
0$, it then follows from \eqref{eqn: lambda_i0} that $\lambda_{i0} = 
\mathcal{O}(1)$ as well.

This line of reasoning fails to determine the order of $\lambda_{ik}$, but it 
is conceivable that $\lambda_{ik} = \mathcal{O}(1/L^k)$. The argument behind 
this is as follows. Since $\pi_{i0} = \mathcal{O}(1)$, the time we have to 
wait for an empty cell is of constant order. For a queue of length $k$ to 
build up from an empty queue, we need to have at least $k$ arrivals within 
this constant time. The probability that this happens is of order $1/L^k$, 
because $p_i = \mathcal{O}(1/L)$. 

\begin{figure}
	\includegraphics[width = 8.6cm]{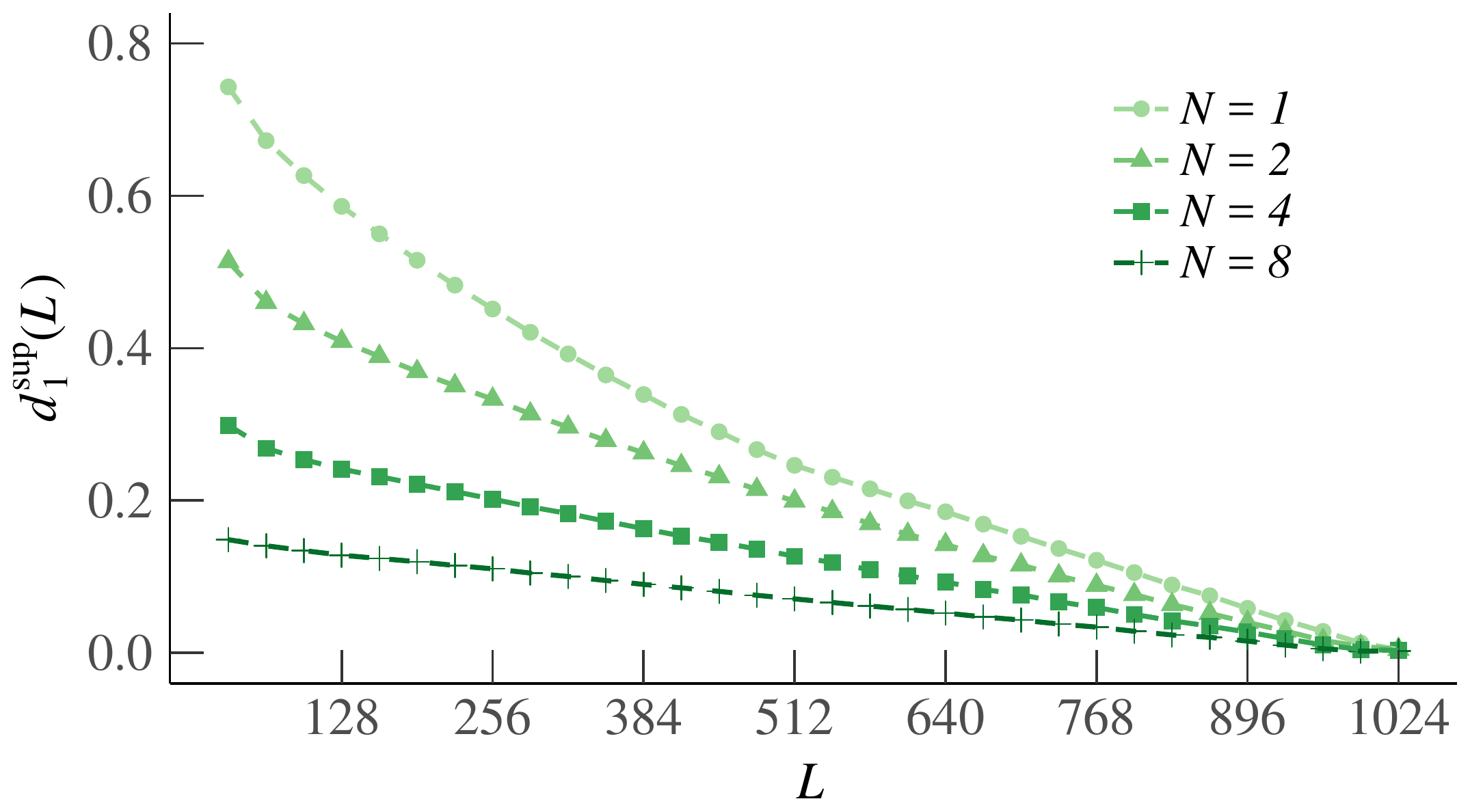}
	\caption{Graph of $d^{\sup}_1(L)$, for $N  \in \{1,2,4,8\}$ (homogeneous 
	case).}
	\label{fig: Poisson sup distance}
\end{figure}

\begin{figure}
	\begin{tabular}{cc}
		\includegraphics[width = 4.1cm]{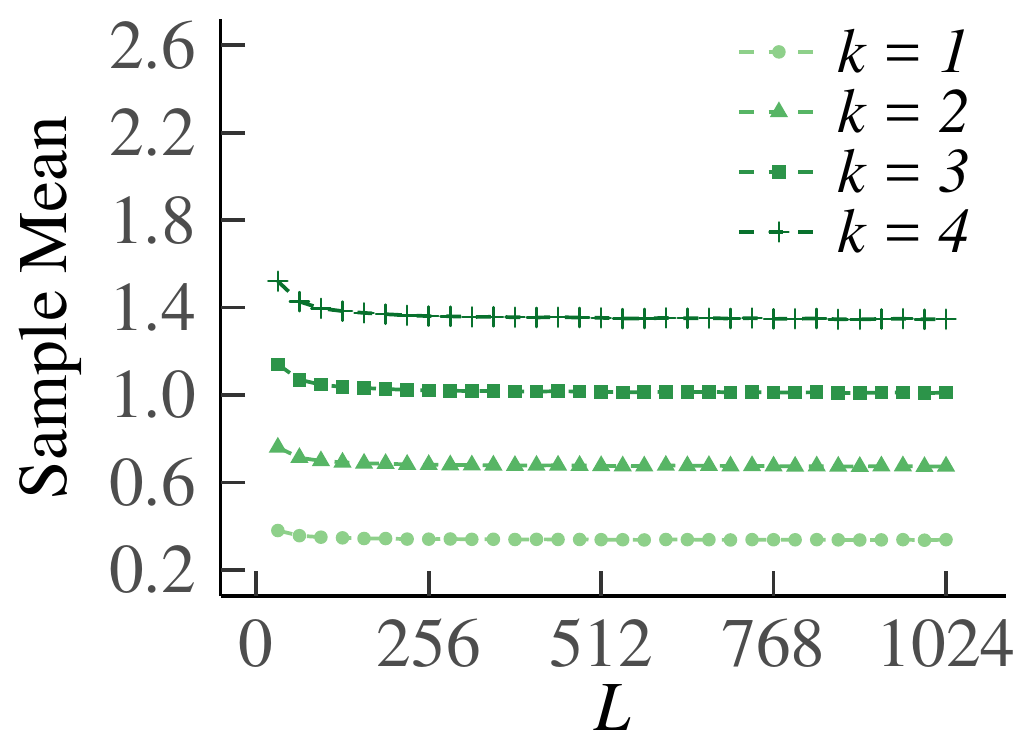} & 
		\includegraphics[width = 4.1cm]{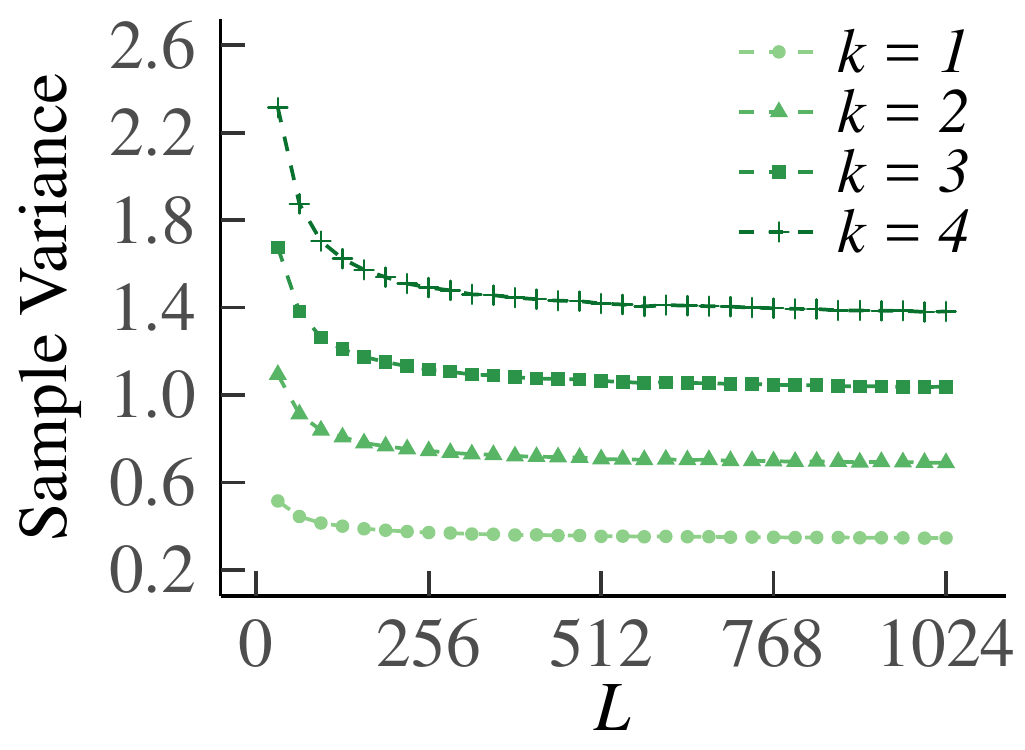} \\
		\includegraphics[width = 4.1cm]{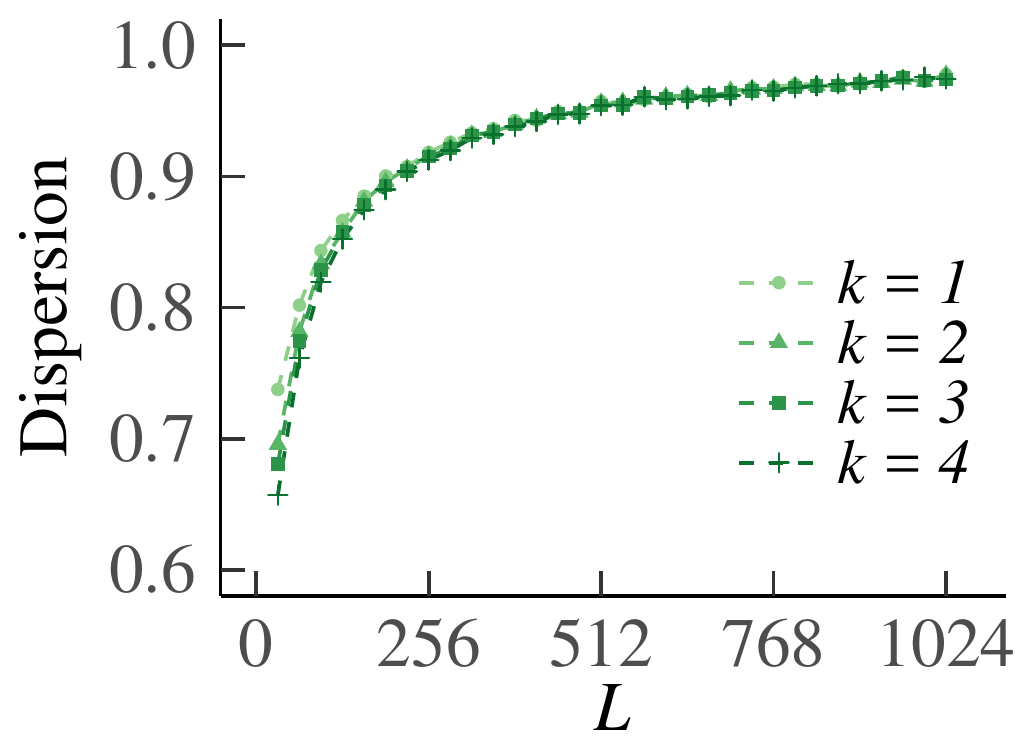} & 
		\includegraphics[width = 4.1cm]{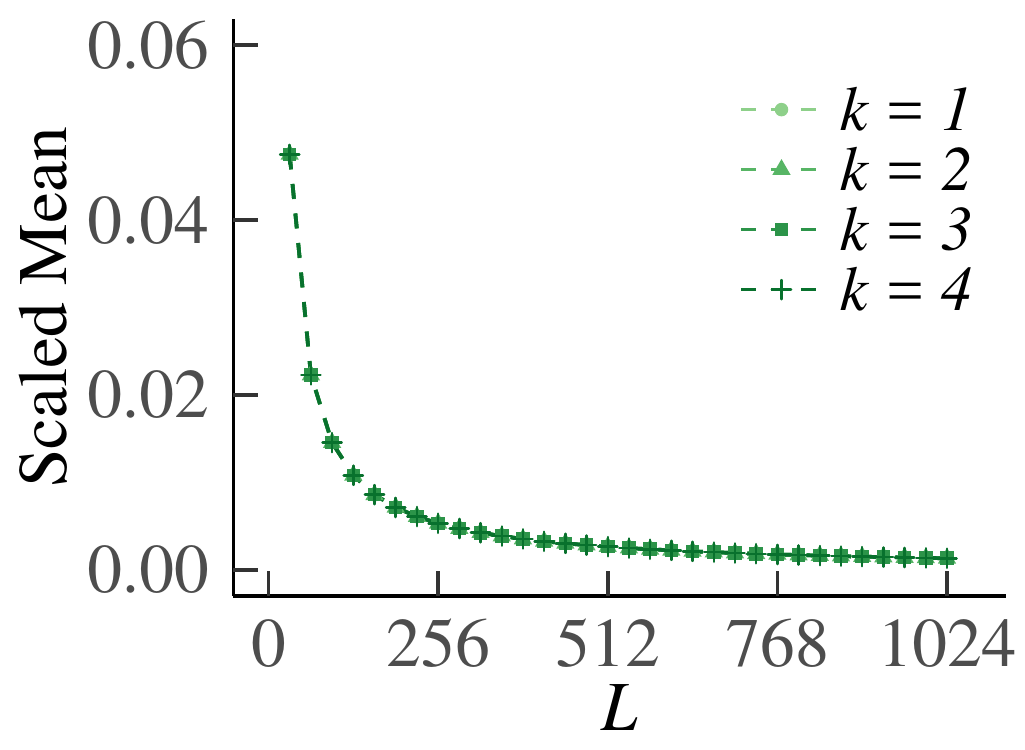} 
	\end{tabular}
	\caption{Poisson characteristics for the increments $\sum_{i=1}^k J^L_i$, for $k \in \{1,2,3,4\}$ and $N = 4$, in the homogeneous case. With the figures showing the means (upper left), the variances (upper right), the dispersions (lower left), and the scaled means (lower right).}
	\label{fig: set of Poisson characteristics (means, vars, etc)}
\end{figure}

Under the proviso that $\lambda_{ik} = \mathcal{O}(1/L^k)$, it follows that 
the functions~$P^L$ behave asymptotically as counting processes. For 
convergence to a Poisson process, it then suffices that the finite-dimensional 
distributions converge to those of a Poisson process (see, e.g., 
\cite[Theorem~12.6] {billingsley2013convergence}). To support \autoref{claim: 
Poisson limit}, we therefore verify below (1) that the increments of~$P^L$ 
become independent as $L \to \infty$, and (2) that they converge in 
distribution to a Poisson random variable.

\begin{figure}
	\includegraphics[width = 8.6cm]{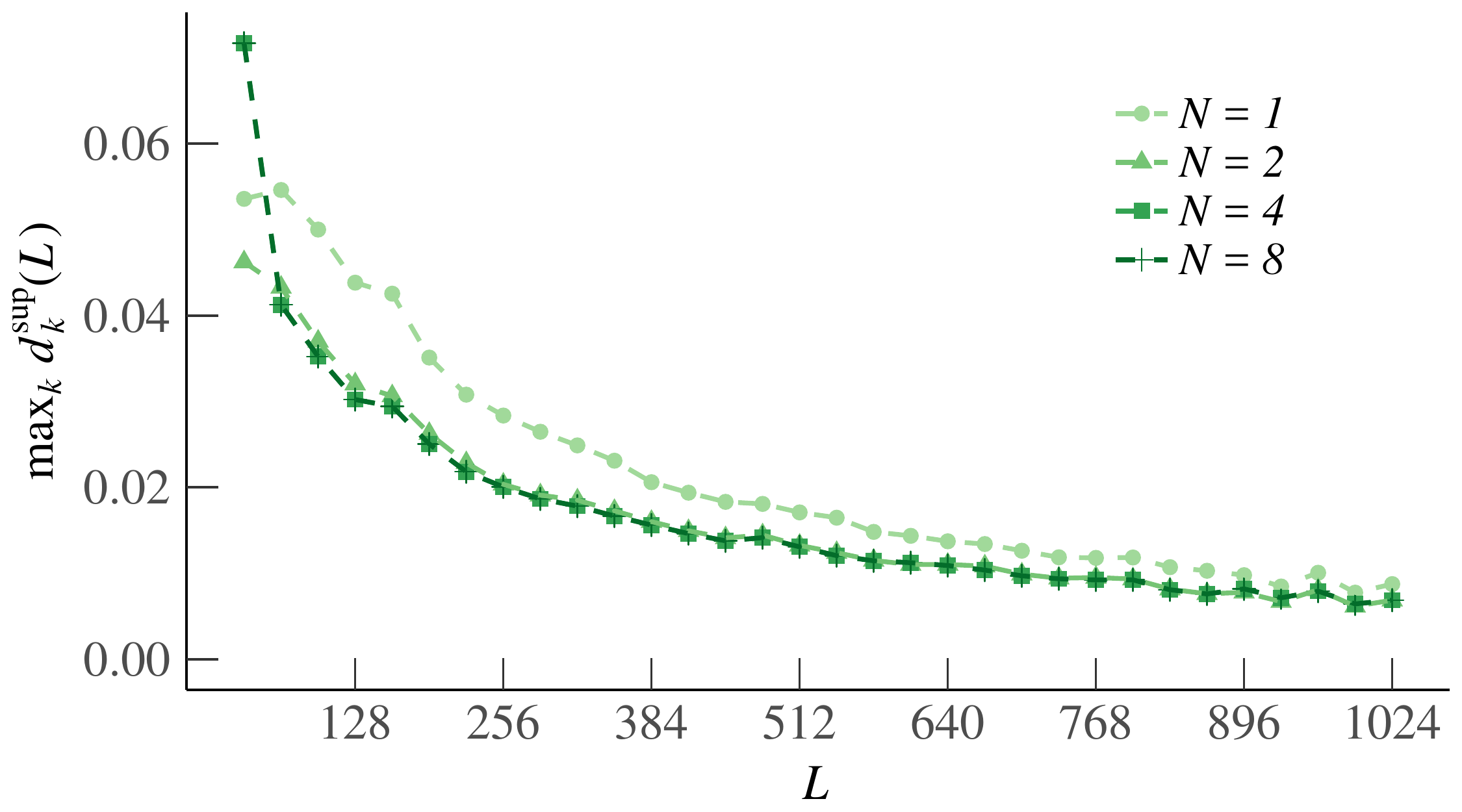}
	\caption{Graph of $\max_k d^{\sup}_k(L)$, for $N  \in \{1,2,4,8\}$  
	(heterogeneous case). }
	\label{fig: Poisson sup distance (heterogeneous)}
\end{figure}

\begin{figure}
	\begin{tabular}{cc}
		\includegraphics[width = 4.1cm]{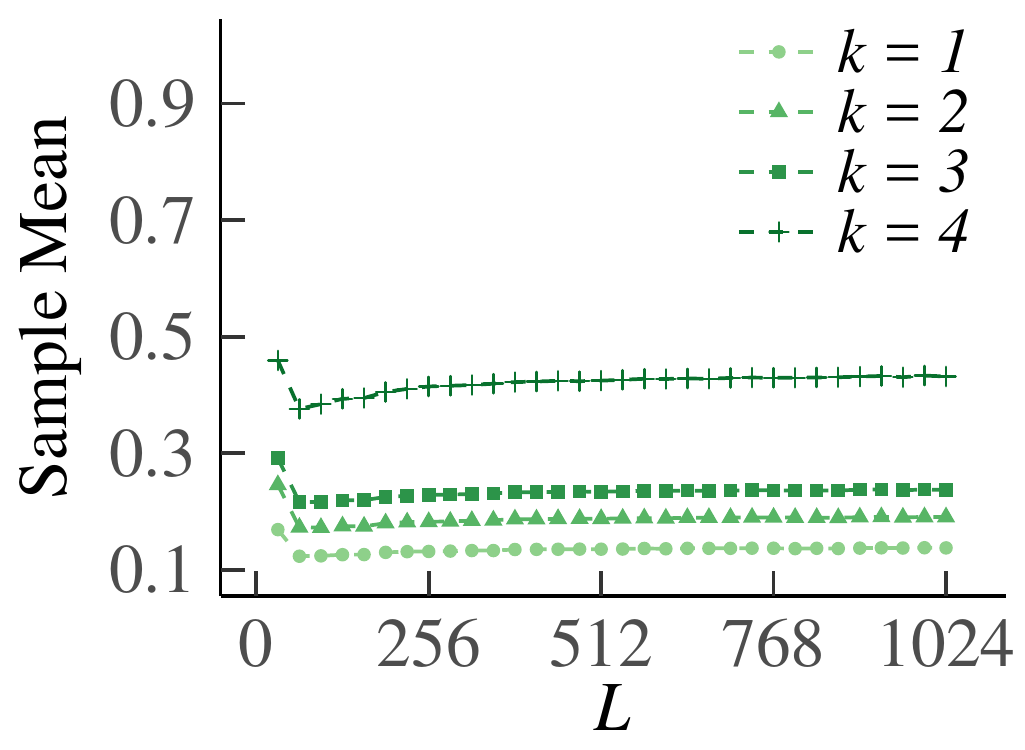} &
		\includegraphics[width = 4.1cm]{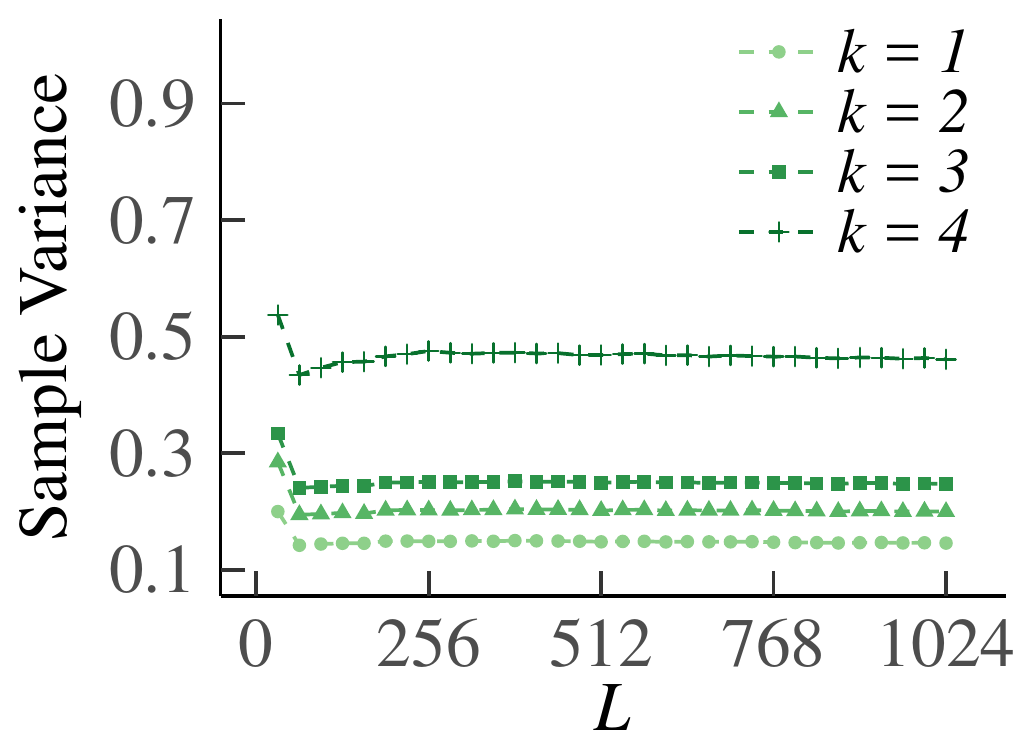} \\
		\includegraphics[width = 4.1cm]{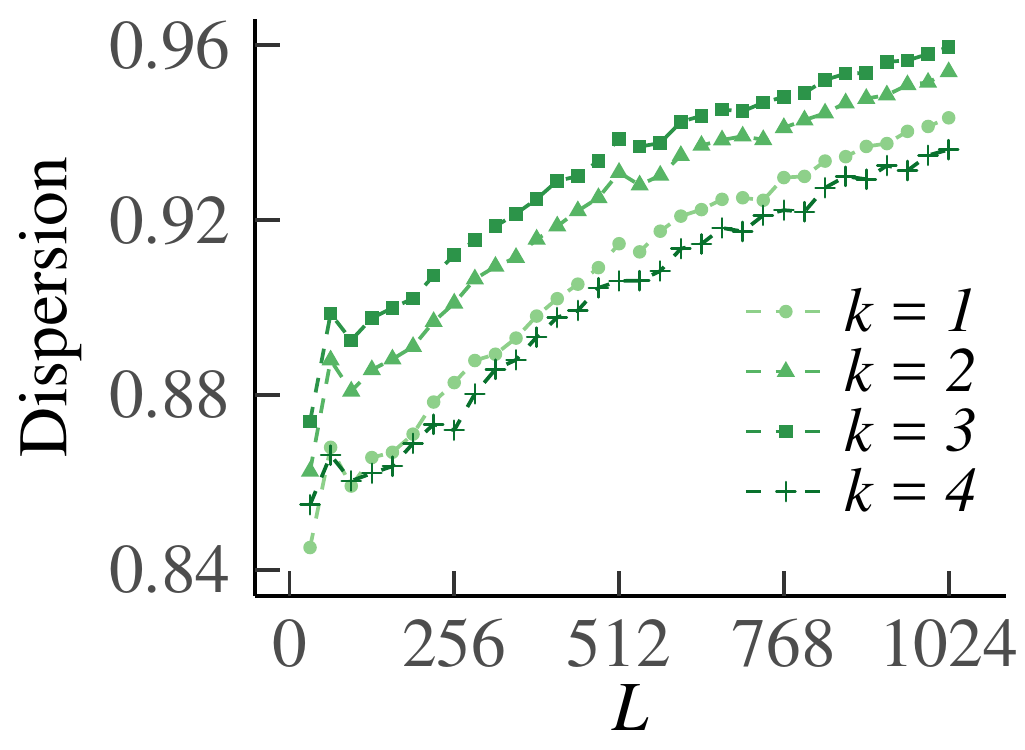} &
		\includegraphics[width = 4.1cm]{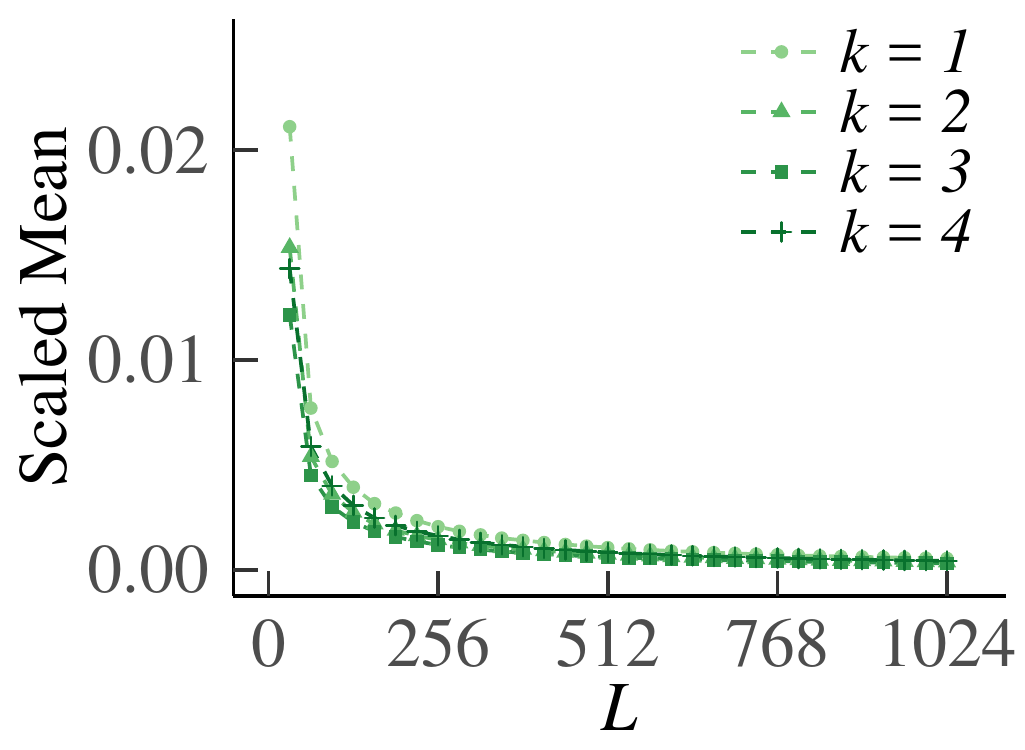} \\ 
	\end{tabular}
	\caption{Poisson characteristics for the increments $\sum_{i=1}^k J^L_i$, for $k \in \{1,2,3,4\}$ and $N = 4$, in the heterogeneous case. With the figures showing the means (upper left), the variances (upper right), the dispersions (lower left), and the scaled means (lower right).}
	\label{fig: set of Poisson characteristics (means, vars, etc) 
	(heterogeneous)}
\end{figure}

\subsection{Independence of Increments}

To verify that the increments of~$P^L$ become independent, we use the same 
experiment as the one used for the Gaussian scaling limit. For completeness, 
we recall its main ingredients, and introduce some notation. We divide the 
roundabout into four segments, and denote the increments of~$P^L$ on these 
segments by $J^L_k$, where $k\in\{1,2,3,4\}$.
We use the metric defined in~\eqref{Eqn: M total variation distance}, where 
$f_{kj}$ is now the joint density of $J^L_k$ and~$J^L_j$, and where $f_k$ 
and~$f_j$ are their respective marginal densities. For $M \in 
\{128,256,512,1024\}$, we aim to show that for $k \neq j$, $d_{kj}^M(L) \to 0$ 
when $L \to \infty$. 

\begin{support*}[of \autoref{claim: Poisson limit}]
In the left plot of \autoref{fig: Poisson TV distance log-scale} we show the 
graph of the estimates of~$d_{12}^M(L)$ as a function of~$L$. By symmetry, and 
because neighboring increments have the strongest dependence, it is enough to 
consider $k=1$ and $j=2$ in the homogeneous case. First, from the figure we 
establish that our estimate is the same for each~$M$, which is due to the 
small support of the empirical distributions. From the 
linearity of the plot that $d_{12}^M(L)$ is decreasing according to a power 
law, which is sufficient for $d_{12}^M(L) \to 0$ as $L \to \infty$. Finally, 
we also see that $d_{12}^M(L)$ is already small for $L = 32$ and quite quickly 
becomes too small to estimate accurately with our sample size, meaning that 
the effect of the variance kicks in quite quickly. Rather than negating our 
findings, this actually makes our conclusion stronger, since the queues are 
already only weakly dependent for small~$L$.

For the heterogeneous case, we  plotted $\max_{k,j} d_{kj}^M(L)$ as a function 
of~$L$ in the right panel of \autoref{fig: Poisson TV distance log-scale}. 
Again, the function does not depend on~$M$. The dependencies are 
systematically small, so that we cannot show that $\max_{k,j} d_{kj}^M(L) \to 
0$ as $L \to \infty$. However, the results still support independence of the 
increments of~$P^L$ in the limit, since the dependence is already negligible 
for $L = 32$.
\end{support*}

\subsection{Distribution of Increments}

To verify that the increments of $P^L$ are Poisson distributed, we use an 
analogous experiment to the one used in supporting \autoref{claim: Gaussian 
limit}. Because we do not have a statistical test with enough power to apply 
the second method from \autoref{sec: supp convergence distr}, we can only use 
the first method here, which looks at the distance between the empirical 
distribution function and a Poisson distribution. We divide the roundabout 
into $N$~segments of equal length, where $N \in \{1,2,4,8\}$. Each of these 
segments corresponds to an increment of~$P^L$ which, for fixed $N$, we denote 
by~$J^L_k$ with $k \in \{1,\dots,N\}$. Our claim is that in the limit $L \to 
\infty$, $J^L_k$ has a Poisson distribution with some parameter~$\nu$.

For the homogeneous case, we estimate $\nu$ by the maximum likelihood 
estimator $\hat{\nu} = \bar{P}^{1024}(1)$, the bar denoting the sample mean. 
We set $\hat{\nu}_k = \hat{\nu}/N$ for each~$k$. In the heterogeneous case, we 
estimate the parameter separately for each increment, as we do not expect a 
homogeneous Poisson process; so in this case, we have $\hat{\nu}_k =  
\bar{J}^{1024}_k$.

The experiment is designed to support that
\[
	d^{\sup}_k(L)
	:= \norm{\hat{G}^L_k - \mathrm{Ps}(\hat{\nu}_k )}_\infty \to 0,
\]
for $1 \leq k \leq N$, as $L \to \infty$. Here, $\hat{G}^L_k$ denotes the 
empirical distribution function of $J^L_k$, and Ps$(\hat{\nu}_k )$ denotes a 
Poisson distribution with parameter~$\hat{\nu}_k $. To justify that we use 
$\hat{\nu}_k$ as the parameter, we estimate $\E \sum_{i=1}^k J^L_i$, for each $L \in \{32,64,\dots,1024\}$ via the sample mean, and numerically verify that the 
sample mean converges in~$L$.

\begin{support*}[of \autoref{claim: Poisson limit}]
We present the homogeneous case first. In \autoref{fig: Poisson sup distance} 
we show the graph of $d^{\sup}_1(L)$ for $N \in \{1,2,4,8\}$, which supports 
our claim that $d^{\sup}_1(L)$ tends to zero. For $k\neq 1$ the results are 
equivalent due to symmetry. In \autoref{fig: set of Poisson characteristics 
(means, vars, etc)} we show the behavior of the sample means, 
sample variances and sample dispersions of $\sum_{i=1}^k J^L_i$, and the scaled sample means $\sum_{i=1}^k \bar{J}^L_i / (\tfrac{Lk}{N})$, for $k \in \{1,2,3,4\}$ and $N = 4$.
Observe from 
the first set of graphs that the sample means converge, so that we can indeed 
use $\hat{\nu}$ as an estimate of the true Poisson parameter. Furthermore, the 
variances also converge. The corresponding dispersions tend to one, which is 
indicative of the underlying random variable being Poisson, thus providing 
additional support for our claim. Finally, the graph of the scaled means shows 
that the infinitesimal contribution of each queue goes to zero, but is equal 
for every sub-division of $N$ increments. Hence, even for $L$ relatively 
small, $P^L$ behaves like a Poisson process.

For the heterogeneous case, for $N \in \{1,2,4,8\}$, we have plotted $\max_k 
d^{\sup}_k(L)$ as a function of~$L$ in \autoref{fig: Poisson sup distance 
(heterogeneous)}. \autoref{fig: set of Poisson characteristics (means, vars, 
etc) (heterogeneous)} shows the sample means, variances, dispersions and 
scaled means, for $N = 4$. We see that the conclusions from the homogeneous 
case carry over to the heterogeneous counterpart.
\end{support*}

\section{Conclusion}

Existing analytical papers on roundabout modeling tend to leave out relevant 
model features (on-/off-ramps, entry behavior, etc.), to facilitate the 
derivation of closed-form expressions. The obvious alternative is to 
realistically model the underlying dynamics, but to resort to simulation. The 
primary objective of our paper was to develop a roundabout model 
that included relevant (geometric) properties, while still allowing mathematical analysis.

We have proposed a new roundabout model that models the cars' circulating 
behavior and has queueing at the on-ramps. The model is highly flexible; its 
parameters can be directly calibrated to measurements. We find an explicit 
expression for the marginal stationary distribution of the cells that the 
roundabout consists of. As it turns out, the cells and the queues are 
dependent, so that obtaining a joint stationary distribution remains out of 
reach. The experiments, however, show that dependencies are typically small, 
thus leading to various approximations. These approximations are tested in 
depth, and supported by numerical evidence. They can be used when designing 
the roundabout in such a way that delay or occupation measures are kept below 
a maximum allowable level.

Our model includes many features that were not incorporated in previously 
studied models. Nonetheless, various extensions can be thought of. One could, 
for instance, make the entry behavior and congestion on the circulating ring 
more realistic (so as to capture the effect that cars stop moving when cells 
in front of them are occupied). Importantly, we do believe that, while their 
functional forms might change, our findings generalize to more realistic 
models; the underlying arguments and/or techniques are not affected when one 
includes these features. In addition, a challenging research direction could  
relate to modeling roundabouts in networks.

\bibliography{bib_Paper_numeriek}

\begin{thebibliography}{17}%
\makeatletter
\providecommand \@ifxundefined [1]{%
 \@ifx{#1\undefined}
}%
\providecommand \@ifnum [1]{%
 \ifnum #1\expandafter \@firstoftwo
 \else \expandafter \@secondoftwo
 \fi
}%
\providecommand \@ifx [1]{%
 \ifx #1\expandafter \@firstoftwo
 \else \expandafter \@secondoftwo
 \fi
}%
\providecommand \natexlab [1]{#1}%
\providecommand \enquote  [1]{``#1''}%
\providecommand \bibnamefont  [1]{#1}%
\providecommand \bibfnamefont [1]{#1}%
\providecommand \citenamefont [1]{#1}%
\providecommand \href@noop [0]{\@secondoftwo}%
\providecommand \href [0]{\begingroup \@sanitize@url \@href}%
\providecommand \@href[1]{\@@startlink{#1}\@@href}%
\providecommand \@@href[1]{\endgroup#1\@@endlink}%
\providecommand \@sanitize@url [0]{\catcode `\\12\catcode `\$12\catcode
  `\&12\catcode `\#12\catcode `\^12\catcode `\_12\catcode `\%12\relax}%
\providecommand \@@startlink[1]{}%
\providecommand \@@endlink[0]{}%
\providecommand \url  [0]{\begingroup\@sanitize@url \@url }%
\providecommand \@url [1]{\endgroup\@href {#1}{\urlprefix }}%
\providecommand \urlprefix  [0]{URL }%
\providecommand \Eprint [0]{\href }%
\providecommand \doibase [0]{http://dx.doi.org/}%
\providecommand \selectlanguage [0]{\@gobble}%
\providecommand \bibinfo  [0]{\@secondoftwo}%
\providecommand \bibfield  [0]{\@secondoftwo}%
\providecommand \translation [1]{[#1]}%
\providecommand \BibitemOpen [0]{}%
\providecommand \bibitemStop [0]{}%
\providecommand \bibitemNoStop [0]{.\EOS\space}%
\providecommand \EOS [0]{\spacefactor3000\relax}%
\providecommand \BibitemShut  [1]{\csname bibitem#1\endcsname}%
\let\auto@bib@innerbib\@empty
\bibitem [{\citenamefont {Maerivoet}\ and\ \citenamefont
  {De~Moor}(2008)}]{maerivoet2005traffic}%
  \BibitemOpen
  \bibfield  {author} {\bibinfo {author} {\bibfnamefont {S.}~\bibnamefont
  {Maerivoet}}\ and\ \bibinfo {author} {\bibfnamefont {B.}~\bibnamefont
  {De~Moor}},\ }\href@noop {} {\bibfield  {journal} {\bibinfo  {journal}
  {arXiv:physics/0507126}\ } (\bibinfo {year} {2008})}\BibitemShut {NoStop}%
\bibitem [{\citenamefont {van Wageningen-Kessels}\ \emph
  {et~al.}(2015)\citenamefont {van Wageningen-Kessels}, \citenamefont
  {Van~Lint}, \citenamefont {Vuik},\ and\ \citenamefont
  {Hoogendoorn}}]{van2015genealogy}%
  \BibitemOpen
  \bibfield  {author} {\bibinfo {author} {\bibfnamefont {F.}~\bibnamefont {van
  Wageningen-Kessels}}, \bibinfo {author} {\bibfnamefont {H.}~\bibnamefont
  {Van~Lint}}, \bibinfo {author} {\bibfnamefont {K.}~\bibnamefont {Vuik}}, \
  and\ \bibinfo {author} {\bibfnamefont {S.}~\bibnamefont {Hoogendoorn}},\
  }\href@noop {} {\bibfield  {journal} {\bibinfo  {journal} {EURO Journal on
  Transportation and Logistics}\ }\textbf {\bibinfo {volume} {4}},\ \bibinfo
  {pages} {445} (\bibinfo {year} {2015})}\BibitemShut {NoStop}%
\bibitem [{\citenamefont {Maerivoet}\ and\ \citenamefont
  {De~Moor}(2005)}]{maerivoet2005cellular}%
  \BibitemOpen
  \bibfield  {author} {\bibinfo {author} {\bibfnamefont {S.}~\bibnamefont
  {Maerivoet}}\ and\ \bibinfo {author} {\bibfnamefont {B.}~\bibnamefont
  {De~Moor}},\ }\href@noop {} {\bibfield  {journal} {\bibinfo  {journal}
  {Physics Reports}\ }\textbf {\bibinfo {volume} {419}},\ \bibinfo {pages} {1}
  (\bibinfo {year} {2005})}\BibitemShut {NoStop}%
\bibitem [{\citenamefont {Fouladvand}\ \emph {et~al.}(2004)\citenamefont
  {Fouladvand}, \citenamefont {Sadjadi},\ and\ \citenamefont
  {Shaebani}}]{fouladvand2004characteristics}%
  \BibitemOpen
  \bibfield  {author} {\bibinfo {author} {\bibfnamefont {M.~E.}\ \bibnamefont
  {Fouladvand}}, \bibinfo {author} {\bibfnamefont {Z.}~\bibnamefont {Sadjadi}},
  \ and\ \bibinfo {author} {\bibfnamefont {M.~R.}\ \bibnamefont {Shaebani}},\
  }\href@noop {} {\bibfield  {journal} {\bibinfo  {journal} {Physical Review
  E}\ }\textbf {\bibinfo {volume} {70}},\ \bibinfo {pages} {046132} (\bibinfo
  {year} {2004})}\BibitemShut {NoStop}%
\bibitem [{\citenamefont {Wang}\ and\ \citenamefont
  {Ruskin}(2002)}]{wang_ruskin}%
  \BibitemOpen
  \bibfield  {author} {\bibinfo {author} {\bibfnamefont {R.}~\bibnamefont
  {Wang}}\ and\ \bibinfo {author} {\bibfnamefont {H.~J.}\ \bibnamefont
  {Ruskin}},\ }\href@noop {} {\bibfield  {journal} {\bibinfo  {journal}
  {Computer Physics Communications}\ }\textbf {\bibinfo {volume} {147}},\
  \bibinfo {pages} {570} (\bibinfo {year} {2002})}\BibitemShut {NoStop}%
\bibitem [{\citenamefont {Wang}\ and\ \citenamefont {Liu}(2005)}]{wang_liu}%
  \BibitemOpen
  \bibfield  {author} {\bibinfo {author} {\bibfnamefont {R.}~\bibnamefont
  {Wang}}\ and\ \bibinfo {author} {\bibfnamefont {M.}~\bibnamefont {Liu}},\
  }in\ \href@noop {} {\emph {\bibinfo {booktitle} {International Conference on
  Computational Science}}}\ (\bibinfo {organization} {Springer},\ \bibinfo
  {year} {2005})\ pp.\ \bibinfo {pages} {420--427}\BibitemShut {NoStop}%
\bibitem [{\citenamefont {Belz}\ \emph {et~al.}(2016)\citenamefont {Belz},
  \citenamefont {Aultman-Hall},\ and\ \citenamefont
  {Montague}}]{belz2016influence}%
  \BibitemOpen
  \bibfield  {author} {\bibinfo {author} {\bibfnamefont {N.~P.}\ \bibnamefont
  {Belz}}, \bibinfo {author} {\bibfnamefont {L.}~\bibnamefont {Aultman-Hall}},
  \ and\ \bibinfo {author} {\bibfnamefont {J.}~\bibnamefont {Montague}},\
  }\href@noop {} {\bibfield  {journal} {\bibinfo  {journal} {Transportation
  research part C: emerging technologies}\ }\textbf {\bibinfo {volume} {69}},\
  \bibinfo {pages} {134} (\bibinfo {year} {2016})}\BibitemShut {NoStop}%
\bibitem [{\citenamefont {Flannery}\ \emph {et~al.}(2000)\citenamefont
  {Flannery}, \citenamefont {Kharoufeh}, \citenamefont {Gautam},\ and\
  \citenamefont {Elefteriadou}}]{flannery2000estimating}%
  \BibitemOpen
  \bibfield  {author} {\bibinfo {author} {\bibfnamefont {A.}~\bibnamefont
  {Flannery}}, \bibinfo {author} {\bibfnamefont {J.~P.}\ \bibnamefont
  {Kharoufeh}}, \bibinfo {author} {\bibfnamefont {N.}~\bibnamefont {Gautam}}, \
  and\ \bibinfo {author} {\bibfnamefont {L.}~\bibnamefont {Elefteriadou}},\
  }in\ \href@noop {} {\emph {\bibinfo {booktitle} {TRB Annual Conference
  Proceedings}}}\ (\bibinfo {year} {2000})\BibitemShut {NoStop}%
\bibitem [{\citenamefont {Flannery}\ \emph {et~al.}(2005)\citenamefont
  {Flannery}, \citenamefont {Kharoufeh}, \citenamefont {Gautam},\ and\
  \citenamefont {Elefteriadou}}]{flannery2005queuing}%
  \BibitemOpen
  \bibfield  {author} {\bibinfo {author} {\bibfnamefont {A.}~\bibnamefont
  {Flannery}}, \bibinfo {author} {\bibfnamefont {J.~P.}\ \bibnamefont
  {Kharoufeh}}, \bibinfo {author} {\bibfnamefont {N.}~\bibnamefont {Gautam}}, \
  and\ \bibinfo {author} {\bibfnamefont {L.}~\bibnamefont {Elefteriadou}},\
  }\href@noop {} {\bibfield  {journal} {\bibinfo  {journal} {Civil Engineering
  and Environmental Systems}\ }\textbf {\bibinfo {volume} {22}},\ \bibinfo
  {pages} {133} (\bibinfo {year} {2005})}\BibitemShut {NoStop}%
\bibitem [{\citenamefont {Tanner}(1962)}]{tanner1962theoretical}%
  \BibitemOpen
  \bibfield  {author} {\bibinfo {author} {\bibfnamefont {J.}~\bibnamefont
  {Tanner}},\ }\href@noop {} {\bibfield  {journal} {\bibinfo  {journal}
  {Biometrika}\ }\textbf {\bibinfo {volume} {49}},\ \bibinfo {pages} {163}
  (\bibinfo {year} {1962})}\BibitemShut {NoStop}%
\bibitem [{\citenamefont {Heidemann}\ and\ \citenamefont
  {Wegmann}(1997)}]{heidemann1997queueing}%
  \BibitemOpen
  \bibfield  {author} {\bibinfo {author} {\bibfnamefont {D.}~\bibnamefont
  {Heidemann}}\ and\ \bibinfo {author} {\bibfnamefont {H.}~\bibnamefont
  {Wegmann}},\ }\href@noop {} {\bibfield  {journal} {\bibinfo  {journal}
  {Transportation research part B: methodological}\ }\textbf {\bibinfo {volume}
  {31}},\ \bibinfo {pages} {239} (\bibinfo {year} {1997})}\BibitemShut
  {NoStop}%
\bibitem [{\citenamefont {Foulaadvand}\ and\ \citenamefont
  {Maass}(2016)}]{foulaadvand2016phase}%
  \BibitemOpen
  \bibfield  {author} {\bibinfo {author} {\bibfnamefont {M.~E.}\ \bibnamefont
  {Foulaadvand}}\ and\ \bibinfo {author} {\bibfnamefont {P.}~\bibnamefont
  {Maass}},\ }\href@noop {} {\bibfield  {journal} {\bibinfo  {journal}
  {Physical Review E}\ }\textbf {\bibinfo {volume} {94}},\ \bibinfo {pages}
  {012304} (\bibinfo {year} {2016})}\BibitemShut {NoStop}%
\bibitem [{\citenamefont {HCM2010}(2010)}]{manual2010volumes}%
  \BibitemOpen
  \bibfield  {author} {\bibinfo {author} {\bibnamefont {HCM2010}},\ }\href@noop
  {} {\emph {\bibinfo {title} {Highway Capacity Manual Volumes 1-4.}}}\
  (\bibinfo  {publisher} {National Research Council (U. S. ). Transportation
  Research Board.},\ \bibinfo {year} {2010})\BibitemShut {NoStop}%
\bibitem [{\citenamefont {Ak{\c{c}}elik}(2011)}]{akccelik2011assessment}%
  \BibitemOpen
  \bibfield  {author} {\bibinfo {author} {\bibfnamefont {R.}~\bibnamefont
  {Ak{\c{c}}elik}},\ }in\ \href@noop {} {\emph {\bibinfo {booktitle} {TRB
  International Roundabout Conference, Carmel, Indiana, USA}}}\ (\bibinfo
  {year} {2011})\BibitemShut {NoStop}%
\bibitem [{\citenamefont {Kendall}\ and\ \citenamefont
  {Stuart}(1961)}]{kendall1961advanced}%
  \BibitemOpen
  \bibfield  {author} {\bibinfo {author} {\bibfnamefont {M.}~\bibnamefont
  {Kendall}}\ and\ \bibinfo {author} {\bibfnamefont {A.}~\bibnamefont
  {Stuart}},\ }\href@noop {} {\emph {\bibinfo {title} {The Advanced Theory of
  Statistics, Vols. I and II}}}\ (\bibinfo  {publisher} {New York, Hafner},\
  \bibinfo {year} {1961})\BibitemShut {NoStop}%
\bibitem [{\citenamefont {Billingsley}(2013)}]{billingsley2013convergence}%
  \BibitemOpen
  \bibfield  {author} {\bibinfo {author} {\bibfnamefont {P.}~\bibnamefont
  {Billingsley}},\ }\href@noop {} {\emph {\bibinfo {title} {Convergence of
  probability measures}}}\ (\bibinfo  {publisher} {John Wiley \& Sons},\
  \bibinfo {year} {2013})\BibitemShut {NoStop}%
\bibitem [{\citenamefont {Billingsley}(2008)}]{billingsley2008probability}%
  \BibitemOpen
  \bibfield  {author} {\bibinfo {author} {\bibfnamefont {P.}~\bibnamefont
  {Billingsley}},\ }\href@noop {} {\emph {\bibinfo {title} {Probability and
  measure}}}\ (\bibinfo  {publisher} {John Wiley \& Sons},\ \bibinfo {year}
  {2008})\BibitemShut {NoStop}%
\end{thebibliography}%

\end{document}